\def\BibTeX{{\rm B\kern-.05em{\sc i\kern-.025em b}\kern-.08em
    T\kern-.1667em\lower.7ex\hbox{E}\kern-.125emX}}
\title{Kelly Cache Networks}%\\
\newtheorem{theorem}{Theorem}
\newtheorem{lemma}{Lemma}
\newtheorem{cor}{Corollary}
\newtheorem{assumption}{Assumption}
\newcommand{\demand}{\mathcal{R}}
\newcommand{\catalog}{\mathcal{C}}
\newcommand{\servers}{\mathcal{S}}
\newcommand{\domain}{\mathcal{D}}
\newcommand{\conv}{\mathop{\mathtt{conv}}}
\newcommand{\supp}{\mathop{\mathtt{supp}}}
\newcommand{\reals}{\mathbb{R}}
\newcommand{\naturals}{\mathbb{N}}
\newcommand{\prob}{\mathbf{P}}
\newcommand{\expect}{\mathbb{E}}
\newcounter{packednmbr}
\author{
\IEEEauthorblockN{Milad Mahdian, Armin Moharrer, Stratis Ioannidis, and Edmund Yeh}
\IEEEauthorblockA{Electrical and Computer Engineering,
Northeastern University,
Boston, MA, USA  \\
 \{mmahdian,amoharrer,ioannidis,eyeh\}@ece.neu.edu}
%\and
%\IEEEauthorblockN{2\textsuperscript{nd} Given Name Surname}
%\IEEEauthorblockA{\textit{dept. name of organization (of Aff.)} \\
%\textit{name of organization (of Aff.)}\\
%City, Country \\
%email address}
%\and
%\IEEEauthorblockN{3\textsuperscript{rd} Given Name Surname}
%\IEEEauthorblockA{\textit{dept. name of organization (of Aff.)} \\
%\textit{name of organization (of Aff.)}\\
%City, Country \\
%email address}
%\and
%\IEEEauthorblockN{4\textsuperscript{th} Given Name Surname}
%\IEEEauthorblockA{\textit{dept. name of organization (of Aff.)} \\
%\textit{name of organization (of Aff.)}\\
%City, Country \\
%email address}
%\and
%\IEEEauthorblockN{5\textsuperscript{th} Given Name Surname}
%\IEEEauthorblockA{\textit{dept. name of organization (of Aff.)} \\
%\textit{name of organization (of Aff.)}\\
%City, Country \\
%email address}
%\and
%\IEEEauthorblockN{6\textsuperscript{th} Given Name Surname}
%\IEEEauthorblockA{\textit{dept. name of organization (of Aff.)} \\
%\textit{name of organization (of Aff.)}\\
%City, Country \\
%email address}
}
\newcommand{\techrep}[2]{#1}%Uncomment for techreport
\begin{document}
\maketitle

\begin{abstract}
	We study networks of M/M/1 queues in which nodes act as caches that store objects. Exogenous requests for objects  are routed towards nodes that store them; as a result, object traffic in the network is determined not only by demand but, crucially, by where objects are cached. We determine how to place objects in caches to attain a certain design objective, such as, e.g., minimizing network congestion  or retrieval delays.  We show that for a broad class of objectives, including minimizing both the expected network delay and the sum of network queue lengths, this optimization problem can be cast as an NP-hard submodular maximization problem. We show that so-called \emph{continuous greedy} algorithm~\cite{calinescu2011} attains a ratio arbitrarily close to $1-1/e\approx 0.63$ using a deterministic estimation via a power series; this drastically reduces execution time over prior art, which resorts to sampling. Finally, we show that our results generalize, beyond M/M/1 queues, to networks of M/M/$k$ and symmetric M/D/1 queues.
\end{abstract}

\techrep{}{
\begin{IEEEkeywords}
Kelly networks, cache networks, ICN
\end{IEEEkeywords}}

\captionsetup[algorithm]{font=scriptsize}
\captionsetup[table]{font=scriptsize}

	\section{Introduction}\label{sec:introduction}
	Kelly networks \cite{kelly} are multi-class networks of queues capturing a broad array of queue service disciplines, including FIFO, LIFO, and processor sharing. Both Kelly networks and their  generalizations (including networks of quasi-reversible and symmetric queues) are well studied and classic topics \cite{kelly,gallager-stochastic,nelson,chen2013fundamentals}. One of their most appealing properties is that their steady-state distributions have a product-form: as a result, steady state properties such as expected queue sizes, packet delays,  and server occupancy rates have closed-form formulas 
  as functions of, e.g., routing and scheduling policies.

In this paper, we  consider Kelly networks in which nodes are equipped with caches, i.e., storage devices of finite capacity, which can be used to store objects. Exogenous  requests for objects  are routed towards nodes that store them; upon reaching a node that stores the requested object, a response packet containing the object is routed towards the request source. As a result, object traffic in the network is determined not only by the demand but, crucially, by where objects are cached.
This abstract setting is motivated by--and can be used to model--various networking applications involving the placement and transmission of content. This includes information centric networks ~\cite{jacobson2009networking,yeh2014vip,ioannidis2016adaptive}, content delivery networks~\cite{borst2010distributed,dehghan2014complexity}, web-caches~\cite{laoutaris2004meta,che2002hierarchical,zhou2004second},  wireless/femtocell networks \cite{shanmugam2013femtocaching,naveen2015interaction,poularakis2013approximation}, and 
 peer-to-peer networks~\cite{lv2002search,cohen2002replication}, to name a few.

In many of these applications, determining the \emph{object placement}, i.e., how to place objects in network caches, is a decision that can be made by the network designer in response to object popularity and demand. To that end, we are interested in determining how to place objects in caches so that  traffic attains a design objective such as, e.g., minimizing delay.

We make the following contributions.
%\begin{itemize}
%\item We \emph{introduce Kelly cache networks}, namely, multi-class networks of queues in which nodes are equipped with caches capable of storing objects.
%\item
First, we study the problem of optimizing the placement of objects in caches in  Kelly cache networks of M/M/1 queues, with the objective of minimizing a cost function of the system state. We show that, for a broad class of cost functions, including packet delay, system size, and server occupancy rate, \emph{this optimization amounts to a submodular maximization problem with matroid constraints}. This result applies to general Kelly networks with fixed service rates; in particular, it holds for FIFO, LIFO, and processor sharing  disciplines at each queue.
%\item We leverage this connection to submodular optimization to study approximation algorithms for the resulting (NP-hard) problems. It is known that the classic greedy algorithm \cite{greedy} produces a solution with a 0.5 approximation ratio, while the so-called continuous-greedy algorithm~\cite{calinescu2011} yields a ratio of $1-1/e\approx 0.63$. We prove that  \emph{the  0.5-approximation ratio of the greedy algorithm is tight}, %This 
%motivating  our  study of the continuous-greedy algorithm.
%\item 

The so-called continuous greedy algorithm \cite{calinescu2011} attains a $1-1/e$ approximation for this NP-hard problem. However, it does so by computing an expectation over a random variable with exponential support  via randomized sampling. The number of samples required to attain the $1-1/e$ approximation guarantee can be prohibitively large in realistic settings. Our second contribution is to show that, for Kelly networks of M/M/1 queues, \emph{this randomization can be entirely avoided}: a closed-form solution can be computed using the Taylor expansion of our problem's objective. To the best of our knowledge, we are the first to identify a submodular maximization problem that exhibits this structure, and to exploit it to eschew sampling.
%
%item
Finally, we extend our results to \emph{networks of M/M/$k$  and symmetric M/D/1  queues}, and %, showing that optimizing quantities such as expected queue lengths and the probability of encountering a busy server can again be cast as submodular maximization problems. 
prove  a negative result: submodularity does \emph{not} arise in networks of M/M/1/$k$ queues.
 We extensively evaluate our proposed algorithms over several synthetic and real-life topologies. % of Kelly cache networks.  
%\end{itemize}

The remainder of our paper is organized as follows. We review related work in Sec.~\ref{sec:related}. We present our mathematical model of a Kelly cache network in Sec.~\ref{sec:model}, and our results on submodularity and the continuous-greedy algorithm in networks of M/M/1 queues in Sections~\ref{sec:subm} and~\ref{sec:contgreed}, respectively. Our extensions are described in Sec.~\ref{sec:beyond}; our numerical evaluation is in Sec.~\ref{sec:experiments}. Finally, we conclude in Sec.~\ref{sec:conclusions}. \techrep{}{For brevity, we omit detailed proofs, which can be found in \cite{techrep}.}

	\section{Related Work}\label{sec:related}

	Our approach is closest to, and inspired by, recent work by Shanmugam et al. \cite{femtocaching} and Ioannidis and Yeh  ~\cite{ioannidis2016adaptive}. 
 Ioannidis and Yeh consider a setting very similar to ours but without  queuing: edges are assigned a fixed weight, and the objective is a linear function of incoming traffic scaled by these weights. This can be seen as a special case of our model, namely, one where edge costs are linear (see also Sec.~\ref{sec:mincost}). Shanmugam et al.~\cite{femtocaching} study a similar optimization problem, restricted to the context of femtocaching. The authors show that this is an NP-hard, submodular maximization problem with matroid constraints. They provide a $1-1/e$ approximation algorithm based on a technique by Ageev and Sviridenko \cite{ageev2004pipage}: this involves maximizing a concave relaxation of the original objective, and rounding via pipage-rounding\cite{ageev2004pipage}. Ioannidis and Yeh show that the same approximation technique applies to more general cache networks with linear edge costs. They also provide a distributed, adaptive algorithm that attains an   $1-1/e$ approximation. The same authors  extend this framework to jointly optimize both caching and routing decisions \cite{ioannidis2017jointly}.

Our work can be seen as an extension of~\cite{ioannidis2016adaptive,femtocaching}, in that it incorporates queuing in the cache network. In contrast to both \cite{ioannidis2016adaptive} and \cite{femtocaching} however, costs like delay or queue sizes are highly non-linear in the presence of queuing. From a technical standpoint, this departure from linearity requires us to employ significantly different  optimization methods than the ones in \cite{ioannidis2016adaptive,femtocaching}. In particular, our objective \emph{does not admit a concave relaxation} and, consequently, the technique by Ageev and Sviridenko \cite{ageev2004pipage} used in \cite{ioannidis2016adaptive,femtocaching} \emph{does not apply}.  Instead, we must solve a non-convex optimization problem   directly  (c.f.~Eq.~\eqref{eq:nonconv}) using the so-called continuous-greedy algorithm.

 Several papers have studied the  cache optimization problems under restricted topologies \cite{baev2008approximation,bartal1995competitive,fleischer2006tight,applegate2010optimal,borst2010distributed}. These works model the network as a bipartite graph:  nodes generating requests connect directly to caches in a single hop.  The resulting algorithms do not readily generalize to arbitrary topologies. In general, the approximation technique of Ageev and Sviridenko~\cite{ageev2004pipage} applies to this bipartite setting, and additional approximation algorithms have been devised for several variants~\cite{baev2008approximation,bartal1995competitive,fleischer2006tight, borst2010distributed}. We differ   by (a)  considering a multi-hop setting, and (b) introducing queuing, which none of the above works considers.

Submodular function maximization subject to matroid constraints appears in many important problems in combinatorial optimization; %~\cite{edmonds1970submodular,greedy2,vondrak2008optimal,greedy,calinescu2011};
for a brief review of the topic and applications, see \cite{krause2012} and \cite{goundan2007revisiting}, respectively. Nemhauser et al. \cite{greedy2} show that the greedy algorithm produces a solution  within 1/2 of the optimal.   Vondr\'ak \cite{vondrak2008optimal} and Calinescu et al. \cite{calinescu2011} show that the continuous-greedy algorithm produces a solution within $(1-1/e)$ of the optimal in polynomial time, which cannot be further improved \cite{nemhauser1978best}. 
%say something about curvature
In the general case, the continuous-greedy algorithm requires sampling to estimate the gradient of the so-called multilinear relaxation of the objective (see Sec.~\ref{sec:contgreed}). One of our main contributions is to show that \textsc{MaxCG}, the optimization problem we study here, exhibits additional structure: we use this  to construct a sampling-free estimator of the gradient via a power-series or Taylor expansion. To the best of our knowledge, we are the first to use such an expansion to eschew sampling; this technique may apply to submodular maximization problems beyond \textsc{MaxCG}.
 %Bian et al., \cite{Bian-CSM} applies the continuous greedy algorithm to optimize \emph{component-wise} convex functions with some optimality guarantees.

	\section{Model}\label{sec:model}
	Motivated by applications such as ICNs \cite{jacobson2009networking}, CDNs \cite{borst2010distributed,dehghan2014complexity}, and peer-to-peer networks \cite{lv2002search}, we introduce Kelly cache networks. In contrast to classic Kelly networks, each node is associated with a cache of finite storage capacity. Exogenous traffic consisting of \emph{requests} is routed towards nodes that store objects; upon reaching a node that stores the requested object, a \emph{response} packet containing the object is routed towards the node that generated the request. As a result, content traffic in the network is determined not only by demand but, crucially, by how contents are cached. %Our model can also be seen as an extension of the cache networks presented in \cite{ioannidis2016adaptive}, incorporating queuing.
 \techrep{For completeness, we  review classic Kelly networks in Appendix~\ref{app:classickelly}.}{} An illustration highlighting the differences between Kelly cache networks, introduced below, and classic Kelly networks, can be found in Fig.~\ref{fig:kelly-classic}.
\begin{figure}
\includegraphics[width=0.49\columnwidth]{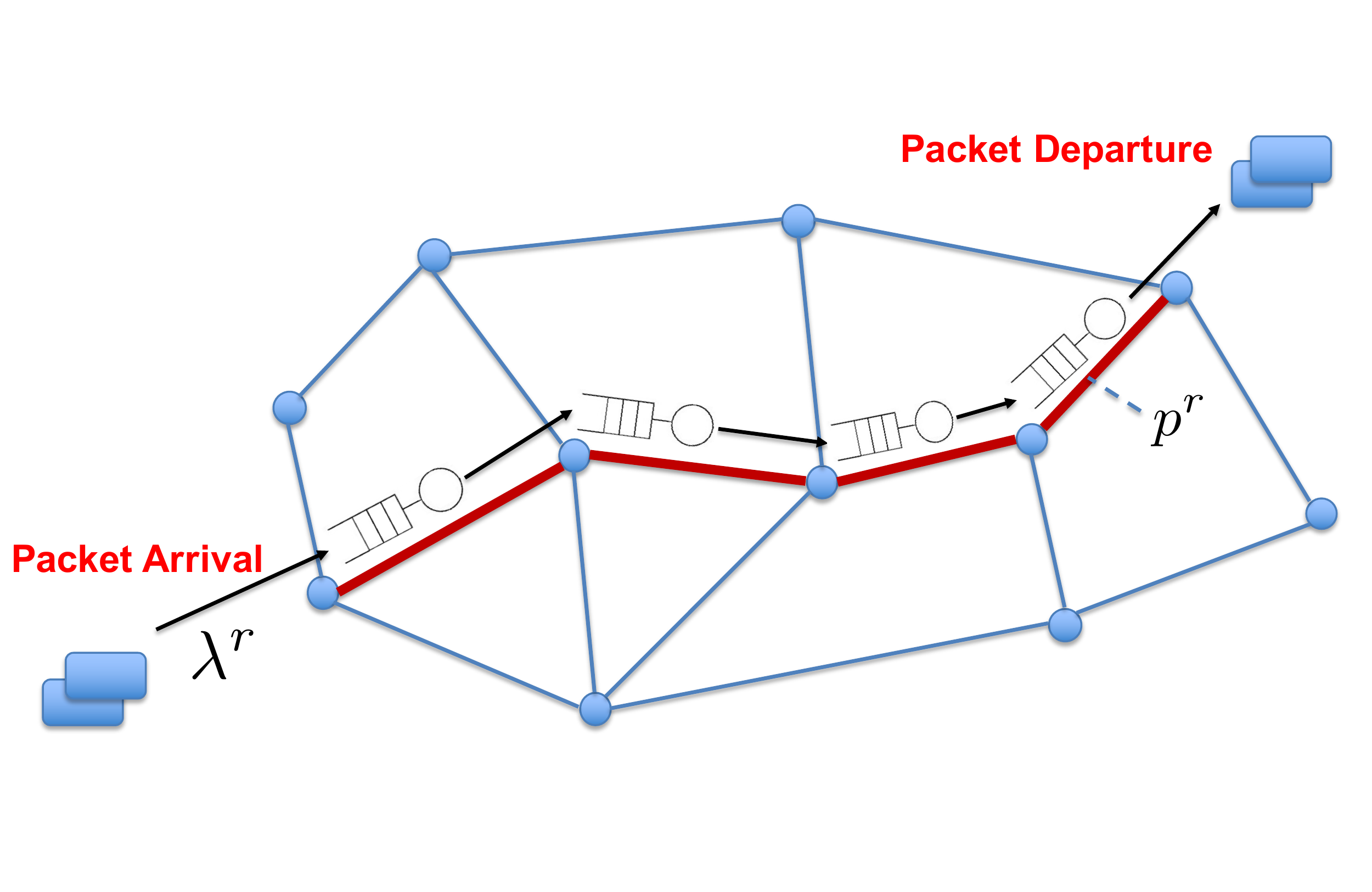}\includegraphics[width=0.49\columnwidth]{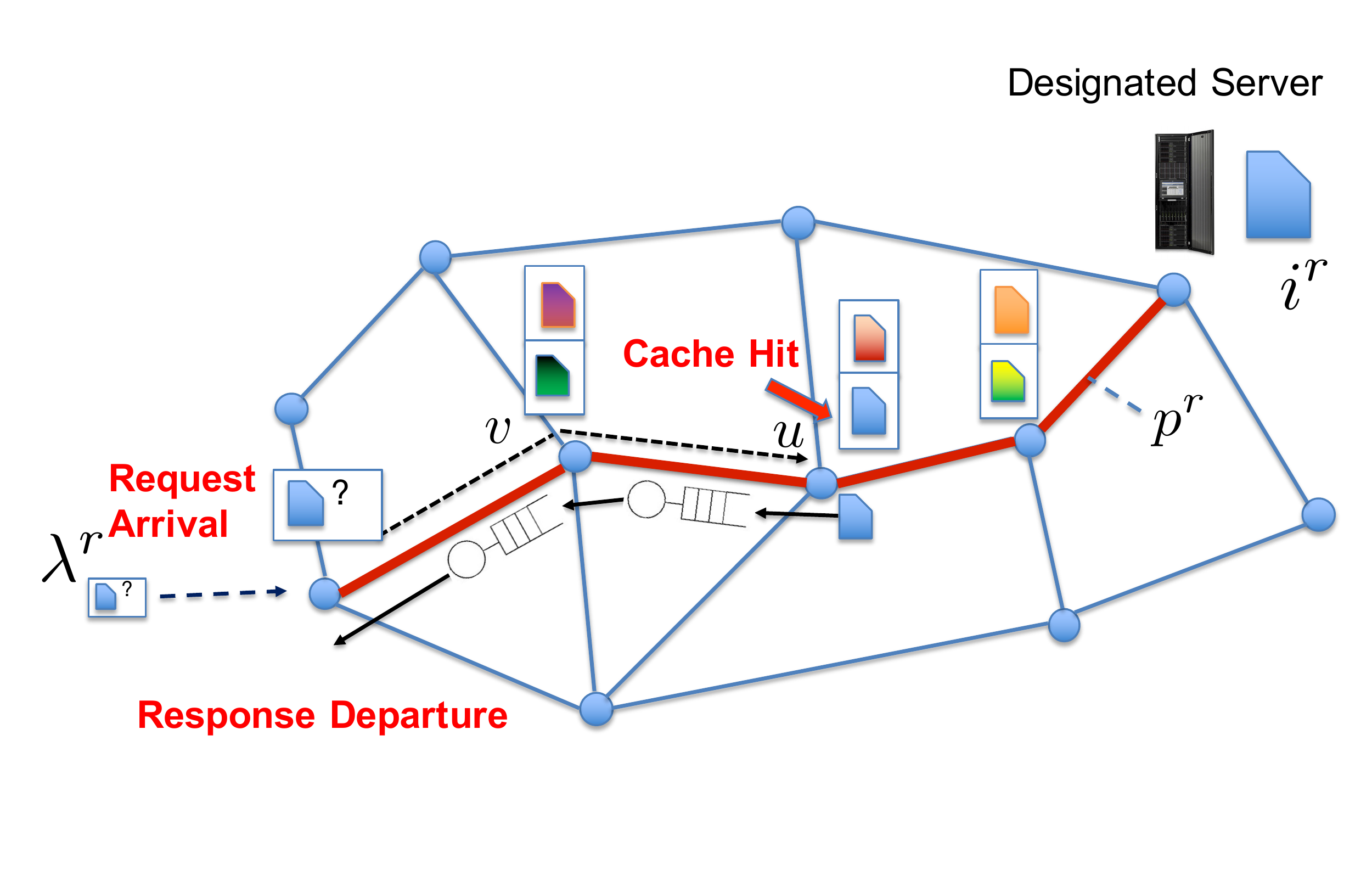}\\%\vspace*{-3em}
\begin{scriptsize} \hspace*{\stretch{1.3}}{(a) Kelly Network}\hspace*{\stretch{2}}
{(b) Kelly Cache Network}\hspace*{\stretch{1}}\end{scriptsize}\\
\caption{(a) Example of a Kelly network. Packets of class $r$ enter the network with rate $\lambda^r$,  are routed through consecutive queues over path $p^r$, and subsequently exit the network. (b) Example of a Kelly cache network. Each node $v\in V$ is equipped with a cache of capacity $c_v$. Exogenous requests of type $r$ for object $i^r$  enter the network and are routed  over a predetermined path $p^r$ towards the designated server storing $i^r$. Upon reaching an intermediate node $u$ storing the requested object $i^r$, a response packet containing the object is generated. The response is then forwarded towards the request's source in the reverse direction on path $p^r$. Request packets are of negligible size compared to response messages; as a result, we ignore request traffic and focus on queuing due to response traffic alone.} %: for example, for edge $(v,u)\in p^r$, response traffic contributes only to the load of reverse edge $(u,v)\in E$. }
\label{fig:kelly-classic}
\end{figure}

Although we describe Kelly cache networks in terms of FIFO M/M/1 queues, the product form distribution  (c.f.~\eqref{steadystate}) arises for many different service principles beyond FIFO (c.f.~Section 3.1 of \cite{kelly}) %. In short, incoming packets can be placed in random position within the queue according to a given distribution, and the (exponentially distributed) service effort can be split across different positions, possibly unequally;  both placement and service effort distributions are class-independent. This captures  a broad array of policies including FIFO,
including Last-In First-Out (LIFO) and processor sharing. All results we present extend to these service disciplines; we discuss %these and more
more extensions in Sec.~\ref{sec:beyond}.

\subsection{Kelly Cache Networks} 

\noindent\textbf{Graphs and Paths.} We use the notation $G(V,E)$ for a directed graph $G$ with nodes $V$ and edges $E\subseteq V\times V$. A directed graph is called \emph{symmetric} or \emph{bidirectional} if $(u,v)\in E$ if and only if $(v,u)\in E$. A \emph{path} $p$ is a sequence of adjacent nodes, i.e.,  $p=p_1,p_2,\ldots, p_K$ where $(p_{k},p_{k+1})\in E$, for all $1\leq i<K\equiv|p|$. A path is \emph{simple} if it contains no loops (i.e., each node appears once). We use the notation $v\in p$, where $v\in V$, to indicate that node $v$ appears in the path, and $e\in p$, where $e=(u,v)\in E$, to indicate that nodes $u$,$v$ are two consecutive (and, therefore, adjacent) nodes in $p$.  For $v\in p$, where $p$ is simple, we denote by $k_p(v)\in \{1,\ldots,|p|\}$ the position of node $v\in V$ in $p$, i.e., $k_p(v)=k$ if $p_k=v$.

%\begin{comment}
%\end{comment}
	%Jackson networks are a class of queuing networks where the equilibrium distribution has a simple product-form solution. In particular, an open Jackson network is a Jackson network, in which packets arrive from outside following some independent Poisson processes. Each arrived packet follows a route (a sequence of queues) independent of other packets, and leaves the network after it is served by the last queue along its path.

\noindent\textbf{Network Definition.} Formally,  we consider a Kelly network of M/M/1  FIFO queues, represented by a symmetric directed graph $G(V,E)$. As in classic Kelly networks, each edge $e\in E$ is associated with an M/M/1 queue with service rate $\mu_e$\footnote{We associate queues with edges for concreteness. Alternatively, queues can be associated with nodes, or both nodes and edges; all such representations lead to product form distributions \eqref{steadystate}, and all our results extend to these cases.}. In addition, each node has a cache that stores  objects of equal size from a set $\catalog$,  the \emph{object catalog}.  Each node $v\in V$ may store at most $c_v\in \naturals$  objects from $\catalog$ in its cache.  Hence, if $x_{vi} \in \{0,1\}$ is a binary variable indicating whether node $v\in V$ is storing object $i\in \catalog$, then 	
	%\begin{equation}\label{eq:cache_cap}
$		\sum_{i\in\catalog} x_{vi} \leq c_v,$ for all $v\in V. $
%	\end{equation}
We refer to $\mathbf{x}=[x_{vi}]_{v\in V, i\in \catalog} \in \{0,1\}^{|V||\catalog|}$ as the \emph{global placement} or, simply, \emph{placement} vector. We denote by
\begin{align}
\domain = \left\{\mathbf{x}\in \{0,1\}^{|V| |\catalog|}: \textstyle\sum_{i\in\catalog} x_{vi} \leq c_v, \forall v\in V \right \}, \label{domain}
\end{align}
the set of \emph{feasible} placements that satisfy the storage capacity constraints. %\eqref{eq:cache_cap}.
We assume that for every object $i\in\catalog$, there exists a set of nodes $\servers_i\subseteq V$ that \emph{permanently store} $i$. We refer to nodes in $\servers_i$ as \emph{designated servers} for $i\in \catalog$. We assume that designated servers store $i$ in permanent storage \emph{outside} their cache. Put differently, the aggregate storage capacity of a node is $c_v'=c_v+|\{i:v\in \servers_i\}|$, but only the non-designated slots $c_v$ are part of the system's design.

%Furthermore, the space restricted by the storage capacity constraints given in \eqref{eq:cache_cap}, is denoted by  $\mathcal{D}$. We note that the choice of $X$ can be made statically or dynamically in a randomized manner. For each object $i$, a fixed set of nodes, denoted by $\servers_i$, are designated to always store object $i$. We assume each source node has an external storage for storing its designated objects.	
%	\subsection{Request and Path Structure}
	
\noindent\textbf{Object Requests and Responses.}
Traffic in the cache network consists of two types of packets: \emph{requests} and \emph{responses}, as shown in  Fig.~\ref{fig:kelly-classic}(b).
 Requests for an object are always routed towards one of its designated servers, ensuring  that every request is satisfied. However, requests may terminate early: upon reaching any node that caches the requested object,  the latter generates a response carrying the object. This is forwarded  towards the request's source, following the same path as the request, in reverse. Consistent with prior literature \cite{ioannidis2016adaptive,ioannidis2017jointly}, %,mahdian2017mindelay},
 we treat request  traffic as negligible when compared to response traffic, which carries objects,  and henceforth focus only on queues bearing response  traffic.

Formally, a request and its corresponding response are fully characterized by (a)  the object  being requested, and (b) the path  that the request  follows. That is, for the set of requests $\demand$, a request $r\in \demand$ is determined by a pair $(i^r,p^r)$, where $i^r\in \catalog$ is the object being requested and $p^r$ is the path the request follows. 
Each request $r$ is associated with a corresponding Poisson arrival process with rate $\lambda^r\geq 0$, independent of other arrivals and service times.  We denote the vector of arrival rates by
%\begin{align}
$ \bm{\lambda} = [\lambda^r]_{r\in \demand} \in\reals_+^{|\demand|}.$
%\end{align}
%
For all $r\in \demand$, we assume that the path $p^r$ is well-routed \cite{ioannidis2016adaptive}, that is: (a) path $p^r$ is simple, (b) the terminal node of the path is a designated server, i.e., a node in $\servers_i$, and (c) no other intermediate node in $p^r$ is a designated server. As a result, requests are always served, and response packets (carrying objects) always follow a sub-path of $p^r$ in reverse towards the request source (namely, $p^r_1$). %As we assume that response packets are much larger than  request packets, we consider only the response traffic imposed on network queues. 

%We specify a request for a content object by the object and the path the request takes. That is, a request $r$ is a pair $(i,p)$ where $i\in \catalog$ and path $p$ is a sequence of nodes $\{p_1,p_2,\cdots,p_K\}$. In this example, $K:=|p|$ is the length of the path in nodes, denoted by $|p|$, and $p_k\in V$, for $k\in \{1,2,\cdots|p|\}$.
	
% When a request $r=(i,p)$ arrives at the requesting node $p_1$, it will be routed following path $p$, until it reaches a node which stores $i$. If no such node is found, $r$ reaches the designated source node $p_K$. In either way, a response is generated containing the content object, and is routed back over $p$ (or a part of $p$ in case of reaching a cache) in the reverse direction. 
	
%	We denote by $k_p(v)$ the index of node $v$ along path $p$. Moreover, we denote the upstream edges of node $v$ along all the paths of content $i$ which passes $v$ by $suc_{(i,P(i))}(v):= \{(u,w)\in E|\quad\exists p\in P(i), (w,u)\subseteq p, k_p(v) \leq k_p(w)\}$, where $P(i) = \{p:(i,p)\in \demand\}$. We assume each request is well-routed. That is, $p$ does not contain any loop, and there is exactly one designated source node for $i$ in $p$, which is always the ending node, $p_K$. 
	
%	Let $\demand$ be the set of all requests, in each of which its arrival follows an independent Poisson process with rate $\lambda_{(i,p)}>0$, $\forall (i,p)\in \demand$.

\noindent\textbf{Steady State Distribution.}	
Given an object placement $\mathbf{x}\in \domain$, the resulting system is  a multi-class Kelly network, with packet classes determined by the request set $\demand$. This is a Markov process over the state space determined by queue contents. In particular, let $n_e^r$ be the number of packets of class $r\in \demand$ in queue $e\in E$, and $n_e = \sum_{r\in \demand} n_e^r$ be the total queue size. The state of a queue $\mathbf{n}_{e}\in \demand^{n_{e}}$, $e\in E$, is the vector of length $n_{e}$ representing the class of each packet in each position of the queue. The \emph{system state} is then given by $\mathbf{n}=[\mathbf{n}_{e}]_{e\in E}$;  we denote by $\Omega$ the state space of this Markov process.

In contrast to classic Kelly networks, network traffic and, in particular, the load on each queue, 
 depend on placement $\mathbf{x}$. Indeed,  if $(v,u)\in p^r$ for $r\in \demand$, the arrival rate of responses of class $r\in \demand$ in queue $(u,v)\in E$ is:
\begin{align}\textstyle\lambda_{(u,v)}^r(\mathbf{x},\bm{\lambda}) =  \lambda^r \prod\limits_{k'=1}^{k_{p^r}(v)}(1-x_{p_{k'}^ri^r}),\label{lamr}\quad \text{for}~(v,u)\in p^r, \end{align}
i.e., responses to requests of class $r$ pass through edge $(u,v)\in E$ if and only if no node preceding $u$ in the path $p^r$ stores object $i^r$--see also Fig.~\ref{fig:kelly-classic}(b). As $\mu_{(u,v)}$ is the service rate of the queue in $(u,v)\in E$, % We assume the size of response packets is much bigger than the size of request packets, hence we only take into account the load (delay or number of packets) associated with the response packets in the queues. Let 
the  load on edge $(u,v)\in E$ is:
	 		\begin{align}
	 		%\rho_{(u,v)}(X) = \frac{1}{\mu_{(u,v)}} \sum\limits_{r\in\demand: (v,u)\in p^r}\lambda^r \prod\limits_{k'=1}^{k_{p^r}(v)}(1-x_{p_{k'}^ri^r}), \label{eq:rho_X}
	 		\textstyle\rho_{(u,v)}(\mathbf{x},\bm{\lambda}) = \frac{1}{\mu_{(u,v)}} \textstyle\sum_{r\in\demand: (v,u)\in p^r}\lambda^r_{(u,v)}(\mathbf{x},\bm{\lambda}). %\prod\limits_{k'=1}^{k_{p^r}(v)}(1-x_{p_{k'}^ri^r}), 
\label{eq:rho_X}
	 		\end{align} 
The Markov process $\{\mathbf{n}(t);t\geq 0\}_{t\geq 0}$ is positive recurrent when $\rho_{(u,v)}(\mathbf{x},\bm{\lambda})< 1$, for all $(u,v)\in E$ \cite{kelly,datanetworks}. Then, the steady-state distribution has a \emph{product form}, i.e.:
\begin{align}
\textstyle\pi(\mathbf{n}) =  \prod_{e\in E} \pi_e(\mathbf{n}_e), \quad \mathbf{n}\in \Omega,\label{steadystate}
\end{align}
 %yielding:
%\begin{align}
%\pi(\mathbf{n}) =  \prod_{e\in E} \pi_e(\mathbf{n}_e), \quad \mathbf{n}\in \Omega\label{steadystate}
%\end{align}  
where
%\begin{align}
$\textstyle\pi_e(\mathbf{n}_e) = (1-\rho_e(\mathbf{x},\bm{\lambda})) \prod_{r\in \demand:e\in p^r} \left(\frac{\lambda^r_e(\mathbf{x},\bm{\lambda}) }{\mu_e}\right)^{n_e^r},$% \label{eq:marg}
%\end{align}
and $\lambda_e^r(\mathbf{x},\bm{\lambda})$, $\rho_e(\mathbf{x},\bm{\lambda})$ are given by \eqref{lamr}, \eqref{eq:rho_X}, respectively.

%\si{add figure of requests and responses}
%	For a fixed content placement state $X$, the network model is reduced to a classic open Jackson network which is well-studied in \cite{kelly}, and \cite{datanetworks}. 
%	 		Let $N:=(N_{u,v})_{(u,v)\in E}$ be the vector state of number packets in the queues of the network, where $N_{u,v}$ is the number of packets in queue $(u,v)$, the Jackson's Theorem \cite{datanetworks} states that 
%	 			for a given request arrival vector $\demand$ and content placement vector $X$, for which $\rho_{u,v}(X)<1$, $\forall  (u,v)\in E$, he steady-state distribution follows
%	 				 			\begin{equation}
%	 				 			\pi_X(N) = \prod\limits_{(u,v)\in E} \pi_{X,u,v}(N_{u,v}),
%	 				 			\end{equation}
%	 				 			where 
%	 				 			$\pi_{X,u,v} (N_{u,v}) = (1- \rho_{u,v}(X))\rho_{u,v}(X)^{N_{u,v}}.$ 
%si{This is wrong. Not being careful with statements like that can get the paper rejected in no time.}
%	 		the following result for the steady-state distribution $\pi_X(N)$ for a given content placement vector $X$:
%	 		%	
%	 		\begin{theorem}(Jackson's Theorem \cite{datanetworks})
%	 			For a given request arrival vector $\demand$ and content placement vector $X$, for which $\rho_{u,v}(X)<1$, $\forall  (u,v)\in E$, we have
%	 			\begin{equation}
%	 			\pi_X(N) = \prod\limits_{(u,v)\in E} \pi_{X,u,v}(N_{u,v}),
%	 			\end{equation}
%	 			where 
%	 			$\pi_{X,u,v} (N_{u,v}) = (1- \rho_{u,v}(X))\rho_{u,v}(X)^{N_{u,v}}.$
%	 			\label{theorem:jackson}
%	 		\end{theorem}
	
	\noindent\textbf{Stability Region.} Given a placement $\mathbf{x}\in \mathcal{D}$,
	%Before solving for the problem in the next section, we first describe the stability region of the problem. We define the stability region of the problem as follows: for a given content placement vector $X\in\mathcal{D}$,
 a vector of arrival rates $\bm{\lambda} = [\lambda^r]_{r\in \demand}$ yields a stable (i.e., positive recurrent) system  
	if and only if $\bm{\lambda} \in \Lambda_{\mathbf{x}}$, where
	\begin{equation}
		\Lambda_{\mathbf{x}} := \{\bm{\lambda}: \bm{\lambda} \geq 0: \rho_e(\mathbf{x},\bm{\lambda})< 1, \forall e\in E \} \subset \reals_+^{|\demand|} ,
	\end{equation}
	where loads $\rho_e$, $e\in E$, are given by 
\eqref{eq:rho_X}. %Note that given $\mathbf{x}\in \mathcal{D}$, $\rho_{u,v}$ is linear in $\bm{\lambda}$;  hence, $\Lambda_{\mathbf{x}}$ is the interior of a convex polytope defined by linear inequality constraints. 
Conversely, given a vector $\bm{\lambda} \in \reals_+^{|\demand|}$,
\begin{align}
\domain_{\bm{\lambda}} = \{ \mathbf{x}\in \domain:  \rho_e(\mathbf{x},\bm{\lambda})< 1, \forall e\in E  \} \subseteq \domain \label{eq:stabledom}
\end{align}
is the set of feasible placements under which the system is stable. It is easy to confirm that, by the monotonicity of $\rho_e$ w.r.t.~$\mathbf{x}$, 
%\begin{align}
if $\mathbf{x}\in \domain_{\bm{\lambda}}$ and $\mathbf{x}'\geq \mathbf{x},$  then $\mathbf{x}'\in \domain_{\bm{\lambda}}$, %\label{eq:ldom}
%\end{align}
where the vector inequality $\mathbf{x}'\geq \mathbf{x}$ is component-wise.
In particular, if $\mathbf{0}\in D_{\bm{\lambda}}$ (i.e., the system is stable without caching), then $\domain_{\bm{\lambda}} = \domain$.
\begin{comment}
 the following properties hold:
\begin{align}
\text{if }X\geq X'\text{ then  }\Lambda_X \supseteq \Lambda_{X'},
\end{align} 
and
\begin{align}
\text{if }\lambda\geq \lambda'\text{ then  }\domain_\lambda \subseteq \domain_{\lambda'},
\end{align} 
where the above vector and matrix inequalities are component-wise.
\end{comment}
%it is equivalent to linear system $A(X)\lambda \prec 1$, where $A(X)=[A_{(i,p)\times (u,v)}(X)]_{(i,p)\in \demand, (u,v)\in E}$ is the matrix which its $(u,v)$-th row constitutes the coefficients of $\lambda_{(i,p)}$s' in $\rho_{u,v}(X)$, respectively. Therefore, the stability region under $X$ can be rewritten as
%	\begin{equation}
%		\Lambda_X = \{\lambda: \lambda \geq 0, A(X)\lambda \prec 1\}. 
%	\end{equation}	
\begin{comment}
	Treating placements as a design parameter, the overall system stability region is
	\begin{equation}
		\Lambda = \textstyle\bigcup_{X\in \domain} \Lambda_X.
	\end{equation}
	which, in general, may not be convex.
\end{comment}

	\subsection{Cache Optimization}\label{sec:mincost}

\begin{comment}
	Motivated by the Kelly's Theorem stated in previous section, we can now associate a cost function with the proposed network at its equilibrium. In particular, we consider a general class of convex and nondecreasing functions with respect to (w.r.t.) $\rho$'s for this purpose. Our goal here is to formulate an optimization problem which minimizes the associated network cost function subject to the stability region of the network. Due to the Jackson's Theorem, we know that the total network cost will be the total sum of costs associated with each queue. Hence, we associate a general nondecreasing convex cost function to each link, denoted by $C_{u,v}, \forall (u,v) \in E$. To translate the stability of the network into stability of the optimization problem, we require $C_{u,v}<\infty$, for $\rho_{u,v}(X)<1$, and $C_{u,v} \rightarrow \infty$, as $\rho_{u,v}(X)\rightarrow1^-$.
\end{comment}

	Given a Kelly cache network represented by graph $G(V,E)$, service rates $\mu_e$, $e\in E$, storage capacities $c_v$, $v\in V$, a set of requests $\demand$, and arrival rates $\lambda_r$, for $r\in \demand$, we wish to determine placements $\mathbf{x}\in \domain$ that optimize a certain design objective. In particular, we seek placements that are solutions to optimization problems of the following form:\\
	\begin{subequations}\label{opt:mincost}
{\hspace*{\stretch{1}} \textsc{MinCost} \hspace*{\stretch{1}} } 
	\vspace{-0.8em}	
		\begin{align}
	\textstyle \text{Minimize:}&\quad C(\mathbf{x}) = \textstyle\sum_{e\in E} C_e(\rho_e(\mathbf{x},\bm{\lambda})),\label{eq:obj}\\   
			\text{subj.~to:}&\quad \mathbf{x} \in\domain_{\bm{\lambda}} ,
	        \end{align} 
	\end{subequations}
where $C_e:[0,1)\to \reals_+$, $e\in E$, are positive \emph{cost} functions, $\rho_e:\domain\times \reals_+^{|\demand|}\to \reals_+$ is the load on edge $e$, given by \eqref{eq:rho_X}, and $\domain_{\bm{\lambda}}$ is the set of feasible placements that ensure stability, given by \eqref{eq:stabledom}. 
We make the following standing assumption on the cost functions appearing in \textsc{MinCost}:
\begin{assumption} For all $e\in E$, functions $C_e:[0,1)\to \reals_+$ are convex and non-decreasing on $[0,1)$. \label{as:conv}
\end{assumption}
Assumption \ref{as:conv} is natural; indeed it holds for many cost functions that often arise in practice. We list several examples:

\noindent\textbf{Example 1. Queue Size:} Under steady-state distribution~\eqref{steadystate}, the expected number of packets in queue $e\in E$ is given by
%\begin{align}
$\expect[n_e] = C_e(\rho_e) = \frac{\rho_e}{1-\rho_e},$% \label{eq:size}
%\end{align} 
which is indeed convex and non-decreasing for $\rho_e\in[0,1)$. Hence, the expected total number of packets in the system in steady state can indeed be written as the sum of such functions.

\noindent\textbf{Example 2. Delay:} From Little's Theorem \cite{datanetworks}, the expected delay experienced by a packet in the system is
%\begin{align}
$\expect[T] = \frac{1}{\|\bm{\lambda}\|_1} \sum_{e\in E} \expect[n_e],$
%\end{align}
where $\|\bm{\lambda}\|_1=\sum_{r\in\demand} \lambda^r$ is the total arrival rate, and $\expect[n_e]$ is the expected size of each queue. Thus, the expected delay can also be written as the sum of functions that satisfy Assumption~\ref{as:conv}. We note that the same is true for the sum of the expected delays per queue $e\in E$, as the latter are given by
%\begin{align}
$\expect[T_e] = \frac{1}{\lambda_e} \expect[n_e] = \frac{1}{\mu_e(1-\rho_e)},$
%\end{align}
which are also convex and non-decreasing in $\rho_e$.

\noindent\textbf{Example 3. Queuing Probability/Load per Edge:} In a FIFO queue, the \emph{queuing probability} is the probability of arriving in a system where the server is busy; this is given by $C_e(\rho_e)=\rho_e=\lambda_e/\mu_e$, which is again non-decreasing and convex.  
This is also, of course, the load per edge. By treating $1/\mu_e$ as the weight of edge $e\in E$, this setting recovers the objective  of \cite{ioannidis2016adaptive} as a special case of our model.

\noindent\textbf{Example 4. Monotone Separable Costs:} More generally, consider a state-dependent cost function $c:\Omega \to \reals_+$ that satisfies the following three properties: (1) it is separable across queues, (2) it depends only on queue sizes $n_e$, and (3) it is non-decreasing w.r.t.~these queue sizes. Formally, $c(\mathbf{n}) = \sum_{e\in E} c_e(n_e),$
where $c_e:\naturals\to \reals_+$, $e\in E$, are non-decreasing functions of the queue sizes. Then,
the steady state cost under distribution \eqref{steadystate} has precisely form \eqref{eq:obj} with convex costs, i.e.,
%\begin{align}
$\expect[c(\mathbf{n})] = \sum_{e\in E} C_e(\rho_e)$
%\end{align}
where $C_e:[0,1)\to\reals_+$ satisfy Assumption~\ref{as:conv}. 
 This follows from the fact that: \begin{align}
\!\!\!C_e(\rho_e) \!\equiv\! \expect[c_e(\mathbf{n})]\!=\! \textstyle c_e(0)+ \sum_{n=0}^{\infty}(c_e(n\!+\!1)\!-\!c_e(n)) \rho_e^n. \!\! \label{eq:power}
\end{align}
\techrep{The proof is in Appendix~\ref{app:convexity}.}{}

In summary, \textsc{MinCost} captures many natural cost objectives, while Assumption~\ref{as:conv} holds for \emph{any} monotonically increasing cost function that depends only on queue sizes.

\begin{table}[!t]
\begin{scriptsize}
\caption{Notation Summary}%\vspace*{-10mm}
\begin{tabular}{p{0.06\textwidth}p{0.38\textwidth}}
\hline
\multicolumn{2}{c}{\textbf{Kelly Cache Networks}}\\
\hline
$G(V,E)$ & Network graph, with nodes $V$ and edges $E$\\
$k_p(v)$ & position of node $v$ in path $p$\\
$\mu_{(u,v)}$ & Service rate of edge $(u,v)\in E$\\
$\demand$ & Set of classes/types of requests \\
$\lambda^r$ & Arrival rate of class $r\in \demand$ \\
$p^r$ & Path followed by  class $r\in \demand$ \\
$i^r$ & Object  requested by class $r\in \demand$ \\
$\catalog$ & Item catalog\\
$\mathcal{S}_i$ & Set of designated servers of $i\in \catalog$\\
$c_v$ & Cache capacity at node $v\in V$\\
$x_{vi}$ & Variable indicating whether $v\in V$ stores $i\in\catalog$ \\
$\mathbf{x}$ & Placement vector of $x_{vi}$s, in $\{0,1\}^{|V||\catalog|}$\\ %/relaxed variable $x_{vi}$\\
$\bm{\lambda}$ & Vector of arrival rates $\lambda^r$, $r\in \demand$\\
$\lambda_e^r$ & Arrival rate of class $r$ responses over edge $e\in E$\\
$\rho_e$ & Load on edge $e\in E$\\
$\Omega$ & State space\\
$\mathbf{n}$ & Global state vector in $\Omega$\\
$\pi(\mathbf{n})$ & Steady-state distribution of $\mathbf{n}\in \Omega$\\
$\mathbf{n}_e$ & State vector of queue at edge $e\in E$\\
$\pi_e(\mathbf{n}_e)$ & Marginal of steady-state distribution of queue $\mathbf{n}_e$\\
$n_e$ & Size of queue at edge $e\in E$\\
%\hline
%\multicolumn{2}{c}{\textbf{Kelly Cache Networks}}\\
%\hline
\hline
\multicolumn{2}{c}{\textbf{Cache Optimization}}\\
\hline
$C$ & Global Cost function\\
$C_e$ & Cost function of edge $e\in E$\\
$\domain$ & Set of placements $\mathbf{x}$ satisfying capacity constraints\\
$\mathbf{x}_0$ & A feasible placement in $\domain$\\
$F(\mathbf{x})$ & Caching gain of placement $\mathbf{x}$ over $\mathbf{x}_0$\\
$y_{vi}$ & Probability that $v\in V$ stores $i\in\catalog$ \\
$\mathbf{y}$ & Vector of marginal probabilities $y_{vi}$, in $\{0,1\}^{|V||\catalog|}$\\ %/relaxed variable $x_{vi}$\\
$G(\mathbf{y})$ & Multilinear extension under marginals $\mathbf{y}$\\
$\domain_{\bm{\lambda}}$ & Set of placements under which system is stable under arrivals $\bm{\lambda}$\\
$\tilde{\domain}$ & Convex hull of constraints of \textsc{MaxCG}\\
\hline
\multicolumn{2}{c}{\textbf{Conventions}}\\
\hline
$\supp(\cdot)$ & Support of a vector\\
$\conv(\cdot)$ & Convex hull of a set\\
$[\mathbf{x}]_{+i}$ & Vector equal to $\mathbf{x}$ with $i$-th coordinate set to 1 \\
$[\mathbf{x}]_{-i}$ & Vector equal to $\mathbf{x}$ with $i$-th coordinate set to 0 \\
$\mathbf{0}$ & Vector of zeros\\
\hline
\end{tabular}
\vspace*{-2em}
\end{scriptsize}
\end{table}

	%\section{Main Results}\label{sec:main}
	
	\section{Submodularity and the Greedy Algorithm}\label{sec:subm}
Problem \textsc{MinCost} is NP-hard; this is true even when cost functions $c_e$ are linear, and the objective is to minimize the sum of the loads per edge \cite{ioannidis2016adaptive,femtocaching}. In what follows, we outline our methodology for solving this problem; it relies on the fact that the objective of \textsc{MinCost} is a \emph{supermodular set function}; our first main contribution is to show that this property is a direct consequence of Assumption~\ref{as:conv}. 

%The presence of submodularity allows us to cast  \textsc{MinCost} into a submodular maximization problem subject to matroid constraints. An algorithm that attains a $1-1/e$ approximation from the optimal exists for this general setting (i.e., for arbitrary submodular functions and matroid constraints) \cite{calinescu2011}; this algorithm relies on estimating the gradient of the objective via random sampling. Our second main contribution is to show that the objective of \textsc{MinCost} exhibits additional structure, which we  exploit to eschew randomization altogether: indeed, we show that a biased estimator of the gradient can be computed in polynomial time via the Taylor approximation of the objective.

%\section{Optimization}
%\subsection
\noindent\textbf{Cost Supermodularity and Caching Gain.}
%Our first main result establishes an important property of the objective function of \textsc{MinCost}. 
 First, observe that the cost function $C$ in \textsc{MinCost} can be naturally  expressed as a set function. Indeed, for $S\subset V\times \catalog$, let $\mathbf{x}_S\in \{0,1\}^{|V| |\catalog|} $ be the binary vector whose support  is $S$ (i.e., its non-zero elements are indexed by $S$). As there is a 1-1 correspondence between a binary vector $\mathbf{x}$ and its support $\supp(\mathbf{x})$,  we can interpret $C:\{0,1\}^{|V| |\catalog|}\to \reals_+$ as  set function $C:V\times \catalog:\to\reals_+$ via
$C(S)\triangleq C(\mathbf{x}_S).$ Then, the following theorem holds:
\begin{comment}
\begin{theorem}\label{thm:coststruct}
Under Assumption~\ref{as:conv}, function $C: \{0,1\}^{|V|\times |\catalog|}$ is non-increasing, i.e.
$$C(X)\geq C(X') \text{ for all } X\leq X',$$
as well as \emph{supermodular}, i.e.,
for all $X,X'\in \domain_\lambda$ s.t.~$X\geq X'$, and all $u\in V$,  $i\in \catalog$,
$$C(X+E_{ui}) -C(X) \leq C(X'+e_{ui}) -C(X') $$
where $E_{ui}\in \{0,1\}^{|V|\times |\catalog|}$ is a binary matrix whose only non-zero coordinate is at row $v$ and column $i$.
\end{theorem}
\end{comment}
\begin{theorem}\label{thm:coststruct}
Under Assumption~\ref{as:conv},  
$C(S)\triangleq C(\mathbf{x}_S)$
 is non-increasing and supermodular  over  $\{\supp(\mathbf{x}): \mathbf{x}\in \domain_{\bm{\lambda}}\}$.
\end{theorem}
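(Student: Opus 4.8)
The plan is to peel the cost apart edge-by-edge and reduce everything to two facts: first, that each per-class response rate $\lambda^r_e(\mathbf{x},\bm\lambda)$, and hence each load $\rho_e(\mathbf{x},\bm\lambda)$, is non-increasing and supermodular when viewed as a set function of $\supp(\mathbf{x})$; and second, that composing a convex, non-decreasing scalar function with a non-increasing supermodular set function preserves both properties. Since $C(\mathbf{x})=\sum_{e\in E}C_e(\rho_e(\mathbf{x},\bm\lambda))$ and non-increasing supermodular functions are closed under non-negative sums, the theorem follows. Throughout I would work on supports of placements in $\domain_{\bm\lambda}$, where every $\rho_e<1$ so that each $C_e(\rho_e)$ is finite and Assumption~\ref{as:conv} is in force; the capacity (matroid) constraints play no role here and enter only later.

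For the first fact, I would fix an edge $e=(u,v)$ and a class $r$ with $(v,u)\in p^r$. By \eqref{lamr}, $\lambda^r_e(\mathbf{x},\bm\lambda)=\lambda^r\prod_{k'=1}^{k_{p^r}(v)}(1-x_{p^r_{k'}i^r})$ depends on $\mathbf{x}$ only through the coordinate set $J^r_e=\{(p^r_{k'},i^r):1\le k'\le k_{p^r}(v)\}$, and as a function of $S=\supp(\mathbf{x})$ it equals $\lambda^r\,\mathbf{1}[S\cap J^r_e=\emptyset]$. This is manifestly non-increasing in $S$; and it is supermodular because $S\mapsto\mathbf{1}[S\cap J^r_e\neq\emptyset]=\min\{1,|S\cap J^r_e|\}$ is a monotone, submodular (coverage-type) set function, so $\lambda^r$ minus a non-negative multiple of it is supermodular (one may also simply verify that every pairwise second difference of $\prod_j(1-x_j)$ is a non-negative product of $(1-x_j)$ factors). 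By \eqref{eq:rho_X}, $\rho_e(\mathbf{x},\bm\lambda)$ is a non-negative linear combination of such functions, hence non-increasing and supermodular, with values in $[0,1)$ on the domain of interest.

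For the second fact I would prove the following lemma: if $g$ is a non-increasing supermodular set function with values in $[0,1)$ and $\phi:[0,1)\to\reals_+$ is convex and non-decreasing, then $\phi\circ g$ is non-increasing and supermodular. Monotonicity is immediate. For supermodularity, fix $S\subseteq T$ and a coordinate $a\notin T$ and write $s=g(S)$, $s'=g(S\cup\{a\})$, $t=g(T)$, $t'=g(T\cup\{a\})$; monotonicity of $g$ gives $s\ge t$ and $s'\ge t'$, and supermodularity of $g$ gives $s-s'\ge t-t'\ge 0$. Using the standard convexity fact that $u\mapsto\phi(u+\ell)-\phi(u)$ is non-decreasing for each fixed $\ell\ge0$, take $\ell=t-t'$ and shift the base point from $t'$ up to $s'\ge t'$ to get $\phi(t)-\phi(t')\le\phi(s'+t-t')-\phi(s')$; since $s'+t-t'\le s$ and $\phi$ is non-decreasing, $\phi(s'+t-t')\le\phi(s)$, so $\phi(g(S))-\phi(g(S\cup\{a\}))\ge\phi(g(T))-\phi(g(T\cup\{a\}))$, which is precisely supermodularity of $\phi\circ g$. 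Note that no differentiability of $\phi$ is needed.

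Finally I would assemble: apply the lemma with $\phi=C_e$ (convex and non-decreasing by Assumption~\ref{as:conv}) and $g=\rho_e(\cdot,\bm\lambda)$, so each $C_e\circ\rho_e$ is non-increasing and supermodular, and sum over $e\in E$. I expect the lemma to be the crux: it is the one place where Assumption~\ref{as:conv} is used in full, and the content of the theorem is exactly that convexity of the edge costs is what must be added on top of the purely combinatorial supermodularity of the loads. A minor point to treat carefully is the domain — the inequalities above involve the four placements $\mathbf{x}_S,\mathbf{x}_{S\cup\{a\}},\mathbf{x}_T,\mathbf{x}_{T\cup\{a\}}$, and one should note that supermodularity is a local condition verified on the whole Boolean lattice wherever all loads stay below $1$, so restricting to $\{\supp(\mathbf{x}):\mathbf{x}\in\domain_{\bm\lambda}\}$ causes no trouble. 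For the subclass of costs $\expect[c(\mathbf{n})]$ of Example~4 one could instead invoke the expansion \eqref{eq:power} and apply the lemma term-by-term with $\phi(\rho)=\rho^n$, but the argument above handles all of Assumption~\ref{as:conv} uniformly.
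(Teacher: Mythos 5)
Your proof is correct and follows essentially the same route as the paper's: both reduce the theorem to the monotonicity and supermodularity of each $\rho_e$ on supports, then invoke a composition lemma (the paper's Lemma~5 in its appendix) asserting that a convex non-decreasing scalar function composed with a non-increasing supermodular set function is again non-increasing and supermodular, and finally sum over edges. The only difference is mechanical: you prove the composition lemma via the single-element diminishing-returns form of supermodularity together with the shifted-difference monotonicity of convex functions, $u\mapsto\phi(u+\ell)-\phi(u)$, whereas the paper works with the lattice form $g(A)+g(B)\le g(A\cap B)+g(A\cup B)$ and the trick of writing $g(A)$ and $g(B)$ as convex combinations of $g(A\cap B)$ and $g(A\cup B)$; the two arguments are interchangeable and of comparable weight.
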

\techrep{A detailed proof of Theorem~\ref{thm:coststruct} can be found in Appendix~\ref{app:coststruct}.}
{\begin{proof}[Proof (Sketch)]
It is easy to verify that $\rho_{e}(S), \forall e\in E$, is supermodular and non-increasing in $S$. By Assumption \ref{as:conv} $C_e$ is a non-decreasing function, so $C_e(S)\triangleq C_{e}(\rho_{e}(S))$ is non-increasing. The theorem then follows from Assumption \ref{as:conv} and the fact that if $f:\reals\to\reals$ is  convex and non-decreasing and $g: \mathcal{X}\to \reals$ is  non-increasing supermodular then $h(\mathbf{x})\triangleq f(g(\mathbf{x}))$ is also supermodular.\end{proof}}
 In light of the observations in Sec.~\ref{sec:mincost} regarding Assumption~\ref{as:conv},  Thm.~\ref{thm:coststruct} implies that supermodularity arises \emph{for a broad array of natural cost objectives}, including expected delay and system size; it also applies under the \emph{full generality of Kelly networks}, including FIFO, LIFO, and round robin service disciplines.
Armed with this theorem, we turn our attention to converting \textsc{MinCost} to a \emph{submodular maximization} problem. In doing so, we face the problem that domain $\domain_{\bm{\lambda}}$, determined not only by storage capacity constraints, but also by stability, may be difficult to characterize. Nevertheless, we show that a problem that is amenable to approximation can be constructed, provided that a placement $\mathbf{x}_0 \in \domain_{\bm{\lambda}}$ is known.

%	As shown in Shanmugam et al. \cite{femtocaching}, problem \eqref{opt:mincost} is NP-hard. However, we show in the following that $C(X)$ is a supermodular and non-increasing function. In addition, it can be shown that $\mathcal{D}_1$ is a (partition) matroid \cite{femtocaching}. Hence, we can use combinatorial optimization techniques to reach a constant approximation solution with some optimality guarantee using a greedy algorithm.
	
In particular, suppose that we have access to a single $\mathbf{x}_0\in \domain_{\bm{\lambda}}$. We define the \emph{caching gain} $F: \domain_{\bm{\lambda}} \to \reals_+$ as $F(\mathbf{x}) = C(\mathbf{x}_0)- C(\mathbf{x}). $
Note that, for $\mathbf{x}\geq \mathbf{x}_0$, $F(\mathbf{x})$ is the relative decrease in the cost compared to the cost under $\mathbf{x}_0$. We consider the following optimization problem:\\
	\begin{subequations}\label{opt:maxcg}
{\hspace*{\stretch{1}} \textsc{MaxCG} \hspace*{\stretch{1}} }\vspace{-0.8em}
		\begin{align}
			\text{Maximize:}&\quad F(\mathbf{x}) = C(\mathbf{x}_0) - C(\mathbf{x})\label{eq:fobj}\\   
			\text{subj.~to:}&\quad \mathbf{x} \in\domain, \mathbf{x}\geq \mathbf{x}_0 \label{eq:fcon}
	        \end{align} 
	\end{subequations}
Observe that, if $\mathbf{0}\in \domain_{\bm{\lambda}}$, then $\domain_{\bm{\lambda}} =\domain$; in this case, taking $\mathbf{x}_0=\mathbf{0}$ ensures that problems \textsc{MinCost} and \text{MaxCG} are equivalent. If $\mathbf{x}_0\neq \textbf{0}$, the above formulation attempts to maximize the gain restricted to placements $\mathbf{x}\in \domain$ that dominate $\mathbf{x}_0$: such placements necessarily satisfy $\mathbf{x}\in \domain_{\bm{\lambda}}$. 
Thm.~\ref{thm:coststruct} has the following immediate implication: 
\begin{cor}The caching gain $F(S)\triangleq F(\mathbf{x}_S)$ is non-decreasing and submodular  over  $\{\supp(\mathbf{x}): \mathbf{x}\in \domain_{\bm{\lambda}}\}$.
\end{cor}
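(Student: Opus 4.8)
The plan is to obtain the corollary directly from Theorem~\ref{thm:coststruct} by observing that $F$ differs from $-C$ only by the additive constant $C(\mathbf{x}_0)$ (the reference placement $\mathbf{x}_0$ being fixed throughout). Negating a set function exchanges ``non-increasing'' with ``non-decreasing'' and ``supermodular'' with ``submodular,'' and translating by a constant preserves both properties, so the statement should follow with essentially no extra work once Theorem~\ref{thm:coststruct} is in hand.

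Concretely, I would first recall that for $S\subseteq V\times\catalog$ with $\mathbf{x}_S\in\domain_{\bm{\lambda}}$ we have $F(S)=F(\mathbf{x}_S)=C(\mathbf{x}_0)-C(\mathbf{x}_S)=C(\mathbf{x}_0)-C(S)$. For monotonicity, take $S\subseteq S'$ with $\mathbf{x}_S,\mathbf{x}_{S'}\in\domain_{\bm{\lambda}}$; then $\mathbf{x}_S\leq\mathbf{x}_{S'}$ componentwise, so Theorem~\ref{thm:coststruct} gives $C(S)\geq C(S')$, whence $F(S)=C(\mathbf{x}_0)-C(S)\leq C(\mathbf{x}_0)-C(S')=F(S')$, i.e.\ $F$ is non-decreasing. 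For submodularity, take $S\subseteq S'$ and $(v,i)\notin S'$ with all relevant supports in $\domain_{\bm{\lambda}}$; the supermodularity of $C$ from Theorem~\ref{thm:coststruct} reads $C(S'\cup\{(v,i)\})-C(S')\geq C(S\cup\{(v,i)\})-C(S)$, and multiplying through by $-1$ (the constant $C(\mathbf{x}_0)$ cancelling on each side) yields the diminishing-returns inequality $F(S\cup\{(v,i)\})-F(S)\geq F(S'\cup\{(v,i)\})-F(S')$, so $F$ is submodular.

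The only point worth a remark rather than genuine effort is that these (in)equalities are claimed only over the restricted ground set $\{\supp(\mathbf{x}):\mathbf{x}\in\domain_{\bm{\lambda}}\}$; but every set operation used above (adding an element, passing to a superset) keeps us inside this family, since $\domain_{\bm{\lambda}}$ is closed under coordinatewise increase, as noted after \eqref{eq:stabledom}. Hence no domain issue arises beyond what Theorem~\ref{thm:coststruct} already handles, and I do not expect any real obstacle: the corollary is just a restatement of Theorem~\ref{thm:coststruct} for the caching gain $F$.
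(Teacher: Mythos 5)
Your proposal is correct and takes exactly the route the paper intends: the paper labels the corollary an ``immediate implication'' of Theorem~\ref{thm:coststruct}, and your argument---that $F = C(\mathbf{x}_0) - C$ so negation swaps non-increasing/supermodular for non-decreasing/submodular while the additive constant cancels---is precisely that implication, with the useful extra remark that the restricted ground set is closed under the superset and element-addition operations because $\domain_{\bm{\lambda}}$ is upward closed.
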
	
%Hence, Thm.~\ref{thm:coststruct} implies that \eqref{opt:maxcg} involves a positive, non-decreasing submodular objective. As such, there exist polynomial time approximation algorithms for its solution; we list two below.

\noindent\textbf{Greedy Algorithm.}\label{sect:greedy}
Constraints \eqref{eq:fcon} define a (partition) matroid \cite{calinescu2011,femtocaching}. This, along with the submodularity and monotonicity of $F$ imply that we can produce a solution  within $\frac{1}{2}$-approximation from the optimal via the \emph{greedy} algorithm \cite{greedy}. The algorithm, summarized in Alg.~\ref{alg:greedy}, iteratively allocates items to caches that yield the largest marginal gain.
\begin{comment}
proceeds iteratively as follows. Starting from the initial feasible content placement $\mathbf{x}_0$, the greedy algorithm  adds a single object to a cache at each iteration, i.e., sets $x_{v^*i^*}=1$ for some $v^*\in V$, $i^*\in \catalog$.  If $\mathbf{x}$ is the present solution at the beginning of an iteration, the cache-item pair $(v^*,i^*)$ is selected as follows: 
\begin{equation}
(v^*,i^*)=\mathop{\arg \max}_{(v,i) \in A(\mathbf{x})} \left( F(\mathbf{x}+\mathbf{e}_{vi}) -F(\mathbf{x}) \right)
%F_{\mathbf{x}}((v',i')),
\end{equation}
where $A(\mathbf{x}):= \{(v,i)\in V\times \catalog: \mathbf{x}+\mathbf{e}_{vi} \in \domain \}$, and  $\mathbf{e}_{vi}\in\{0,1\}^{|V||\catalog|}$ is a unit vector with its $(v,i)$-th element equal to one and the remaining elements equal to zero.
%\begin{equation}
%F_{\mathbf{x}}((v,i)) :=  F(\mathbf{x}+\mathbf{e}_{vi}) -F(\mathbf{x}), \text{ for } (v,i) \in A(\mathbf{x}),
%\end{equation}
%and $\mathbf{e}_{vi}$ is a unit vector with $(v,i)$-th element equals to one and the rest equal to zero. 
This procedure continues until $A(\mathbf{x}) =\emptyset$.  
\end{comment}
The solution produced by Algorithm \ref{alg:greedy}  is guaranteed  to be  within a $\frac{1}{2}$-approximation ratio of the optimal solution of \textsc{MaxCG} \cite{greedy2}.  The approximation guarantee of $\frac{1}{2}$ is tight: %; we prove this below.
\begin{lemma}\label{lem:tight} For any $\varepsilon>0$, there exists a cache network  the greedy algorithm solution is within $\frac{1}{2}+\varepsilon$ from the optimal, when the objective is the sum of expected delays per edge.
\end{lemma}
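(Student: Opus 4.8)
The plan is to exhibit an explicit family of small cache networks, parametrized by $\varepsilon$, on which the greedy algorithm's worst-case behavior drives the approximation ratio down to $\tfrac12$. The standard template for showing a matroid-constrained greedy bound of $\tfrac12$ is tight is a two-element partition matroid in which greedy is lured into picking a single item of value $v$ first, after which the only remaining feasible item has value $\approx v$, whereas the optimum picks two disjoint items of value $\approx v$ each, for a total of $\approx 2v$. The work here is to realize such a gadget using the \emph{actual} objective of \textsc{MaxCG} with $C_e(\rho_e)=\expect[T_e]=\frac{1}{\mu_e(1-\rho_e)}$, so that the numbers come out of genuine M/M/1 delay formulas rather than abstract set-function values.

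Concretely, I would build a network with two ``request chains'' that share a single candidate caching node but whose caches can hold only one object (so the partition-matroid structure of \eqref{eq:fcon} forces a one-item-per-cache choice, or alternatively use one node with $c_v=1$ that lies on two paths). First I would pick the baseline $\mathbf{x}_0=\mathbf{0}$, assume $\mathbf{0}\in\domain_{\bm\lambda}$ so $\domain_{\bm\lambda}=\domain$, and compute $F$ on each singleton and on the relevant two-element set directly from \eqref{eq:rho_X} and the delay cost. The parameters $\lambda^r$ and $\mu_e$ would be tuned so that: (i) the single placement greedy is tempted to make first has marginal gain $g_1$; (ii) conditioned on that choice, every further feasible placement has marginal gain $0$ (e.g.\ because the only object that would help on the other chain is blocked by the occupied cache slot, or the other chain's bottleneck queue has already been bypassed/is unaffected); and (iii) the optimal placement instead caches objects for \emph{both} chains, each contributing roughly $g_1$, giving $\mathrm{OPT}\approx 2g_1$. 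Sending the load on the second chain's relevant edge to near-saturation, or sending a ratio of service rates appropriately, lets me make greedy's gain exactly $\tfrac12+\varepsilon$ of $\mathrm{OPT}$ for any prescribed $\varepsilon>0$; I would then check $\rho_e<1$ on all edges so the instance is valid.

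I would then argue, for this instance, that the greedy algorithm of Alg.~\ref{alg:greedy} actually makes the ``bad'' first choice: either by symmetry-breaking so the tempting item has strictly larger marginal gain than the good items, or (the cleaner route) by making the two good items have marginal gain \emph{tied} with the bad item at the first step, and noting the lemma only claims existence of a network on which the bound is nearly met, so an adversarial tie-break is permissible. After the first greedy pick the remaining marginal gains collapse to (near) zero, greedy halts with value $g_1(1+O(\varepsilon))$, and comparison with $\mathrm{OPT}=2g_1(1-O(\varepsilon))$ yields the ratio $\tfrac12+\varepsilon$ after relabeling $\varepsilon$.

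The main obstacle is engineering property (ii): in a queueing network the marginal gain of a caching decision is a global, nonlinear quantity (it rescales loads on an entire sub-path via the product in \eqref{lamr} and then feeds through the convex $C_e$), so I must design the topology and paths so that greedy's first pick genuinely \emph{exhausts} the useful structure—e.g.\ by routing the two classes so that the one cache slot greedy fills sits on the intersection of both paths but, once filled for one class, the second class's traffic is unaffected and no other feasible slot reduces its cost. Getting the delay arithmetic to hit exactly $\tfrac12+\varepsilon$ (rather than merely some constant $<1$) while keeping every $\rho_e<1$ is a routine but delicate calculation that I would relegate to the appendix; conceptually it amounts to choosing one edge's load as $1-\delta$ and letting $\delta\to 0$.
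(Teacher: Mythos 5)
Your plan matches the paper's proof: the concrete instance in Appendix~D is a four-node path $v\!-\!u\!-\!w\!-\!z$ with both requests originating at $u$, unit-capacity caches at $u$ and $w$, service rate $M\gg1$ on the middle edge and rate $1$ on the outer edges, so that greedy is \emph{strictly} (no adversarial tie-break needed) lured into caching item~2 at $u$ by the tiny extra delay reduction $\tfrac{1}{M-\delta}$ on the fast edge, after which every further cache decision has zero marginal gain. The optimum caches item~1 at $u$ and item~2 at $w$, giving a ratio of $\tfrac12\bigl(1+\tfrac{1-\delta}{M-\delta}\bigr)\to\tfrac12$ as $M\to\infty$, exactly the ``one bottleneck edge near saturation, one decoy high-bandwidth edge'' construction you sketch.
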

\begin{comment}
\begin{proof}
Consider the path topology illustrated in Fig.~\ref{fig:halfistight}. Assume that requests for files 1 and 2 are generated at node $u$ with rates $\lambda_1=\lambda_2=\delta$, for some  $\delta\in(0,1)$. Files 1 and 2 are stored permanently at $v$ and $z$, respectively. Caches exist only on $u$ and $w$, and have capacity $c_u=c_w=1$. Edges $(u,v)$, $(w,z)$ have bandwidth $\mu_{(u,v)}=\mu_{(w,z)}=1$, while edge $(u,w)$ is a high bandwidth link, having capacity $M\gg 1$. Let $\mathbf{x}_0=\mathbf{0}$.
The greedy algorithm starts from empty caches  and adds item 2 at cache $u$. This is because the caching gain from this placement is $c_{(u,w)}+c_{(w,z)}= \frac{1}{M-\delta} + \frac{1}{1-\delta}$, while the caching gain of all other decisions is at most $\frac{1}{1-\delta}$. Any subsequent caching decisions do not change the caching gain. The optimal solution is to cache item 1 at $u$ and item 2 at $w$, yielding a caching gain of $2/(1-\delta)$. Hence, the greedy solution attains an approximation ratio 
%$$\frac{ \frac{1}{M-\delta} + \frac{1}{1-\delta}   }{\frac{2}{1-\delta}} =
$0.5\cdot (1+\frac{1-\delta}{M-\delta}).$
By  appropriately choosing $M$ and $\delta$, this can be made arbitrarily close to 0.5. 
\end{proof}
\end{comment}
\techrep{The proof of Lemma~\ref{lem:tight} can be found in Appendix \ref{app:proofoflem:tight}.}{} The instance under which the bound is tight is given in Fig.~\ref{fig:halfistight}\techrep{}{ (see also \cite{techrep})}.
%We note that it is easy to extend Lemma~\ref{lem:tight} to other objectives, including, e.g., expected delay, queue size, etc. We note also that tight instances can be constructed using caches with capacities larger than 1 (see, e.g., Fig.~\ref{fig:halfistight2}). 
As we discuss in Sec.~\ref{sec:experiments}, the greedy algorithm performs well in practice for some topologies; however, Lemma~\ref{lem:tight} motivates us to seek alternative algorithms, that attain  improved approximation guarantees. %over greedy.

\begin{figure}[!t]
\centering 
    \includegraphics[width=0.8\columnwidth]{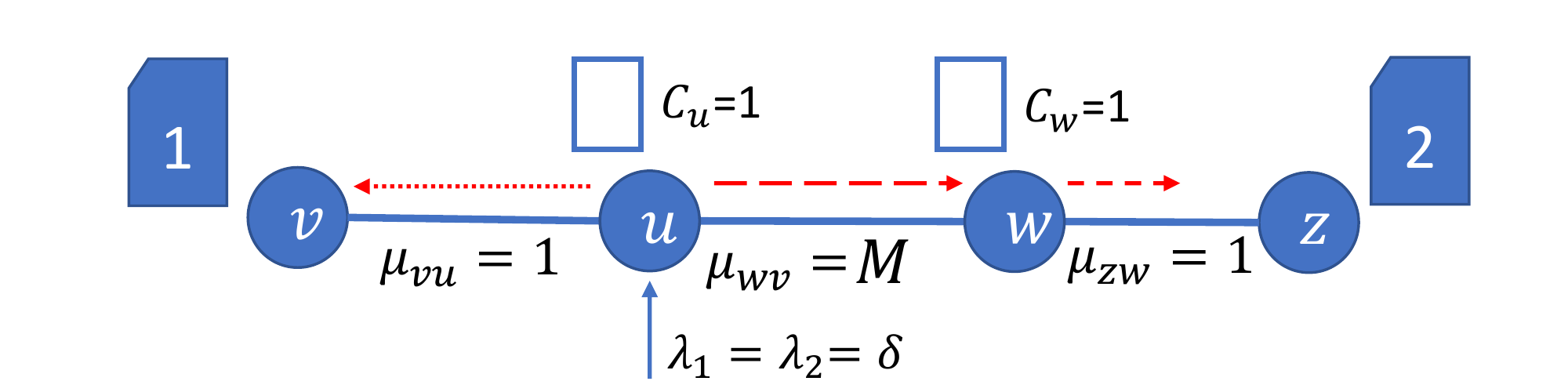}
    \caption{A path graph, illustrating that the $1/2$-approximation ratio of greedy is tight. Greedy caches item $2$ in node $u$, while the optimal decision is to cache item $1$ in $u$ and item $2$ in node $w$. For $M$ large enough, the approximation ratio can be made arbitrarily close to 1/2.  In our experiments in Sec.~\ref{sec:experiments}, we set $\delta=0.5$ and $M=200$.}\label{fig:halfistight}
\end{figure}

  \algsetup{linenosize=\scriptsize} 
\begin{algorithm}[!t]
		\caption{Greedy}\label{alg:greedy}
\begin{scriptsize}
		\begin{algorithmic}[1]
			\REQUIRE $F:\domain \rightarrow \reals_+,\mathbf{x}_0$
			\STATE   $\mathbf{x}\leftarrow \mathbf{x}_0$
			\WHILE{ $A(\mathbf{x}):= \{(v,i)\in V\times \catalog: \mathbf{x}+\mathbf{e}_{vi} \in \domain \}$  is not empty }
			\STATE  $(v^*,i^*) \leftarrow \mathop{\arg \max}_{(v,i) \in A(\mathbf{x})} \left( F(\mathbf{x}+\mathbf{e}_{vi}) -F(\mathbf{x}) \right)$ \label{line: max} 
			\STATE $\mathbf{x} \leftarrow \mathbf{x}+\mathbf{e}_{v^*i^*}$
			\ENDWHILE
			\RETURN $\mathbf{x}\;$
		\end{algorithmic}
\end{scriptsize}
	\end{algorithm}

%	\subsection{Problem Transformation: Submodular Maximization Problem}
%	\label{sec:transformation}
%	As shown in previous section, $C(X)$ is a non-increasing supermodular function. Since minimizing a supermodular function is often more complicated than maximizing a submodular function \cite{Mittal2013}, we prefer to transform the problem into a submodular maximization one, where we can use a greedy algorithm to produce a solution with some optimality guarantee.
	
%	Since all the requests are well-routed, in particular, there is always a source node at the end of each path which has stored the content object indefinitely, we can assume that there exists an initial feasible content placement with associated cost $C_0:=C(X_0) <\infty$. We will talk about this initial content placement later. Since we will only look at greedy schemes, and $C(X)$ is non-increasing, then, $C_0$ is an upper-bound on the aggregate cost, $C(X)$. Following \cite{sigmetrics}, we can find an allocation which maximizes the caching gain defined by
%	\begin{equation}
%	F(X)= C_0 - C(X), \label{eq:F_X}
%	\end{equation}
%	which is the expected cost reduction obtained by caching at intermediate nodes along the paths. Then we can formulate our optimization problem as
%	

\section{Continuous-Greedy Algorithm}\label{sec:contgreed}
The \emph{continuous-greedy} algorithm by  Calinescu et al.~\cite{calinescu2011} attains a tighter guarantee than the greedy algorithm, raising the approximation ratio from $0.5$ to $1-1/e\approx 0.63$. The algorithm maximizes the so-called  \emph{multilinear extension} of objective $F$, thereby obtaining a fractional solution $Y$ in the convex hull of the constraint space. The resulting solution is then rounded to produce an integral solution.

	\begin{algorithm}[!t]
		\caption{Continuous-Greedy}\label{alg:cg}
\begin{scriptsize}
		\begin{algorithmic}[1]
			\REQUIRE $G:\tilde{\domain}\to\reals_+$, $\mathbf{x}_0$, stepsize $0<\gamma\leq 1$
			\STATE  $t \leftarrow 0, k\leftarrow 0\; \mathbf{y}_0\leftarrow \mathbf{x}_0$
			\WHILE{$t <1$}
			\STATE  $\mathbf{m}_k \leftarrow \mathop{\arg\max}_{\mathbf{m}\in \tilde{\domain}} \langle \mathbf{m},\nabla G(\mathbf{y}_k)\rangle\;$ 
			\STATE $\gamma_k\leftarrow \min\{\gamma,1-t\}\;$
			\STATE $\mathbf{y}_{k+1}\leftarrow \mathbf{y}_k+\gamma_k\mathbf{m}_k, t\leftarrow t+\gamma_k, k\leftarrow k+1\;$
			\ENDWHILE
			\RETURN $\mathbf{y}_k\;$
		\end{algorithmic}
\end{scriptsize}
	\end{algorithm}

\subsection{Algorithm Overview} Formally, the multilinear extension of the caching gain $F$  is defined as follows. Define the convex hull of the set defined by the constraints \eqref{eq:fcon} in \textsc{MaxCG} as:
 \begin{align}\!\!\!\tilde{\domain} = \conv ( \{\mathbf{x} %\in \{0,1\}^{|V||\catalog|}\!\! 
 : \mathbf{x}\in \domain, \mathbf{x}\geq \mathbf{x}_0  \} )  \subseteq[0,1]^{|V||\mathcal{C}|}\label{eq:hull}\end{align}
 Intuitively, $\mathbf{y}\in \tilde{\domain}$ is a \emph{fractional} vector in $\reals^{|V||\domain|}$ satisfying the capacity constraints, and the bound $\mathbf{y}\geq \mathbf{x}_0$.

%$\mathbf{y}\in \tilde{\domain}$ is a fractional vector in $\reals^{|V||\domain|}$  satisfying the following set of constraints:
%	\begin{subequations}\label{eq:D}
%		\begin{align}
%		&\textstyle \sum_{i\in\catalog} y_{vi} \leq c_v, \qquad \text{for all } v\in V,\\
%		&y_{vi}\in [0,1], \qquad\text{for all } v\in V, i\in \mathcal{C}, \text{ and}\\
 %               &\mathbf{y} \geq \mathbf{x}_0.
%		\end{align}
%	\end{subequations}
 % both the capacity constraints as well as the bound $\mathbf{y}\geq \mathbf{x}_0.$, with the integrality constraints relaxed.
 Given a $\mathbf{y}\in \tilde{\domain}$, consider a random vector $\mathbf{x}$ in $\{0,1\}^{|V||\mathcal{C}|}$ generated as follows: for all $v\in V$ and $i \in \catalog$, the coordinates $x_{vi}\in\{0,1\}$  are independent Bernoulli  variables such that  $\prob(x_{vi}=1)=y_{vi}$. The multilinear extension $G:\tilde{\domain}\to \reals_+$ of $F:\domain_{\bm{\lambda}}\to\reals_+$ is defined via  following expectation $G(\mathbf{y}) = \mathbb{E}_{\mathbf{y}}[F(\mathbf{x})]$, parameterized by $\mathbf{y}\in\tilde{\domain}$, i.e.,
	\begin{equation}
	\label{eq:multilinearEx}
	G(\mathbf{y})  = \!\!\!\! \sum\limits_{\mathbf{x}\in \{0,1\}^{|V||\mathcal{C}|}}\!\!\!\!\!\!F(\mathbf{x}) \times\!\!\!\! \prod_{(v,i)\in V\times \catalog}\!\!\!\!\!y_{vi}^{x_{vi}}(1-y_{vi})^{1-x_{vi}},\!\!
	\end{equation}

The continuous-greedy algorithm, summarized in Alg.~\ref{alg:cg}, proceeds by first producing a fractional vector $\mathbf{y}\in \tilde{\domain}$. Starting from $\mathbf{y}_0=\mathbf{x}_0$, the algorithm iterates over: %\footnote{The iteratative steps \eqref{eq:its} are related to the so-called conditional gradient/Frank-Wolfe algorithm \cite{Bian-CSM,frank-wolfe} in constrained optimization.}:
\begin{subequations}\label{eq:its}
\begin{align}
	\mathbf{m}_k &\in \textstyle\mathop{\arg\max}_{\mathbf{m}\in\tilde{\domain}} \langle \mathbf{m}, \nabla G(\mathbf{y}_k)\rangle, \label{eq:maxlin}\\
        \mathbf{y}_{k+1} &= \mathbf{y}_k + \gamma_k \mathbf{m}_k,
\end{align} 
\end{subequations}
for an appropriately selected step size $\gamma_k\in [0,1]$. Intuitively, this yields an approximate solution to the non-convex problem:
\begin{subequations}\label{eq:nonconv}
\vspace*{-5mm}
 \begin{align}
\text{Maximize:} &\quad G(\mathbf{y})\\
\text{subj.~to:} & \quad \mathbf{y}\in\tilde{\domain}.
\end{align}
\end{subequations}
%\si{Are we still talking about Frank-Wolfe?}
 %In particular, step~\eqref{eq:maxlin} is essentially the conditional gradient--a.k.a.~Frank-Wolfe~\cite{Bian-CSM,nlp}--method. 
Even though \eqref{eq:nonconv} is not  convex, the output of Alg.~\ref{alg:cg} is  within a $1-1/e$ factor from the optimal solution $\mathbf{y}^*\in \tilde{\domain}$ of \eqref{eq:nonconv}. This fractional solution can  be rounded  to produce a solution to \textsc{MaxCG} with the same approximation guarantee using  either the pipage rounding \cite{ageev2004pipage} or the swap rounding \cite{calinescu2011,chekuri2010dependent} schemes: \techrep{we review both  in Appendix~\ref{sec:rounding}.}{ we refer the reader to  \cite{techrep} for details.}

\begin{comment}
Note that the maximization in \eqref{eq:maxlin} is a Linear Program (LP): it involves maximizing a linear objective subject to a set of linear constraints, and can thus be computed in polynomial time. However, this presumes access to the gradient $\nabla G$.   On the other hand, the expectation $G(\mathbf{y})=\expect_\mathbf{y}[F(\mathbf{x})]$ alone, given by \eqref{eq:multilinearEx}, involves a summation over $2^{|V||\catalog|}$ terms, and it may not be easily computed in polynomial time. To address this, the customary approach is to produce random samples  $\mathbf{x}$ and use these to produce an unbiased estimate of the gradient; this estimate can be used in Alg.~\ref{alg:cg} instead of the gradient. As we discuss below, Alg.~\ref{alg:cg}  still produces a solution that is within $1-1/e$ from the optimal, with high probability (c.f.~Section~\ref{sec:sample}). Nevertheless, we show below that \textsc{MaxCG} exhibits additional structure, which can be used to construct a polynomial-time estimator of the gradient $\nabla G(\mathbf{y})$ that  \emph{eschews sampling altogether}, by using the Taylor expansion of $G$.

Before presenting our estimator tailored to \textsc{MaxGC}, we first describe the sampling-based estimator used in the existing literature for the (general-purpose)  continuous-greedy algorithm.
\end{comment}
%\subsection{Sampling-Based Estimator}\label{sec:sample}
\noindent\textbf{A Sampling-Based Estimator.}\label{sec:sample}
Function $G$, given by \eqref{eq:multilinearEx}, involves a summation over $2^{|V||\catalog|}$ terms, and cannot be easily computed in polynomial time. 
Typically, a sampling-based estimator (see, e.g., \cite{calinescu2011})  is used instead.  Function $G$ is linear when restricted to each coordinate $y_{vi}$, for some $v\in V$, $i\in \catalog$ (i.e., when all  inputs except $y_{vi}$ are fixed). As a result, the partial derivative of $G$ w.r.t. $y_{vi}$ can be written as:
	\begin{align}\label{eq:deriv_G}
%	\begin{split}
\textstyle	\frac{\partial G(\mathbf{y})}{\partial y_{vi}}& = \mathbb{E}_{\mathbf{y}}[F(\mathbf{x})|x_{vi}=1]- \mathbb{E}_{\mathbf{y}}[F(\mathbf{x})|x_{vi}=0]\geq 0,%\\&= \mathbb{E}[C(\mathbf{x})|x_{vi}=0] - \mathbb{E}[C(\mathbf{x})|x_{vi}=1]\geq 0,
%	\end{split}
	\end{align} 
	where the last inequality is due to monotonicity of $F$. One can thus estimate the gradient by %by computing the expectation of $F(\mathbf{x})$ when all other Bernoulli coordinates are sampled independently with probabilities determined by $\mathbf{y}$, and $x_{vi}$ is set to 1 and 0, respectively; the difference of the two expectations yields an unbiased estimator of $\frac{\partial G(\mathbf{y})}{\partial y_{vi}}$.
%In turn, this suggests the following sampling-based estimator for the gradient $\nabla G(\mathbf{y})$: First,
(a) producing $T$ random samples $\mathbf{x}^{(\ell)}$, $\ell=1,\ldots,T$ of the random vector $\mathbf{x}$, consisting of independent Bernoulli coordinates,  and (b) computing, for each pair $(v,i)\in V\times \catalog$, the average
\begin{align}\textstyle
\widehat{\frac{\partial G(\mathbf{y})}{\partial y_{vi}}} = 
\frac{1}{T} \sum_{\ell=1}^T \left( F([\mathbf{x}^{\ell}]_{+(v,i)} )-F([\mathbf{x}^{\ell}]_{-(v,i)})\right),\label{eq:samplestimator}
\end{align}
where $[\mathbf{x}]_{+(v,i)}$,$[\mathbf{x}]_{-(v,i)}$ are equal to vector $\mathbf{x}$ with the $(v,i)$-th coordinate set to 1 and 0, respectively. Using this estimate, Alg.~\ref{alg:cg} \techrep{}{is poly-time and}
 attains an approximation ratio arbitrarily close to $1-1/e$ for appropriately chosen $T$\techrep{.}{ (see \cite{techrep} for a formal guarantee and characterization of Alg.~\ref{alg:cg}'s complexity).}
\techrep{
In particular, the following theorem holds:
\begin{theorem}\label{thrm:sampling}[Calinescu et al.~\cite{calinescu2011}] Consider Alg.~\ref{alg:cg}, with $\nabla G(\mathbf{y}_k)$ replaced by the sampling-based estimate $\widehat{\nabla G(\mathbf{y}^k)}$, given by \eqref{eq:samplestimator}. Set $T = \frac{10}{\delta^2}(1+\ln (|\catalog||{V}|))$, and $\gamma=\delta$, where $\delta = \frac{1}{40|\catalog||{V}|\cdot(\sum_{v\in{V}}c_v)^2}.$ Then, the algorithm terminates after $K=1/\gamma=1/\delta$ steps and, with high probability, 
$$G(\mathbf{y}^K) \geq (1-(1-\delta)^{1/\delta}) G(\mathbf{y}^*) \geq (1-1/e)G(\mathbf{y}^*),$$
where $\mathbf{y}^*$ is an optimal solution to \eqref{eq:nonconv}.
\end{theorem}
The proof of the theorem can be found  in Appendix A of Calinescu et al.~\cite{calinescu2011} for general submodular functions over arbitrary matroid constraints; we state Thm.~\ref{thrm:sampling} here with constants $T$ and $\gamma$ set specifically for our objective $G$ and our set of constraints $\tilde{D}$.}{}

\techrep{Under this parametrization of $T$ and $\gamma$, Alg.~\ref{alg:cg} runs in polynomial time.  More specifically, note that $1/\delta=  O(|\catalog||{V}|\cdot(\sum_{v\in{V}}c_v)^2)$ is polynomial in the input size. Moreover, the algorithm runs for $K =1/\delta$ iterations in total. Each iteration requires $T = O(\frac{1}{\delta^2}(1+\ln (|\catalog||V|)$ samples, each involving a polynomial computation (as $F$ can be evaluated in polynomial time). Finally, LP \eqref{eq:maxlin} can be solved in polynomial time in the number of constraints and variables, which are $O(|V||\catalog|)$.}{}

%Then, the following theorem holds:
%\begin{theorem}\label{thrm:sampling}[Calinescu et al.~\cite{calinescu2011}] Consider Alg.~\ref{alg:cg}, with $\nabla G(\mathbf{y}_k)$ replaced by the sampling-based estimate $\widehat{\nabla G(\mathbf{y}^k)}$, given by \eqref{eq:samplestimator}. Set $T = \frac{10}{\delta^2}(1+\ln (|\catalog||{V}|))$, and $\gamma=\delta$, where $\delta = \frac{1}{40|\catalog||{V}|\cdot(\sum_{v\in{V}}c_v)^2}.$ Then, the algorithm terminates after $K=1/\gamma=1/\delta$ steps and, with high probability, 
%$$G(\mathbf{y}^K) \geq (1-(1-\delta)^{1/\delta}) G(\mathbf{y}^*) \geq (1-1/e)G(\mathbf{y}^*),$$
%where $\mathbf{y}^*$ is an optimal solution to \eqref{eq:nonconv}.
%\end{theorem}
%The proof of the theorem can be found  in Appendix A of Calinescu et al.~\cite{calinescu2011} for general submodular functions over arbitrary matroid constraints; we state it here with constants $T$ and $\gamma$ set specifically for our objective $G$ and our set of constraints $\tilde{D}$. 
%
%Under this parametrization of $T$ and $\gamma$, Alg.~\ref{alg:cg} runs in polynomial time.  More specifically, note that $1/\delta=  O(|\catalog||{V}|\cdot(\sum_{v\in{V}}c_v)^2)$ is polynomial in the input size. Moreover, the algorithm runs for $K =1/\delta$ iterations in total. Each iteration requires $T = O(\frac{1}{\delta^2}(1+\ln (|\catalog||V|)$ samples, each involving a polynomial computation (as $F$ can be evaluated in polynomial time). Finally, LP \eqref{eq:maxlin} can be solved in polynomial time in the number of constraints and variables, which are $O(|V||\catalog|)$.
 
\subsection{A Novel Estimator via Taylor Expansion}\label{sec:taylor}

The classic approach to estimate the gradient via sampling has certain drawbacks.  The number of samples $T$ required to attain the $1-1/e$ ratio is quadratic in $|V||\catalog|$. In practice, even for networks and catalogs of moderate size (say, $|V|=|\catalog|=100$), the number of samples becomes prohibitive (of the order of $10^8$). Producing an estimate for $\nabla G$ via a closed form computation that eschews sampling thus has significant computational advantages. In this section, we show that the multilinear relaxation of the caching gain $F$ admits  such a closed-form characterization.

  We say that a polynomial $f:\reals^d\to\reals$ is in  \emph{Weighted Disjunctive Normal Form} (\emph{W-DNF}) if it can be written as
\begin{align}
	f(\mathbf{x}) = \textstyle\sum_{s\in \mathcal{S}}\beta_s \cdot \prod_{j\in \mathcal{I}(s)}(1-x_j),\label{wdnf}
\end{align}
for some index set $\mathcal{S}$, positive coefficients $\beta_s>0$, and index sets $I(s)\subseteq \{1,\ldots, d\}$. Intuitively,  treating binary variables $x_j\in \{0,1\}$ as boolean values,  each W-DNF polynomial can be seen as a weighted  sum (disjunction) among products (conjunctions) of negative literals.  
These polynomials  arise naturally in the context of our problem; in particular:
\begin{lemma}\label{wdnfmoments}
For all $k\geq 1$, $\mathbf{x}\in \mathcal{D}$, and $e\in E$,  $\rho^k_e(\mathbf{x},\bm{\lambda})$ is a W-DNF polynomial whose coefficients depend on $\bm{\lambda}$.
\end{lemma}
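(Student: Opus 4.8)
The plan is to reduce everything to a single structural fact: on binary inputs, W-DNF polynomials form a class closed under products. Granting that, the lemma follows by induction on $k$, with the base case $k=1$ read off directly from the definition of $\rho_e$.

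For the base case, combining \eqref{lamr} and \eqref{eq:rho_X} gives, for $e=(u,v)$,
\begin{align*}
\rho_{(u,v)}(\mathbf{x},\bm{\lambda}) = \sum_{r\in\demand:(v,u)\in p^r} \frac{\lambda^r}{\mu_{(u,v)}}\,\prod_{k'=1}^{k_{p^r}(v)}\bigl(1-x_{p_{k'}^ri^r}\bigr),
\end{align*}
which is already in the form \eqref{wdnf}: the index set $\mathcal{S}$ is the set of requests traversing $e$, the coefficients are $\beta_r=\lambda^r/\mu_{(u,v)}>0$ (any $r$ with $\lambda^r=0$ may be dropped from $\mathcal{S}$), and $\mathcal{I}(r)=\{(p_{k'}^r,i^r):1\le k'\le k_{p^r}(v)\}$. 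Hence $\rho_e$ is W-DNF and its coefficients are determined by $\bm{\lambda}$.

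For the inductive step, I would invoke the idempotence $(1-x_j)^2=1-x_j$, valid for $x_j\in\{0,1\}$ and hence for every $\mathbf{x}\in\domain$. This yields $\prod_{j\in A}(1-x_j)\prod_{j\in B}(1-x_j)=\prod_{j\in A\cup B}(1-x_j)$ for arbitrary literal sets $A,B$, so the product of two W-DNF polynomials $\sum_{s\in\mathcal{S}}\beta_s\prod_{j\in\mathcal{I}(s)}(1-x_j)$ and $\sum_{t\in\mathcal{T}}\gamma_t\prod_{j\in\mathcal{J}(t)}(1-x_j)$ equals $\sum_{(s,t)\in\mathcal{S}\times\mathcal{T}}\beta_s\gamma_t\prod_{j\in\mathcal{I}(s)\cup\mathcal{J}(t)}(1-x_j)$ as functions on $\domain$ --- again W-DNF, with coefficients still positive after collecting $(s,t)$ pairs that induce the same literal set, since we only ever add positive numbers. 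Applying this with one factor equal to $\rho_e^{k-1}$ (W-DNF by the induction hypothesis, with coefficients polynomial in $\bm{\lambda}$) and the other equal to $\rho_e$ closes the induction and tracks the $\bm{\lambda}$-dependence.

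I do not anticipate a genuine obstacle; the one point that deserves care is conceptual rather than computational --- the asserted representation of $\rho_e^k$ is an identity of functions on the binary domain $\domain$, not a formal polynomial identity, and it is exactly binariness (through idempotence) that prevents cross terms such as $x_j^2$ from appearing and thus keeps the product within the W-DNF class.
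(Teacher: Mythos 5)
Your proof is correct and follows essentially the same route as the paper's: establish the base case $k=1$ directly from \eqref{lamr} and \eqref{eq:rho_X}, then close the induction by observing that W-DNF polynomials on the binary domain are closed under multiplication because of the idempotence $(1-x_j)^2=(1-x_j)$. The remark that the identity is one of functions on $\domain$ rather than a formal polynomial identity is exactly the point the paper is implicitly relying on, and your handling of possibly zero or coinciding coefficients is a small but correct refinement over the paper's statement.
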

\begin{proof}[Proof (Sketch)]
The lemma holds for $k=1$ by \eqref{lamr} and \eqref{eq:rho_X}. The lemma follows by induction, as W-DNF polynomials over binary  $\mathbf{x}\in \mathcal{D}$ are closed under multiplication; this is because $(1-x)^\ell=(1-x)$ for all $\ell\geq 1$ when $x\in\{0,1\}$.
\end{proof}
  Hence, \emph{all load powers are W-DNF polynomials}\techrep{; a detailed proof be found in Appendix~\ref{app:wdnfmomentslemma}}{}.
Expectations of  W-DNF polynomials have a remarkable property:
\begin{lemma}\label{lemma:compute} Let  $f:\domain_{\bm{\lambda}}\to\reals$ be a W-DNF polynomial, and let $\mathbf{x}\in \domain$ be a random vector of independent Bernoulli coordinates parameterized by $\mathbf{y}\in \tilde{\domain}$. Then $\expect_{\mathbf{y}}[f(\mathbf{x})]=f(\mathbf{y})$, where $f(\mathbf{y})$ is the evaluation of the W-DNF polynomial representing $f$ over the real vector $\mathbf{y}$.  
\end{lemma}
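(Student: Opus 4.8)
The plan is to reduce the claim to a one-line computation by exploiting the defining feature of W-DNF polynomials: each monomial is a product of \emph{distinct} negative literals, so expectation commutes with the product coordinate by coordinate. First I would write $f$ in the form \eqref{wdnf}, $f(\mathbf{x}) = \sum_{s\in\mathcal{S}}\beta_s\prod_{j\in I(s)}(1-x_j)$ with each $I(s)\subseteq\{1,\dots,d\}$, and invoke linearity of expectation to get $\expect_{\mathbf{y}}[f(\mathbf{x})] = \sum_{s\in\mathcal{S}}\beta_s\,\expect_{\mathbf{y}}\big[\prod_{j\in I(s)}(1-x_j)\big]$; it then suffices to evaluate the expectation of a single monomial.

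Fix $s\in\mathcal{S}$. The key step is that the factors $1-x_j$, $j\in I(s)$, involve pairwise distinct coordinates of $\mathbf{x}$ (since $I(s)$ is a set), and these coordinates are independent Bernoulli variables under $\mathbf{y}$ with $\expect_{\mathbf{y}}[x_j]=\prob(x_j=1)=y_j$. Hence the expectation of the product factorizes, $\expect_{\mathbf{y}}\big[\prod_{j\in I(s)}(1-x_j)\big] = \prod_{j\in I(s)}(1-y_j)$, and summing over $s$ with weights $\beta_s$ recovers precisely $f(\mathbf{y})$, the evaluation of the same W-DNF expression at the real vector $\mathbf{y}$.

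I do not expect a genuine obstacle here; the one point demanding care — and the reason the statement is restricted to W-DNF polynomials — is the distinctness of literals within each monomial. If some $x_j$ occurred with multiplicity $\ell\ge 2$, the factor $(1-x_j)^\ell$ would have $\expect_{\mathbf{y}}[(1-x_j)^\ell]=(1-y_j)\ne (1-y_j)^\ell$ for $y_j\in(0,1)$, so the identity would fail. This is harmless because over binary inputs $(1-x_j)^\ell=(1-x_j)$, so products of W-DNF polynomials (e.g.\ the load powers $\rho_e^k$ of Lemma~\ref{wdnfmoments}) must be reduced to idempotent W-DNF form before being evaluated at a fractional $\mathbf{y}$. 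Equivalently, one can observe that the W-DNF expansion is a multilinear polynomial, that the multilinear extension of any multilinear polynomial equals $\expect_{\mathbf{y}}[\cdot]$, and that the multilinear representation of a function on $\{0,1\}^d$ is unique, so $f(\mathbf{y})$ is well-defined and the claim is in fact representation-independent.
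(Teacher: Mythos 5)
Your proof is correct and follows exactly the same route as the paper's: expand $f$ in W-DNF form, apply linearity of expectation, and use independence of the (distinct) Bernoulli coordinates within each monomial to factor the expectation of the product into a product of expectations, yielding $f(\mathbf{y})$. The extra remarks on why distinctness of literals matters and on uniqueness of the multilinear representation are sound observations but are not needed beyond what the paper already does.
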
 
\techrep{
\begin{proof}
As $f$ is W-DNF, it can be written as 
\begin{align*}f(\mathbf{x})=\sum_{s\in \mathcal{S}}\beta_s \prod_{t\in \mathcal{I}(s)}(1-x_t)
\end{align*}
for appropriate $\mathcal{S}$, and appropriate $\beta_s,\mathcal{I}(s)$, where $s\in\mathcal{S}$. Hence,
\begin{align*}\expect_{\mathbf{y}}[f(\mathbf{x})]&=\sum_{s\in \mathcal{S}}\beta_s \expect_{\mathbf{y}}\left[\prod_{t\in \mathcal{I}(s)}(1-x_t)\right]\\
& = \sum_{s\in \mathcal{S}}\beta_s \prod_{t\in \mathcal{I}(s)}(1-\expect_{\mathbf{y}}[x_t]), \text{ by independence}\\
& = \sum_{s\in \mathcal{S}}\beta_s \prod_{t\in \mathcal{I}(s)}(1-y_t).\qedhere
\end{align*}
\end{proof}
}{
\begin{proof}
%As $f$ is W-DNF, it can be written as 
%\begin{align*}f(\mathbf{x})=\sum_{r\in \mathcal{S}}\beta_r \prod_{t\in \mathcal{I}(r)}(1-x_t)
%\end{align*}
%for appropriate $\mathcal{S}$, and appropriate $\beta_r,\mathcal{I}(r)$, for $r\in\mathcal{S}$. Hence,
By \eqref{wdnf} and independence, we have
%\begin{align*}
$\expect_{\mathbf{y}}[f(\mathbf{x})]=\sum_{s\in \mathcal{S}}\beta_s \expect_{\mathbf{y}}\left[\prod_{t\in \mathcal{I}(s)}(1-x_t)\right] %\\
 = \sum_{s\in \mathcal{S}}\beta_s \prod_{t\in \mathcal{I}(s)}(1-\expect_{\mathbf{y}}[x_t]) %, \text{ by independence}\\
 = \sum_{s\in \mathcal{S}}\beta_s \prod_{t\in \mathcal{I}(s)}(1-y_t).$%\qedhere
%\end{align*}
\end{proof}}

Lemma~\ref{lemma:compute} states that, to compute the expectation of a W-DNF polynomial $f$ over i.i.d.~Bernoulli variables with expectations $\mathbf{y}$, it suffices to \emph{evaluate $f$ over input $\mathbf{y}$}. Expectations computed this way therefore do not require sampling. %\techrep{The proof can be found in Appendix~\ref{proofoflemma:compute}.}{} 

We  leverage this property to approximate $\nabla G(\bm{y})$ by taking the Taylor expansion of the cost functions $C_e$ at each edge $e\in E$. This allows us to write $C_e$ as a power series w.r.t.~$\rho_e^k$, $k\geq 1$; from Lemmas~\ref{wdnfmoments} and~\ref{lemma:compute}, we can compute the expectation of this series in a closed form. In particular, by expanding the series and rearranging terms it is easy to show the following\techrep{ lemma, which is proved in Appendix~\ref{proofoftaylorlemma}}{}:
\begin{lemma}\label{taylorlemma}
 Consider a cost function $C_e:[0,1)\to\reals_+$ which satisfies Assumption~\ref{as:conv} and for which the Taylor expansion exists at some $\rho^*\in[0,1)$. Then, for  $\mathbf{x} \in \domain$ a random Bernoulli vector parameterized by $\mathbf{y}\in \tilde{\domain}$,
\begin{align}
\!\!\!\frac{\partial G(\mathbf{y})}{\partial y_{vi}} \!\!\approx\!\!
\sum_{e\in E}\! \sum_{k=1}^{L}\!\alpha^{(k)}_e \!\!\left[\rho^k_e\left([\mathbf{y}]_{-(v,i)},\!\bm{\lambda}\right) \!-\!\rho^k_e\left([\mathbf{y}]_{+(v,i)},\!\bm{\lambda}\right)\right]\!\!\!%\nonumber\\
%&\quad+ o((\rho-\rho^*)^L)
\label{eq:taylorapprox2}
\end{align}
where, 	$\textstyle\alpha^{(k)}_e  = \sum_{i=k}^{L} \frac{(-1)^{i-k}\binom{i}{k}}{i!}C^{(i)}_{e}(\rho^*)(\rho^*)^{i-k}$  for $k=0,1,\cdots,L,$ and the error of the approximation is:
%\begin{align*}
$\textstyle\frac{1}{(L+1)!}\sum_{e\in E}C^{(L+1)}_{e}(\rho') \Big[ \expect_{[\mathbf{y}]_{-{v,i}}}[(\rho_e(\mathbf{x},\bm{\lambda})-\rho^*)^{L+1}] %\\
\textstyle-  \expect_{[\mathbf{y}]_{+{v,i}}}[(\rho_e(\mathbf{x},\bm{\lambda})-\rho^*)^{L+1}]\Big]. $%\end{split}\end{align*}

%and  $\rho^k_e$ corresponds to the W-DNF polynomial representing the $k$-th power of $\rho_e$, evaluated over the fractional variables $\mathbf{y}$, with $y_{vi}$ set to 0 and 1 for each of the two terms in \eqref{eq:taylorapprox2}. 
\end{lemma}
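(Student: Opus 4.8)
The plan is to differentiate the multilinear extension in closed form via \eqref{eq:deriv_G}, Taylor-expand each edge cost around $\rho^*$, and then collapse every resulting expectation to a deterministic evaluation by exploiting the W-DNF structure of the load powers. Concretely, since $F(\mathbf{x})=C(\mathbf{x}_0)-C(\mathbf{x})$ with $C(\mathbf{x})=\sum_{e\in E}C_e(\rho_e(\mathbf{x},\bm{\lambda}))$, equation \eqref{eq:deriv_G} rewrites as
\[
\frac{\partial G(\mathbf{y})}{\partial y_{vi}}=\sum_{e\in E}\Big(\expect_{[\mathbf{y}]_{-(v,i)}}\!\big[C_e(\rho_e(\mathbf{x},\bm{\lambda}))\big]-\expect_{[\mathbf{y}]_{+(v,i)}}\!\big[C_e(\rho_e(\mathbf{x},\bm{\lambda}))\big]\Big),
\]
where the two conditional expectations are taken with $x_{vi}$ pinned to $0$ and $1$ respectively, and the remaining coordinates independent Bernoulli with parameters $\mathbf{y}$.

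First I would invoke Taylor's theorem with Lagrange remainder for $C_e$ around $\rho^*$: for every $\rho$ in the relevant sub-interval of $[0,1)$, $C_e(\rho)=\sum_{i=0}^{L}\tfrac{C_e^{(i)}(\rho^*)}{i!}(\rho-\rho^*)^{i}+\tfrac{C_e^{(L+1)}(\xi)}{(L+1)!}(\rho-\rho^*)^{L+1}$ for some $\xi$ between $\rho$ and $\rho^*$. This is legitimate at every realization, because $\mathbf{y}\geq\mathbf{x}_0$ and $\mathbf{x}_0$ is integral force the Bernoulli vector to satisfy $\mathbf{x}\geq\mathbf{x}_0$ almost surely, hence $\mathbf{x}\in\domain_{\bm{\lambda}}$ and $\rho_e(\mathbf{x},\bm{\lambda})<1$. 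Expanding $(\rho-\rho^*)^{i}=\sum_{k=0}^{i}\binom{i}{k}(-1)^{i-k}(\rho^*)^{i-k}\rho^{k}$ and swapping the summations over $i$ and $k$ regroups the degree-$L$ part of the expansion as $\sum_{k=0}^{L}\alpha_e^{(k)}\rho^{k}$ with exactly the stated coefficients $\alpha_e^{(k)}$; substituting $\rho=\rho_e(\mathbf{x},\bm{\lambda})$ splits each $\expect[C_e(\rho_e)]$ into a polynomial part plus a remainder part.

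Next, for the polynomial part I would apply Lemma~\ref{wdnfmoments} to write each $\rho_e^{k}(\mathbf{x},\bm{\lambda})$ as a W-DNF polynomial, and Lemma~\ref{lemma:compute} to replace $\expect_{[\mathbf{y}]_{\pm(v,i)}}[\rho_e^{k}(\mathbf{x},\bm{\lambda})]$ by the deterministic evaluation $\rho_e^{k}([\mathbf{y}]_{\pm(v,i)},\bm{\lambda})$ (the only check needed is that pinning a coordinate to $0$ or $1$ keeps us in the regime where the Bernoulli/W-DNF correspondence applies, which is immediate since the proof of Lemma~\ref{lemma:compute} only uses independence). The $k=0$ term equals $1$ under both conditionings and cancels, so the polynomial contribution is precisely the right-hand side of \eqref{eq:taylorapprox2}. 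The leftover is $\tfrac{1}{(L+1)!}\sum_{e}\big(\expect_{[\mathbf{y}]_{-(v,i)}}[C_e^{(L+1)}(\xi_e(\mathbf{x}))(\rho_e(\mathbf{x},\bm{\lambda})-\rho^*)^{L+1}]-\expect_{[\mathbf{y}]_{+(v,i)}}[\,\cdot\,]\big)$, which, after factoring a representative value $C_e^{(L+1)}(\rho')$ out of the expectation, is the error expression displayed in the statement.

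The step I expect to be the real obstacle is this last one: pulling $C_e^{(L+1)}(\rho')$ outside the expectation is not literally valid, since the intermediate point $\xi_e(\mathbf{x})$ depends on the random vector $\mathbf{x}$. To make the displayed error expression rigorous one must either keep $\xi_e(\mathbf{x})$ inside the expectation and control it via $\sup|C_e^{(L+1)}|$ over the sub-interval of $[0,1)$ spanned by $\rho^*$ and the (almost surely bounded-away-from-$1$) loads $\rho_e(\mathbf{x},\bm{\lambda})$, or use the integral form of the remainder together with the mean value theorem for integrals to extract a single $\rho'$. Everything else — the binomial rearrangement, linearity of expectation, and the cancellation of the $k=0$ term — is routine bookkeeping, and crucially the approximation incurs \emph{no} sampling error: the sole source of inexactness is the truncation of the Taylor series at order $L$.
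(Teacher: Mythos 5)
Your proof takes essentially the same route as the paper's: Taylor-expand $C_e$ at $\rho^*$ with Lagrange remainder, binomially regroup the powers of $\rho$ to obtain the coefficients $\alpha_e^{(k)}$, then combine \eqref{eq:deriv_G} with Lemmas~\ref{wdnfmoments} and~\ref{lemma:compute} to replace the conditional expectations of load powers by deterministic evaluations at $[\mathbf{y}]_{\pm(v,i)}$. Your flag on the remainder is well taken — the intermediate point $\rho'$ depends on the realization of $\mathbf{x}$, so $C_e^{(L+1)}(\rho')$ cannot strictly be factored outside the expectation — but the paper renders this harmless in the proof of Theorem~\ref{theorem:converge} by immediately replacing $C_e^{(L+1)}(\rho')$ with the uniform bound $W$, which is exactly the first of the two repairs you propose.
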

 Estimator \eqref{eq:taylorapprox2} is \emph{deterministic}: no random sampling is required. Moreover, Taylor's theorem allows us to characterize the error  (i.e., the \emph{bias}) of this estimate. We use this to characterize the final fractional solution $\mathbf{y}$ produced by Alg.~\ref{alg:cg}:
  	\begin{theorem}\label{theorem:converge}
	     Assume that all $C_e$, $e\in E$, satisfy Assumption~\ref{as:conv}, are $L+1$-differentiable, and that all their $L+1$ derivatives are bounded by $W\geq 0$. Then, consider Alg.~\ref{alg:cg}, in which $\nabla G(\mathbf{y}_k)$ is estimated via the Taylor estimator \eqref{eq:taylorapprox2}, where each edge cost function is approximated at $\rho_e^* = \expect_{\mathbf{y}_k}[\rho_e(\mathbf{x},\bm{\lambda})]=\rho_e(\mathbf{y}_k,\bm{\lambda}).$
Then,  % of the algorithm, 
%		For a continuous submodular objective function $G$ with $L$-Lipschitz continuous gradient, we are given an estimator $z=[z_{vi}]_{v\in V,i\in \mathcal{C}}\in \mathbb{R}^{|V||\mathcal{C}|}$, where  $|\mathbb{E}\left[z|\mathbf{y}\right] -\triangledown G(\mathbf{y})|\leq B$, and, $\mathbb{E}\left[||z||_2^2|\mathbf{y}\right]\leq \sigma^2$.
%		For a given number of iterations $K$, or alternatively, for a given constant stepsize $\gamma= K^{-1}$, Algorithm \ref{alg:caching} produces the following optimality guarantee:	
		\begin{equation}\label{eq:optimal}
		\textstyle G(\mathbf{y}_{K}) \geq (1-\frac{1}{e})G(\mathbf{y}^*) - 2D B- \frac{P}{2K},
		\end{equation}
		where  $K=\frac{1}{\gamma}$ is the number of iterations, $\mathbf{y}^*$ is an optimal solution to \eqref{eq:nonconv},  
$D = \max_{\mathbf{y}\in \tilde{\domain}} \|\mathbf{y}\|_2\leq |V|\cdot\max\limits_{v\in{V}}c_v,$ is the diameter of  $\tilde{\domain}$, $B\leq \frac{W|E|}{(L+1)!}$ is the bias of estimator \eqref{eq:taylorapprox2}, and $P=2C(\mathbf{x}_0),$ is a Lipschitz constant of $\nabla G$.
	\end{theorem}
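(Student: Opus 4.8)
The plan is to combine the standard convergence analysis of the continuous‐greedy algorithm with a perturbation argument that tracks how the bias of the Taylor estimator propagates through the iterations. The skeleton is the classical one (see Calinescu et al.~\cite{calinescu2011}): at each step $k$, the direction $\mathbf{m}_k$ is an exact maximizer of $\langle \mathbf{m}, \widehat{\nabla G(\mathbf{y}_k)}\rangle$ over $\tilde{\domain}$, where $\widehat{\nabla G}$ is the Taylor estimator \eqref{eq:taylorapprox2}. First I would establish the per‐step improvement inequality. Writing $\mathbf{g}_k=\nabla G(\mathbf{y}_k)$ and $\hat{\mathbf{g}}_k=\widehat{\nabla G(\mathbf{y}_k)}$ with $\|\mathbf{g}_k-\hat{\mathbf{g}}_k\|_\infty\le B$ (using Lemma~\ref{taylorlemma} and the bound $W$ on the $(L+1)$-st derivatives, which yields $B\le \frac{W|E|}{(L+1)!}$ after summing the error terms over $e\in E$ and bounding each $\expect[(\rho_e-\rho^*)^{L+1}]\le 1$), I would show that for the optimal $\mathbf{y}^*$,
\begin{align*}
\langle \mathbf{m}_k,\mathbf{g}_k\rangle \;\ge\; \langle \mathbf{y}^*-\mathbf{y}_k,\mathbf{g}_k\rangle \;-\; 2 D B,
\end{align*}
because $\mathbf{m}_k$ beats $\mathbf{y}^*-\mathbf{y}_k$ (which is also feasible in $\tilde{\domain}$ up to the translation baked into the constraint set) on the estimated gradient, and swapping $\hat{\mathbf{g}}_k$ for $\mathbf{g}_k$ costs at most $\|\mathbf{m}_k\|_1 B + \|\mathbf{y}^*-\mathbf{y}_k\|_1 B$, which I bound by $2DB$ via Cauchy–Schwarz and $\|\mathbf{y}\|_2\le D$ on $\tilde\domain$. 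The monotonicity and submodularity of $F$ (Corollary following Thm.~\ref{thm:coststruct}) give the usual inequality $\langle \mathbf{y}^*-\mathbf{y}_k,\mathbf{g}_k\rangle \ge G(\mathbf{y}^*)-G(\mathbf{y}_k)$.

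Next I would quantify the smoothness loss from taking a finite step $\gamma_k=\gamma=1/K$. Since $G$ restricted to $\tilde\domain$ is the multilinear extension of $F$ and $F$ is bounded by $C(\mathbf{x}_0)$, the Hessian of $G$ is bounded, so $\nabla G$ is Lipschitz; one gets $G(\mathbf{y}_{k+1})\ge G(\mathbf{y}_k)+\gamma\langle \mathbf{m}_k,\mathbf{g}_k\rangle - \frac{P}{2}\gamma^2$ with $P=2C(\mathbf{x}_0)$ serving as the Lipschitz constant of $\nabla G$ along the trajectory (this is where the constant $P$ in the statement comes from). Combining with the previous display gives the recursion
\begin{align*}
G(\mathbf{y}^*)-G(\mathbf{y}_{k+1}) \;\le\; (1-\gamma)\bigl(G(\mathbf{y}^*)-G(\mathbf{y}_k)\bigr) \;+\; 2\gamma D B \;+\; \tfrac{P}{2}\gamma^2.
\end{align*}
Unrolling this over $k=0,\dots,K-1$ with $\gamma=1/K$, using $(1-1/K)^K\le 1/e$, and summing the geometric-type error terms (the $2\gamma DB$ terms sum to at most $2DB$ and the $\frac{P}{2}\gamma^2$ terms sum to $\frac{P}{2K}$), yields exactly \eqref{eq:optimal}: $G(\mathbf{y}_K)\ge (1-1/e)G(\mathbf{y}^*)-2DB-\frac{P}{2K}$. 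The bound $D\le |V|\cdot\max_{v}c_v$ follows since every $\mathbf{y}\in\tilde\domain$ has at most $\sum_v c_v$ worth of mass with coordinates in $[0,1]$, so $\|\mathbf{y}\|_2\le\|\mathbf{y}\|_1\le \sum_v c_v\le |V|\max_v c_v$.

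The main obstacle I anticipate is not the convergence recursion itself — that is textbook — but rigorously controlling the bias term $B$ uniformly over all iterates $\mathbf{y}_k$. The subtlety is that the Taylor expansion in Lemma~\ref{taylorlemma} is taken at the point $\rho_e^*=\rho_e(\mathbf{y}_k,\bm{\lambda})$, which changes each iteration, and one must verify that the Lagrange remainder is controlled by $W$ regardless of where $\rho'$ (the intermediate point) lies in $[0,1)$ — in particular that we never approach the pole at $\rho_e=1$. This is handled by the feasibility constraint $\mathbf{y}_k\in\tilde\domain\subseteq\conv(\domain_{\bm{\lambda}})$, which keeps all loads bounded away from $1$, but making the dependence of $W$ on this gap explicit (and confirming it is finite) requires care; this is precisely why the theorem states the hypothesis ``all $L{+}1$ derivatives are bounded by $W$'' as an assumption rather than deriving it. A secondary technical point is justifying that $\mathbf{m}_k$, the exact LP maximizer against the \emph{estimated} gradient, can be compared to $\mathbf{y}^*-\mathbf{y}_k$ within the shifted polytope $\tilde\domain$; this uses the same argument as in \cite{calinescu2011} adapted to the constraint $\mathbf{y}\ge\mathbf{x}_0$, which only shifts the feasible region and does not affect the variational inequality.
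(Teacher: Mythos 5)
Your proposal is correct and follows essentially the same route as the paper's own proof: bound the bias of the Taylor estimator via the Lagrange remainder, bound the curvature of $G$ (bounded Hessian from the multilinear structure, giving $P=2C(\mathbf{x}_0)$), derive a per-step improvement inequality via the variational argument plus two applications of Cauchy--Schwarz paying $DB$ each, and unroll the recursion with $\gamma=1/K$. The only point to tighten is the choice of comparison direction: the paper uses $\mathbf{m}^*=(\mathbf{y}^*-\mathbf{y}_k)\vee \mathbf{0}$ rather than $\mathbf{y}^*-\mathbf{y}_k$ (which can have negative coordinates and thus lie outside $\tilde{\domain}$, and for which the inequality $\langle\mathbf{y}^*-\mathbf{y}_k,\nabla G(\mathbf{y}_k)\rangle\ge G(\mathbf{y}^*)-G(\mathbf{y}_k)$ need not hold); you flag this at the end as a deferred technicality, and indeed replacing your comparison vector by its positive part, together with monotonicity $G(\mathbf{y}_k\vee\mathbf{y}^*)\ge G(\mathbf{y}^*)$ and component-wise concavity, closes it cleanly.
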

	\techrep{The proof can be found in Appendix~\ref{app:3}.}{} The theorem immediately implies that we can replace \eqref{eq:taylorapprox2} as an estimator in Alg.~\ref{alg:cg}, and attain an approximation arbitrarily close to $1-1/e$.

	\noindent\textbf{Estimation via Power Series.} For arbitrary $L+1$-differentiable cost functions $C_e$, the estimator \eqref{eq:taylorapprox2} can be leveraged by replacing $C_e$ with its Taylor expansion. In the case of queue-dependent cost functions, as described in Example 4 of Section~\ref{sec:mincost}, the power-series \eqref{eq:power} can be used instead. For example, the expected queue size (Example 1, Sec.~\ref{sec:mincost}), is given by $C_e(\rho_e) = \frac{\rho_e}{1-\rho_e} =\sum_{k=1}^\infty \rho_e^k.$
In contrast to the Taylor expansion, this power series does not depend on a point $\rho^*_e$ around which the function $C_e$ is approximated.

	\section{Beyond M/M/1 queues}\label{sec:beyond}
    	As discussed in Section~\ref{sec:model}, the classes of M/M/1 queues for which the supermodularity of the cost functions arises is quite broad, and includes FIFO, LIFO, and processor sharing queues. In this section, we discuss how our results extend to even broader families of queuing networks.  %All proofs can be found in \techrep{the appendix}{\cite{techrep}}. %Specifically, our results extend to general networks of queues that have a product form steady-state distribution, under which the objective is to minimize a supermodular cost function.
		\techrep{Chapter 3 of Kelly~\cite{kelly} provides a general framework for a set of queues for which service times are exponentially distributed; for completeness, we also summarize this  in Appendix~\ref{app:kelly}. %Differing from Section~\ref{sec:model}, service rates $\mu_e$ in this model are allowed to depend on network state. %Kelly shows that this class of queues indeed have a product form distribution. Please refer to Appendix \ref{app:kelly} for the definition of the queues and their steady-state distribution.
		%Examples of such queues includes M/M/1, M/M/k, M/M/m/k. 
		A large class of networks can be modeled by this framework, including networks of M/M/$k$ queues; all such networks maintain the property that steady-state distributions have a product form.  
This allows us to}{We first} extend our results to M/M/$k$ queues for two cost functions $C_e$: %As in the case of M/M/1 networks, the first is the expected number of packets in a queue. The second is the \emph{probability of queuing}, i.e., the probability that an arriving packet encounters all $k$ servers to be busy (given by the so-called Erlang C formula \cite{datanetworks}). Our results readily extend to both of these cases: %One objective function which is known as Erlang C formula \cite{datanetworks}, measures the probability that an arriving packet will find all $k$ servers in the queue busy, and hence, is forced to wait in the queue. The other function computes the expected number of packets waiting for or under transmission in the queue.
		\begin{lemma}\label{lemma:mmk}
			For a network of M/M/k queues, both the  queuing probability\footnote{This is %the probability that an arriving packet finds all $k$ servers to be busy (
			given by the so-called Erlang C formula \cite{datanetworks}.} and the expected queue size are non-increasing and supermodular over sets $\{\supp(\mathbf{x}): \mathbf{x}\in \domain_{\bm{\lambda}}\}$.
		\end{lemma}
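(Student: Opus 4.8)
The plan is to reduce the lemma to the composition argument already used in the proof of Theorem~\ref{thm:coststruct}. For a queue at edge $e\in E$ served as an M/M/$k_e$ queue, let $a_e(\mathbf{x},\bm{\lambda})$ denote the offered load on $e$, defined exactly as $\rho_e$ in \eqref{eq:rho_X} from the class arrival rates \eqref{lamr}; the stability region is now $\domain_{\bm{\lambda}}=\{\mathbf{x}\in\domain: a_e(\mathbf{x},\bm{\lambda})<k_e,\ \forall e\in E\}$. Exactly as in the M/M/1 case (cf.~Lemma~\ref{wdnfmoments} and the proof of Theorem~\ref{thm:coststruct}), the set function $S\mapsto a_e(\mathbf{x}_S,\bm{\lambda})$ is non-increasing and supermodular on $\{\supp(\mathbf{x}):\mathbf{x}\in\domain_{\bm{\lambda}}\}$, since it differs from the M/M/1 load only by the positive constant $1/k_e$. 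Both the queuing probability (the Erlang~C formula) and the expected queue size of an M/M/$k$ queue are functions of the offered load alone, say $a\mapsto P_Q(k,a)$ and $a\mapsto N(k,a)$ on $[0,k)$. Hence, by the composition fact invoked in the proof of Theorem~\ref{thm:coststruct} (a convex, non-decreasing function composed with a non-increasing supermodular set function is non-increasing and supermodular), it suffices to show that $P_Q(k,\cdot)$ and $N(k,\cdot)$ are non-decreasing and convex on $[0,k)$, i.e.\ that Assumption~\ref{as:conv} holds for these two cost functions once they are written in terms of the offered load.

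Monotonicity is the easy part: increasing the offered load can only increase the probability of finding all servers busy and the mean number of customers; this follows directly from the Erlang~C formula, or from a standard coupling of two M/M/$k$ systems with $a\le a'$. For the expected queue size, convexity then follows from the Erlang~C case: the expected number waiting is $P_Q(k,a)\cdot\frac{a}{k-a}$ and the expected number in the system is $a$ plus this quantity; on $[0,k)$ the factor $a\mapsto\frac{a}{k-a}$ is non-negative, non-decreasing and convex, a product of non-negative non-decreasing convex functions is convex, and adding the linear term $a$ preserves convexity and monotonicity.

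The main obstacle is therefore showing that the Erlang~C formula $a\mapsto P_Q(k,a)$ is convex on $[0,k)$. I would establish this by induction on $k$, using the recursive structure of the Erlang formulas: the Erlang~B recursion $B(k,a)=\frac{aB(k-1,a)}{k+aB(k-1,a)}$ together with the identity $P_Q(k,a)=\frac{kB(k,a)}{k-a\,(1-B(k,a))}$ lets one propagate monotonicity and convexity from $k-1$ to $k$, with base case $P_Q(1,a)=a$, which is linear. This convexity property of the Erlang delay formula is classical (see, e.g., Harel's analysis of the M/M/$k$ delay), so in the final write-up it may simply be cited rather than rederived. Once it is in hand, Lemma~\ref{lemma:mmk} follows immediately from the reduction in the first paragraph.
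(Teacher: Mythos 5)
Your proposal is correct and follows essentially the same route as the paper: express the per-edge offered load $a_e$ in terms of the placement exactly as for M/M/1 (normalized by $k$), observe that it is non-increasing and supermodular, note that the Erlang~C probability and the expected queue size are non-decreasing convex functions of $a_e$ on the stability interval, and invoke the composition argument of Theorem~\ref{thm:coststruct}. The paper handles the convexity step by citing Lee and Cohen~\cite{erlangC} for both quantities (and Grassmann~\cite{MMK} for an alternative proof of convexity of the mean queue size), which is precisely the fallback you yourself flag at the end — citing the classical Erlang~C convexity result rather than rederiving it. Your extra observation, that the mean-queue-size convexity can be pushed through from Erlang~C convexity via the product-of-nonnegative-nondecreasing-convex-functions rule applied to $P_Q(k,a)\cdot\frac{a}{k-a}$, is a nice self-contained reduction the paper does not spell out, but it is not a genuinely different proof strategy; the inductive Erlang~B$\to$Erlang~C derivation you sketch is not carried out and would need care, so in practice you would, as you say, cite the classical result just as the paper does.
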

		%\begin{proof}
	        %The proof can be found  in \techrep{Appendix \ref{app:mmk}.}{\cite{techrep}.} 
We note that, as an immediate consequence of Lemma~\ref{lemma:mmk} and Little's theorem, both the sum of the expected delays per queue, but also the expected delay of an arriving packet, are also  supermodular and non-decreasing.
		%\end{proof}
		
		Product-form  steady-state distributions arise also in settings where service times are not exponentially distributed. A large class  of quasi-reversible queues, named {\em symmetric queues} exhibit this property (c.f.~Section 3.3 of \cite{kelly} and Chapter 10 of \cite{nelson}). \techrep{For completeness, we again summarize symmetric queues in Appendix \ref{app:symmetric}.}{}		%A large set of queues can be modeled as symmetric queues. For instance, it can be shown that Last-In First-Out (LIFO) with preemptive resume,  queues with shared processors, and finite-capacity queues can all be modeled as symmetric queues.  An interesting example of queues with general packet scheduling distribution is one deterministic service effort, since transmission time of packets with equal size, being approximately proportional to their size, is deterministic. 
In the following lemma %, whose proof is in \techrep{Appendix~\ref{app:proof-symmetric}}{\cite{\techrep}},
 we leverage the product form of symmetric queues to extend our results to M/D/1 symmetric queues \cite{datanetworks}.
		\begin{lemma}\label{lemma:symmetric}
			For a network of M/D/1 symmetric queues, the expected queue size is non-increasing and supermodular over sets $\{\supp(\mathbf{x}): \mathbf{x}\in \domain_{\bm{\lambda}}\}$.
		\end{lemma}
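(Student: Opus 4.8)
\textbf{Proof plan for Lemma~\ref{lemma:symmetric}.}
The plan is to reduce the M/D/1 symmetric case to the structural result we already have, namely Theorem~\ref{thm:coststruct}, by showing that the expected queue size of an M/D/1 symmetric queue is, just as for M/M/1, a convex non-decreasing function of its load $\rho_e$. Once that is established, the argument is identical to the M/M/1 case: by Lemma~\ref{wdnfmoments} (more precisely, by the fact already used in the proof sketch of Theorem~\ref{thm:coststruct}) the load $\rho_e(\mathbf{x},\bm{\lambda})$ is non-increasing and supermodular in $\mathbf{x}$ over $\{\supp(\mathbf{x}):\mathbf{x}\in\domain_{\bm{\lambda}}\}$, and composing a non-increasing supermodular function with a convex non-decreasing scalar function preserves supermodularity and reverses nothing about monotonicity. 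So the whole lemma hinges on one scalar inequality about a single queue.

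First I would invoke the product-form result for networks of symmetric queues (Section 3.3 of \cite{kelly}, summarized in the appendix): for a symmetric queue fed by Poisson traffic, the steady-state marginal of the number of customers in queue $e$ depends on the arrival and service processes only through the load $\rho_e$, and in fact the queue-size distribution is geometric, $\prob(n_e=n)=(1-\rho_e)\rho_e^n$, exactly as in M/M/1 — this is the key feature of symmetric queues that makes the service-time distribution irrelevant to the marginal occupancy law (only the mean service time enters, through $\rho_e$). Consequently $\expect[n_e] = \rho_e/(1-\rho_e)$, which is precisely the function from Example~1 of Section~\ref{sec:mincost} and is manifestly convex and non-decreasing on $[0,1)$. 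Hence $C_e(\rho_e)=\rho_e/(1-\rho_e)$ satisfies Assumption~\ref{as:conv}.

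With that in hand, the second step is a direct application of Theorem~\ref{thm:coststruct} (or rather its per-edge content): since $\rho_e(\cdot,\bm{\lambda})$ is non-increasing and supermodular on $\{\supp(\mathbf{x}):\mathbf{x}\in\domain_{\bm{\lambda}}\}$, and $C_e$ is convex non-decreasing, the composition $\mathbf{x}\mapsto C_e(\rho_e(\mathbf{x},\bm{\lambda}))$ is non-increasing and supermodular; summing over $e\in E$ preserves both properties, so the expected total queue size $\sum_{e\in E}\expect[n_e]$ is non-increasing and supermodular, which is the claim (up to the usual passage between $C$ and the caching gain $F=C(\mathbf{x}_0)-C$, which flips non-increasing/supermodular to non-decreasing/submodular as in the Corollary).

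The main obstacle is entirely in the first step: one must be careful that the relevant ``symmetric queue'' really does yield a geometric occupancy distribution with the M/M/1 formula, since a generic M/D/1 queue (FIFO, say) is \emph{not} symmetric and its mean queue size is given instead by the Pollaczek–Khinchine formula $\rho + \rho^2/(2(1-\rho))$ — still convex and non-decreasing, incidentally, so even if one wanted to argue via that expression the composition argument would go through, but the clean reduction uses the symmetric (e.g.\ processor-sharing or LCFS-PR) discipline for which the distribution coincides with the M/M/1 one. I would state explicitly which symmetric discipline is meant, cite the product-form / insensitivity result from \cite{kelly,nelson}, and note that the only property of $C_e$ actually used downstream is convexity and monotonicity, which holds in either reading.
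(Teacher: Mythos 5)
Your argument reaches the right conclusion, but it rests on a different closed form for the per-queue cost than the paper's proof does, and that difference is worth flagging because it touches on a real subtlety in the lemma's statement. The paper's proof writes the expected occupancy of an M/D/1 queue directly from the Pollaczek--Khinchine expression,
\[
\mathbb{E}[n_e]=\rho_e+\frac{\rho_e^2}{2(1-\rho_e)},
\]
observes that this is convex and non-decreasing on $[0,1)$, and then invokes Theorem~\ref{thm:coststruct}. You instead invoke the insensitivity property of symmetric queues to argue that the occupancy marginal is geometric and hence $\mathbb{E}[n_e]=\rho_e/(1-\rho_e)$, exactly as in M/M/1, and apply Theorem~\ref{thm:coststruct} from there. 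These are genuinely different per-edge cost functions. Your reading is actually the more internally consistent one: the P--K formula is the mean occupancy of a \emph{FIFO} M/D/1 queue, and FIFO with deterministic service is not a symmetric discipline, so it is not obviously the right object to plug into the product-form framework that the lemma otherwise relies on; under a genuinely symmetric discipline (LCFS-PR, processor sharing) with constant service effort, insensitivity gives the geometric marginal you use. You are also right that the discrepancy is harmless for the lemma's conclusion, since \emph{both} candidate functions are convex and non-decreasing on $[0,1)$ and Theorem~\ref{thm:coststruct} only needs that. So: correct proof, alternative (and arguably cleaner) route to the scalar convexity fact, and a legitimate observation about the wording ``M/D/1 symmetric'' that the paper's own proof glosses over. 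If you keep your version, do state explicitly which symmetric discipline you mean and cite the insensitivity result, as you propose; if you follow the paper's version, it would be worth a sentence acknowledging that the P--K formula is being used as the per-edge cost $C_e$ and verifying convexity of that specific expression.
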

		%		\begin{proof}
		%			Please see Appendix \ref{app:proof-symmetric}.
		%		\end{proof}
		Again, Lemma~\ref{lemma:symmetric} and Little's theorem imply that this property also extends to network delays.
		It is worth noting that conclusions similar to these in Lemmas \ref{lemma:mmk} and \ref{lemma:symmetric} are not possible for all general queues with product form distributions. In particular, also we prove the following negative result: %  for $M/M/1/k$ queues: %one cannot find a monotone function which is monotone and supermodular over sets  sets $\{\supp(\mathbf{x}): \mathbf{x}\in \domain_{\bm{\lambda}}\}$.
		%$\rho_{u,v}(\mathbf{x})$ is generally non-monotone and neither submodular, nor supermodular.
		\begin{lemma} There exists a network of  M/M/1/k queues, containing a queue $e$, for which no strictly monotone function $C_e$ of the load $\rho_e$ at a queue $e$ is non-increasing and supermodular over sets $\{\supp(\mathbf{x}): \mathbf{x}\in \domain_{\bm{\lambda}}\}$. In particular, the expected  size of queue $e$ is neither monotone nor supermodular. 
			\label{lemma:example}
		\end{lemma}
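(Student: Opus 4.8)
The plan is to build a small path network of M/M/1/$k$ queues in which the load $\rho_e$ of some edge $e$ is \emph{not} a monotone set function of the placement, and then to observe that this single fact already rules out every strictly monotone $C_e$. The mechanism is blocking: in a finite-buffer network the traffic that actually reaches a queue equals the offered traffic thinned by the loss probabilities of the upstream queues, so \emph{decreasing} an upstream queue's load --- e.g.\ by caching its object so that responses are generated past that queue --- lowers that queue's loss probability and can \emph{raise} the load downstream. In an M/M/1 network this never happens ($\rho_e$ is non-increasing and supermodular in the placement, as in the proof of Theorem~\ref{thm:coststruct}); it is exactly this monotonicity that fails with finite buffers, in contrast to Lemmas~\ref{lemma:mmk} and~\ref{lemma:symmetric}.

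Concretely, I would use a single object $i$ with one designated server, one request following a simple path, and two cache slots of capacity one at nodes $w'$ and $w$, with $\mathbf{x}_0=\mathbf{0}$; since every M/M/1/$k$ network is positive recurrent we have $\domain_{\bm{\lambda}}=\domain$, so all four placements $\emptyset,\{w\},\{w'\},\{w,w'\}$ (abbreviating the pairs $(w,i),(w',i)$ by $w,w'$) are feasible. I place $w'$ on the stretch of the path between the request source and $e$, so that caching $i$ there removes all $i$-traffic from $e$: $\rho_e(\{w'\})=\rho_e(\{w,w'\})=0$. I place $w$ adjacent to $e$ on the stretch between $e$ and the server, and \emph{beyond} $w$ (toward the server) a small-buffer queue $f$ that carries all the response traffic when no object is cached. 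Caching $i$ at $w$ leaves the nominal rate of $i$-responses through $e$ unchanged but makes those responses \emph{bypass} $f$, whereas under $\emptyset$ they must cross $f$, which --- having a finite buffer and a positive offered load --- blocks a strictly positive fraction $\pi$ of them. Consequently $\rho_e(\emptyset)=\lambda(1-\pi)/\mu_e$, strictly less than $\rho_e(\{w\})=\lambda/\mu_e$. Writing $0<\rho_1:=\rho_e(\emptyset)<\rho_2:=\rho_e(\{w\})$, the set function $S\mapsto\rho_e(\mathbf{x}_S)$ is thus neither non-increasing (it grows from $\emptyset$ to $\{w\}$) nor non-decreasing (it drops from $\emptyset$ to $\{w'\}$).

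The rest is immediate. Let $C_e:[0,1)\to\reals$ be strictly monotone. If $C_e$ is strictly increasing, then $C_e(\rho_e(\{w\}))=C_e(\rho_2)>C_e(\rho_1)=C_e(\rho_e(\emptyset))$ with $\emptyset\subsetneq\{w\}$; if $C_e$ is strictly decreasing, then $C_e(\rho_e(\{w'\}))=C_e(0)>C_e(\rho_1)=C_e(\rho_e(\emptyset))$ with $\emptyset\subsetneq\{w'\}$. In both cases $S\mapsto C_e(\rho_e(\mathbf{x}_S))$ fails to be non-increasing, hence a fortiori fails to be simultaneously non-increasing and supermodular. For the expected queue size: $\expect[n_e]$ is strictly increasing in the queue's load, so $\expect[n_e](\emptyset)<\expect[n_e](\{w\})$ on the chain $\emptyset\subsetneq\{w\}$ shows it is not monotone, and the increasing-returns test at $S=\emptyset\subseteq S'=\{w\}$ with extra element $w'$ would require $\expect[n_e](\{w,w'\})-\expect[n_e](\{w\})\ge\expect[n_e](\{w'\})-\expect[n_e](\emptyset)$, i.e.\ $\expect[n_e](\rho_2)\le\expect[n_e](\rho_1)$, which is false; so it is not supermodular either.

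I expect the one genuinely delicate point to be the inequality $\rho_e(\emptyset)<\rho_e(\{w\})$, because networks of M/M/1/$k$ queues have no product-form stationary distribution and one cannot simply invoke \eqref{eq:rho_X}. I would handle it by keeping the relevant sub-network --- the path from the server down through $f$ to $e$ --- small enough that its stationary distribution is obtained exactly from a finite, irreducible continuous-time Markov chain; the blocking fraction of $f$ is then visibly positive (the ``buffer full'' states of $f$ carry positive stationary mass), which yields the strict inequality, and the same chain supplies the value $\expect[n_e](\emptyset)$ (where $e$'s input is the non-Poisson departure stream of $f$) needed to make the last two assertions precise. The remaining steps are the short deterministic argument above.
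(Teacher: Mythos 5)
Your proposal is correct and uses essentially the same construction as the paper: a short path with a small-buffer queue between the target edge $e$ and the designated server, so that caching just past $e$ removes the upstream loss thinning and strictly increases the load on $e$, while caching nearer the source kills it, producing exactly the non-monotonicity of $\rho_e$ that rules out every strictly monotone $C_e$ and, via the same four placements, the failure of supermodularity for $\expect[n_e]$. The paper's counterexample is the three-node instance of your sketch, with your $w'$ being the request source itself, your $f$ the queue adjacent to the server, and your $e$ the queue adjacent to the source.
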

			%The proof can be found in Appendix \ref{app:md1}. 
	
%		One conclusion of Lemma \ref{lemma:example} is that the expected number of packets in an arbitrary network of M/M/1/k queues is non-increasing and supermodular over sets $\{\supp(\mathbf{x}): \mathbf{x}\in \domain_{\bm{\lambda}}\}$, since the expected number of packets in a queue is strictly monotone in the queue's traffic load.

        \section{Numerical Evaluation}\label{sec:experiments}
	\begin{table}[!t]
\caption{Graph Topologies and Experiment Parameters.}\label{tab:networks}
\begin{scriptsize}
\begin{tabular}{  p{13mm} p{3mm} p{3mm} p{3mm} p{2mm} p{2mm} p{2.5mm} cc}%p{7mm}p{7mm}} 
Graph & $|V|$ & $|E|$ & $|\catalog|$ & $|\demand|$ &  $|Q|$ & $c_v$ & $F_{\text{PL}}\!(\mathbf{x}_{\text{RND}})$ & $F_{\text{UNI}}\!(\mathbf{x}_{\text{RND}})$\\
\hline
 \texttt{ER} &100 & 1042  & 300 & 1K & 4 & 3& 2.75& 2.98\\
 \texttt{ER-20Q} &100 & 1042  & 300 & 1K & 20 & 3& 3.1& 2.88\\
\texttt{HC}&128 & 896  & 300 & 1K & 4 & 3& 2.25 & 5.23\\
\texttt{HC-20Q}&128 & 896  & 300 & 1K & 20 & 3& 2.52 & 5.99\\
 \texttt{star} & 100 & 198 & 300 & 1K & 4 & 3 & 6.08 & 8.3\\
 \texttt{path} & 4 & 3 & 2 & 2 &1 &1& 1.2 &1.2\\
 \texttt{dtelekom} & 68 & 546 & 300 & 1K & 4 & 3 &2.57 &3.66\\
  \texttt{abilene} & 11 & 28 & 4 &2& 2& 1/2 & 4.39 &4.39\\
  \texttt{geant} &22 & 66& 10&100 & 4& 2 &19.68&17.22
\end{tabular}
\end{scriptsize}
\end{table}
\begin{figure}[!t]
\vspace{-4mm}
\centering 
    \includegraphics[width=0.8\columnwidth]{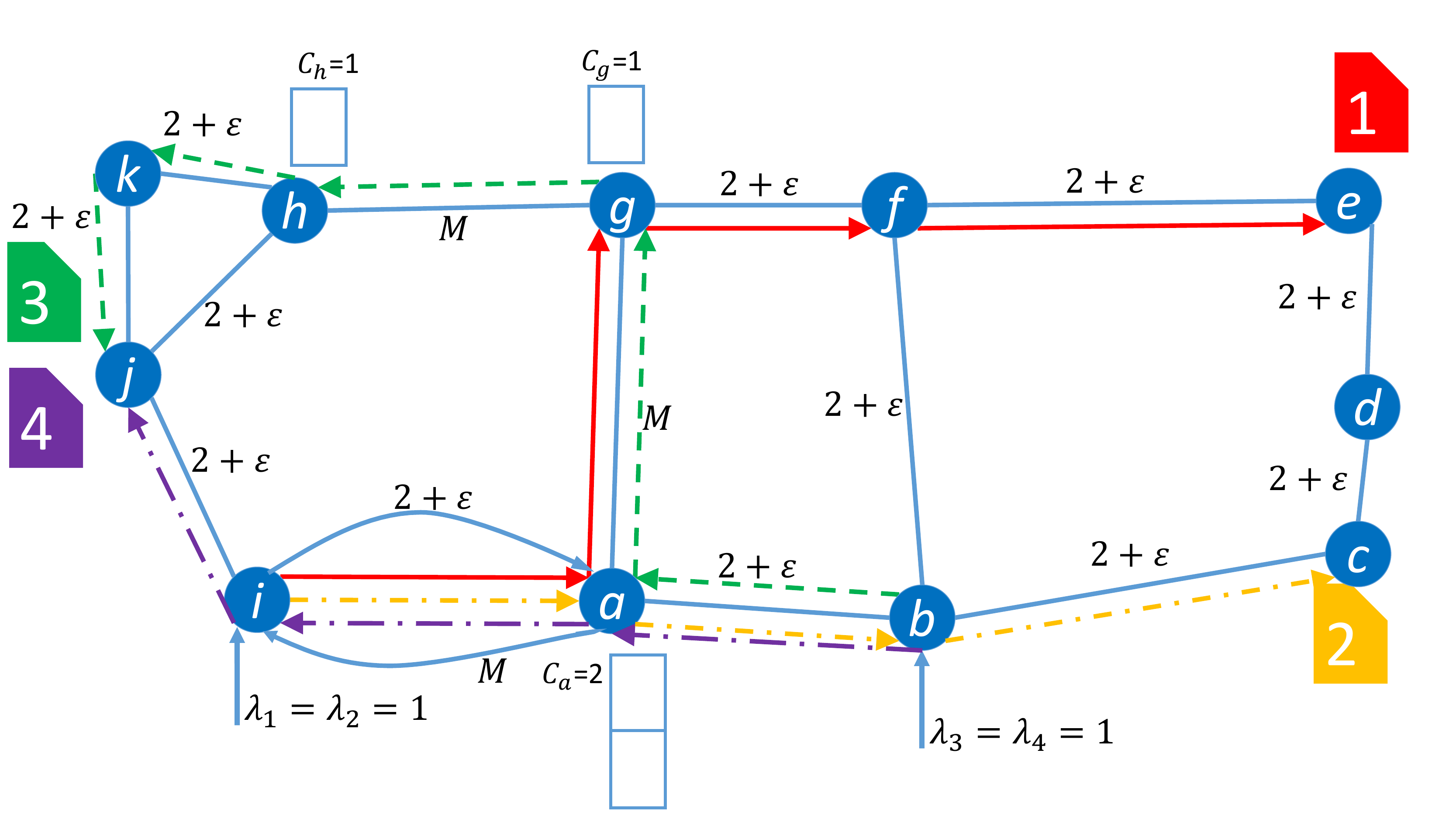}
    \caption{The \texttt{abilene} topology. We consider a catalog size of $|\catalog|=4$ and 4 requests $(|\demand|=4)$. Requests originate from $|Q|=2$ nodes, $b$ and $i$. Three edges have a high service rate $M\gg 1$, and the rest have a low service rate $2+ \varepsilon$. Only nodes $a, g,$ and $h$ can cache items, and have capacities  2, 1, and 1, respectively. We set $M=200$ and $\varepsilon=0.05$ in our experiments. Greedy is  0.5-approximate in this instance.}\label{fig:halfistight2}
\end{figure}
\noindent\textbf{Networks.} We execute Algorithms~\ref{alg:greedy}~and~\ref{alg:cg} over 9 network topologies, summarized in Table~\ref{tab:networks}. 
%Graph \texttt{balanced-tree} is a complete binary tree of 123 nodes. 
Graphs \texttt{ER} and \texttt{ER-20Q} are the same 100-node Erd\H{o}s-R\'enyi graph with parameter $p=0.1$. Graphs \texttt{HC} and \texttt{HC-20Q} are the same hypercube graph with 128 nodes, and graph \texttt{star} is a star graph with 100 nodes. The graph \texttt{path} is the topology shown in Fig.~\ref{fig:halfistight}. 
The last 3 topologies, namely, \texttt{dtelekom}, \texttt{geant}, and \texttt{abilene} represent the Deutsche Telekom, GEANT, and Abilene backbone networks, %\cite{rossi2011caching},
respectively. The latter is also shown in Fig.~\ref{fig:halfistight2}.

\begin{figure}[!t]
\vspace{-3mm}
	\centering
	\subfloat[Power-law demand]{
		\includegraphics[width=\columnwidth]{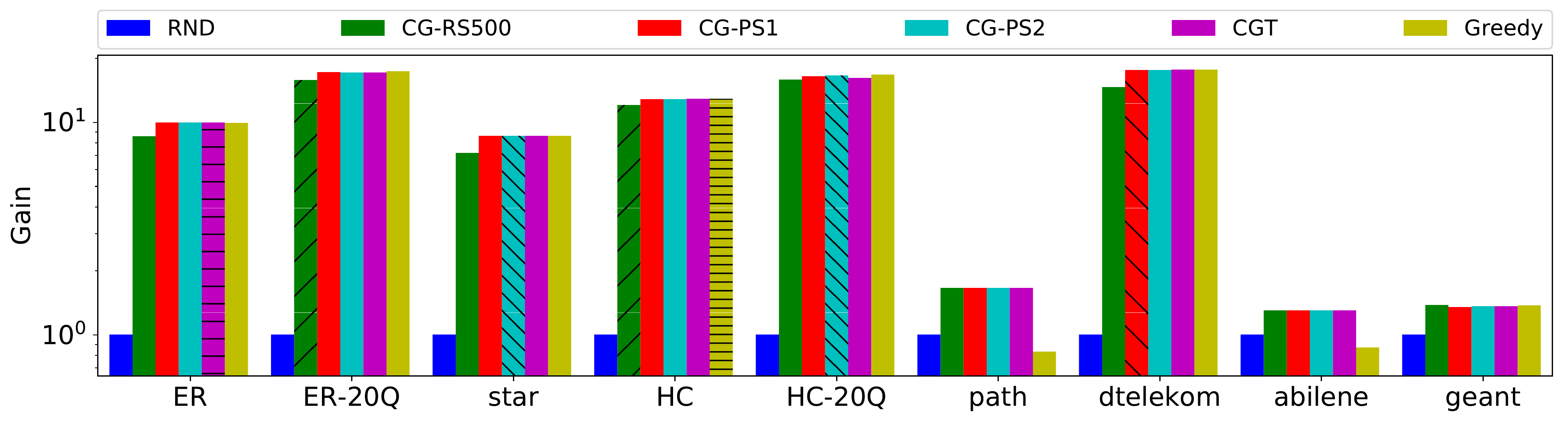}\label{cache_gain_plaw}
	}
	\vspace{-4mm}
	
	\subfloat[Uniform demand]{
		\includegraphics[width=\columnwidth]{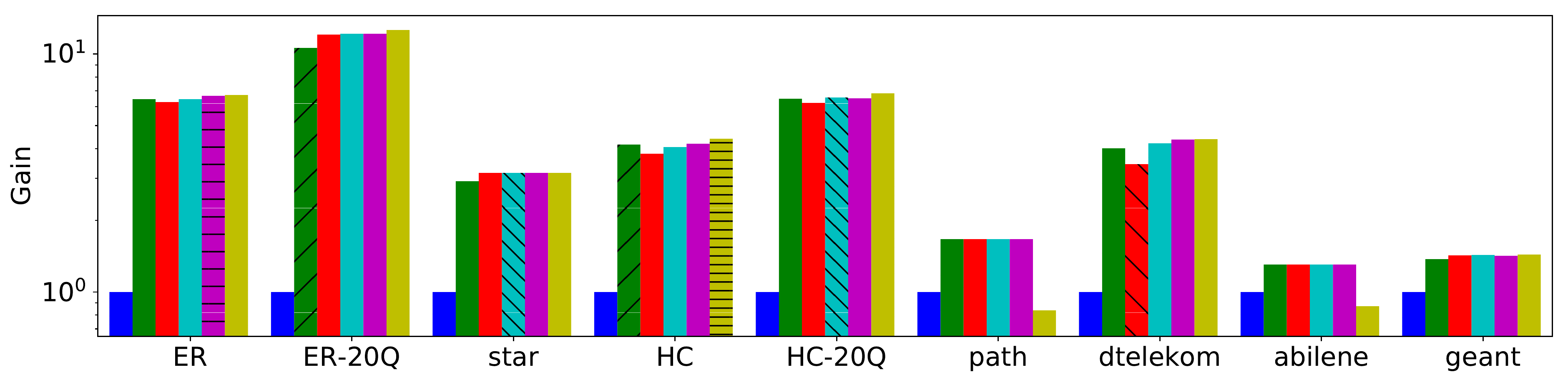}\label{cache_gain_uniform}
	}
\caption{Caching gains for different topologies and different arrival distributions, normalized by the gains corresponding to RND, reported in Table.~\ref{tab:networks}. Greedy performs comparatively well. However, it attains sub-optimal solutions for \texttt{path} and \texttt{abilene}; these solutions are worse than RND. CG-RS500 has a poor performance compared to other variations of the continuous-greedy algorithm.}\label{cache_gain_1}
\end{figure}
\begin{figure}[!t]\vspace{-4mm}
    \includegraphics[width=\columnwidth]{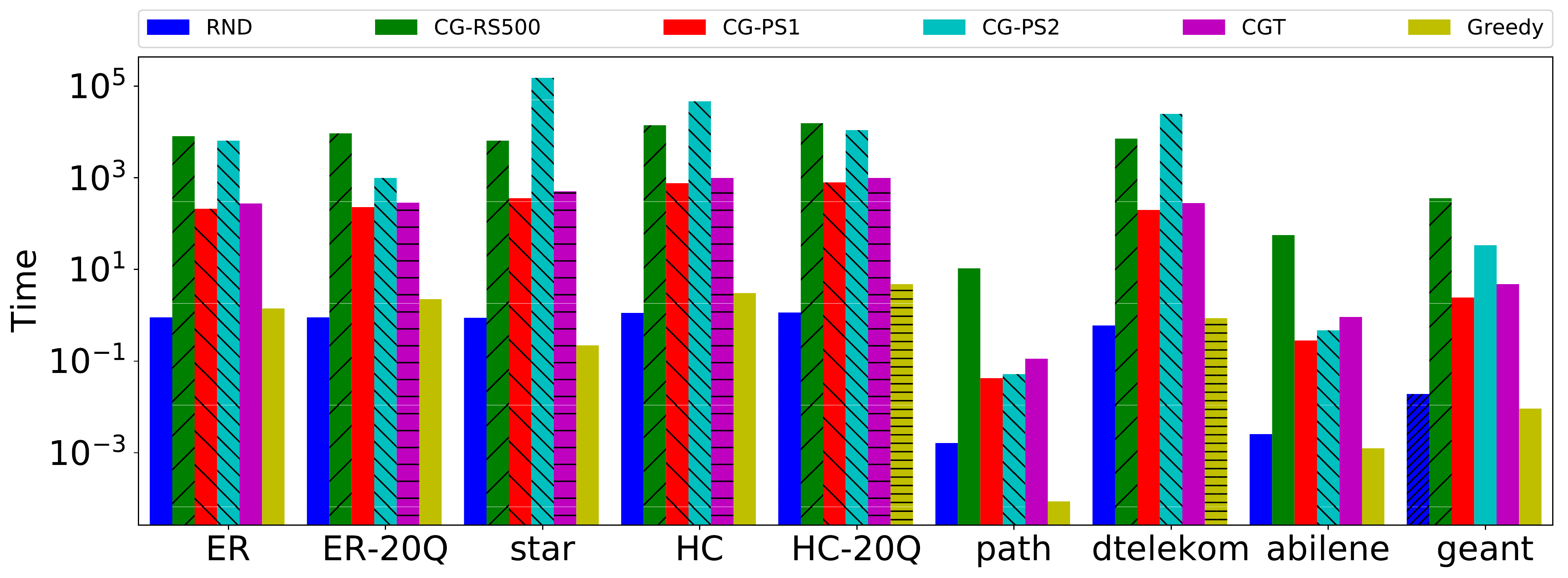}
    \caption{Running time for different topologies and power-law arrival distribution, in seconds. CG-RS500 is slower than power series estimation CG-PS1 and CGT, sometimes exceeding CG-PS2 as well.}\label{fig:time}
    \vspace{-2mm}
\end{figure}

\begin{figure}[!t]
\vspace{-4mm}
\centering

     \subfloat[\texttt{abilene}]{\includegraphics[width=0.5\columnwidth]{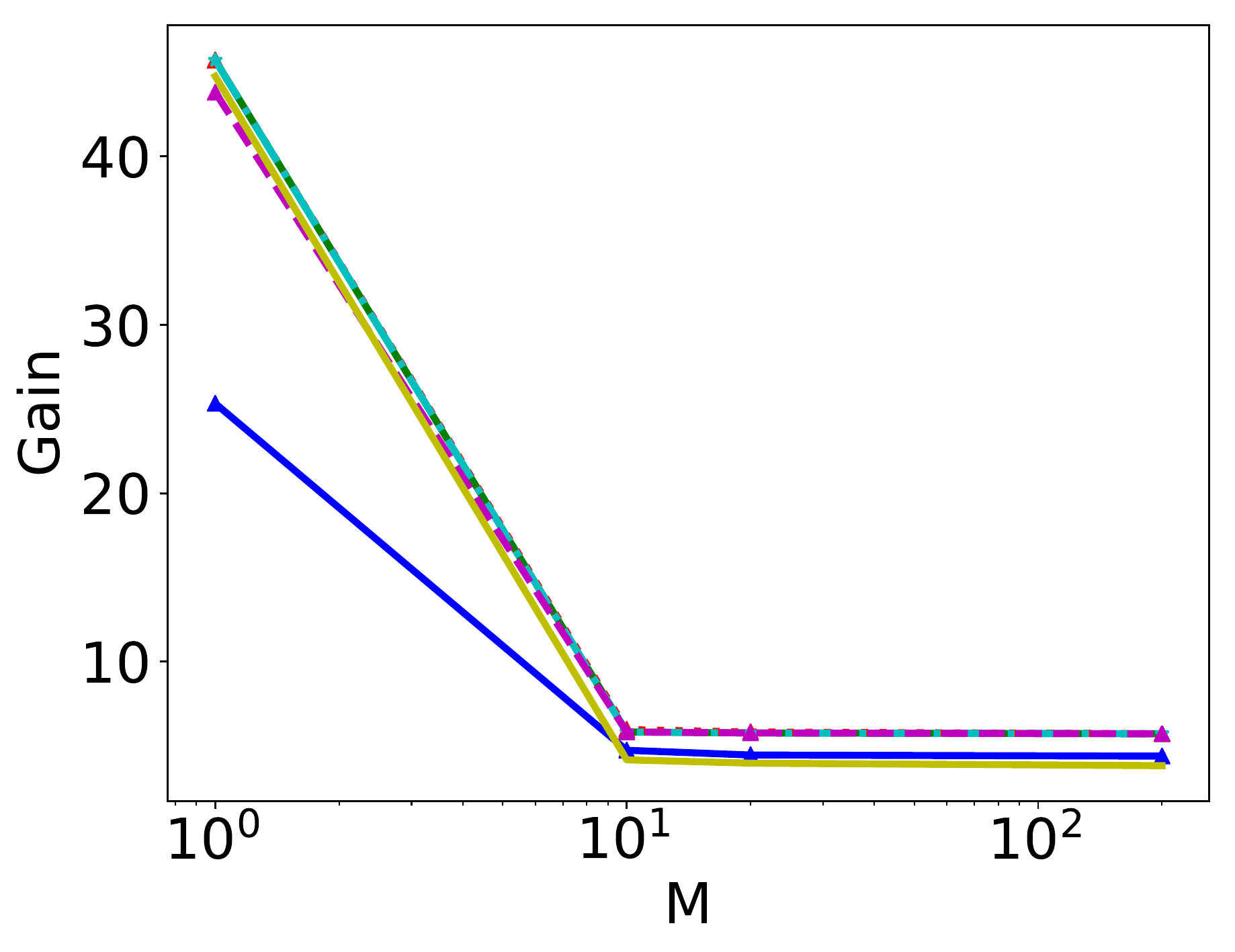}}
     \subfloat[\texttt{path}]{\includegraphics[width=0.5\columnwidth]{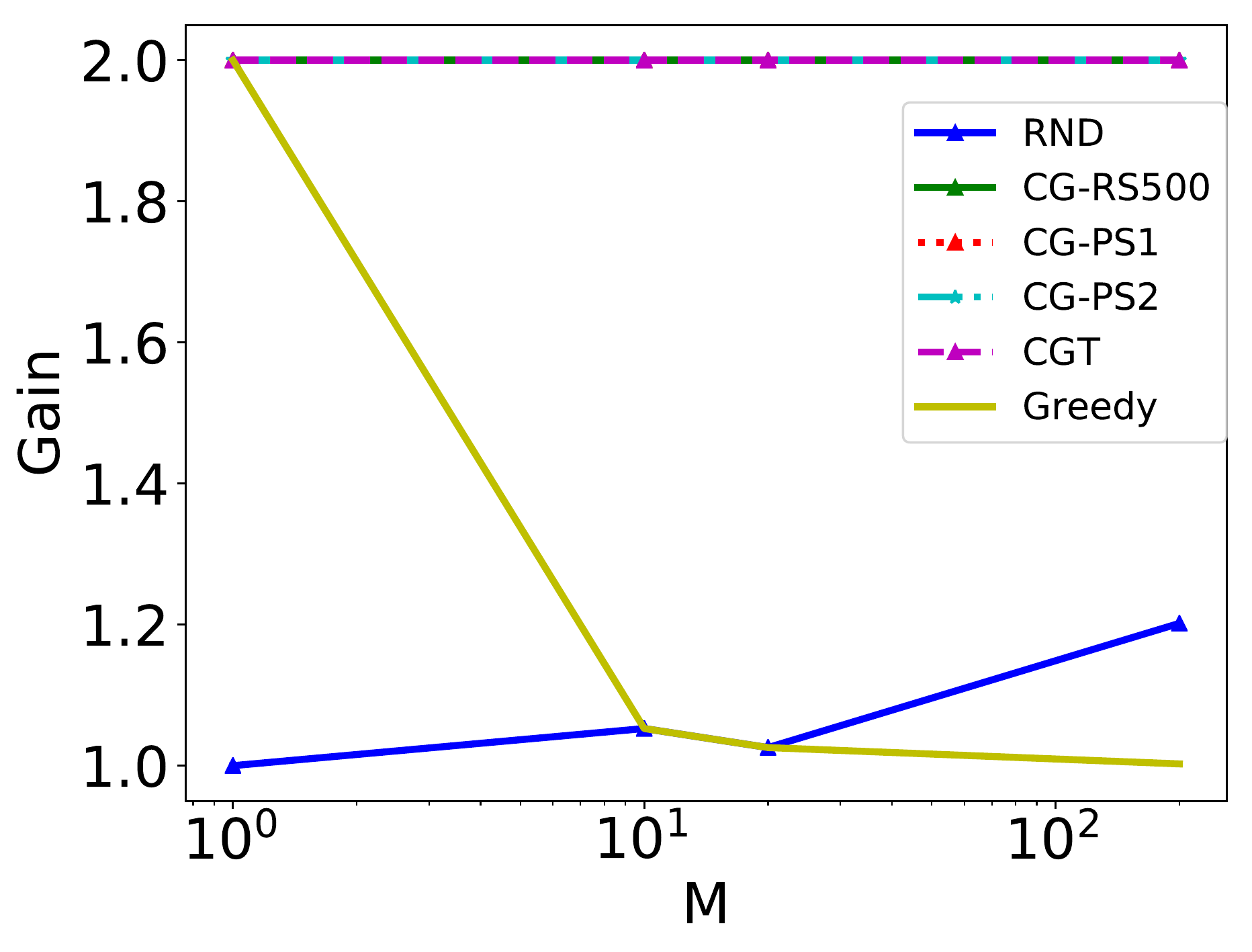}}
     
\caption{Caching gain vs. $M$. As the discrepancy between the service rate of low-bandwidth and high-bandwidth links increases, the performance of Greedy deteriorates.}\label{fig:M}
\end{figure}

\noindent\textbf{Experimental Setup.} 
For  \texttt{path} and \texttt{abilene}, we set demands, storage capacities, and service rates as illustrated in Figures~\ref{fig:halfistight} and \ref{fig:halfistight2}, respectively. Both of these settings induce an approximation ratio close to $1/2$ for greedy. 
For all remaining topologies,  we consider a catalog of size $|\catalog|$ objects; for each object, we select 1 node  uniformly at random (u.a.r.) from $V$ to serve as the designated server for this object. To induce traffic overlaps, we also select $|Q|$ nodes u.a.r.~that serve as sources for requests; all requests originate from these sources. 
All caches are set to the same storage capacity, i.e., $c_v=c$ for all $v\in V$. 
 We generate a set of $|\demand|$ possible  types of requests. For each request type $r\in \demand$, $\lambda^r=1$ request per second, and   path $p^r$ is generated by selecting a source among the $|Q|$ sources u.a.r., and routing towards the designated server of object $i^r$ using a shortest path algorithm. We consider two ways of selecting objects $i^r\in \catalog$:   in the \emph{uniform} regime, $i^r$ is selected u.a.r. from the catalog $\catalog$;  in the \emph{power-law} regime, $i^r$ is selected from the catalog $\catalog$ via a power law distribution with exponent $1.2$. All  the parameter values, e.g., catalog size $|\catalog|$, number of requests $|\demand|$, number of query sources $|Q|$, and caching capacities $c_v$ are presented in Table~\ref{tab:networks}. 

We construct heterogeneous service rates as follows. Every queue service rate is either set to a low value $\mu_e=\mu_{\text{low}}$ or a high value $\mu_e=\mu_{\text{high}},$ for all $e \in E.$ We select $\mu_{\text{low}}$ and $\mu_{\text{high}}$ as follows.  Given the demands $r\in \demand$ and the corresponding arrival rates $\lambda^r$, we compute the highest load under no caching ($\mathbf{x}=\mathbf{0}$), i.e., we find  
$\lambda_{\max} = \max_{e\in E} \sum_{r:e\in p^r} \lambda^r.$
We then set $\mu_{\text{low}} = \lambda_{\max}\times 1.05$ and $\mu_{\text{high}} = \lambda_{\max}\times 200$. We  set the service rate to $\mu_{\text{low}}$ for all congested edges, i.e., edges $e$ s.t.~ $\lambda_e=\lambda_{\max}$. We set the service rate for each remaining edge $e\in E$ to $\mu_{\text{low}}$ independently with probability 0.7, and to $\mu_{\text{high}}$ otherwise.  Note that, as a result $\textbf{0}\in \domain_{\bm{\lambda}}=\domain$, i.e., the system is stable even in the absence of caching and, on average, 30 percent of the edges have a high service rate.

\begin{figure}[!t]
\vspace{-1mm}
\centering 
     \subfloat[\texttt{dtelekom}]{\includegraphics[width=0.5\columnwidth]{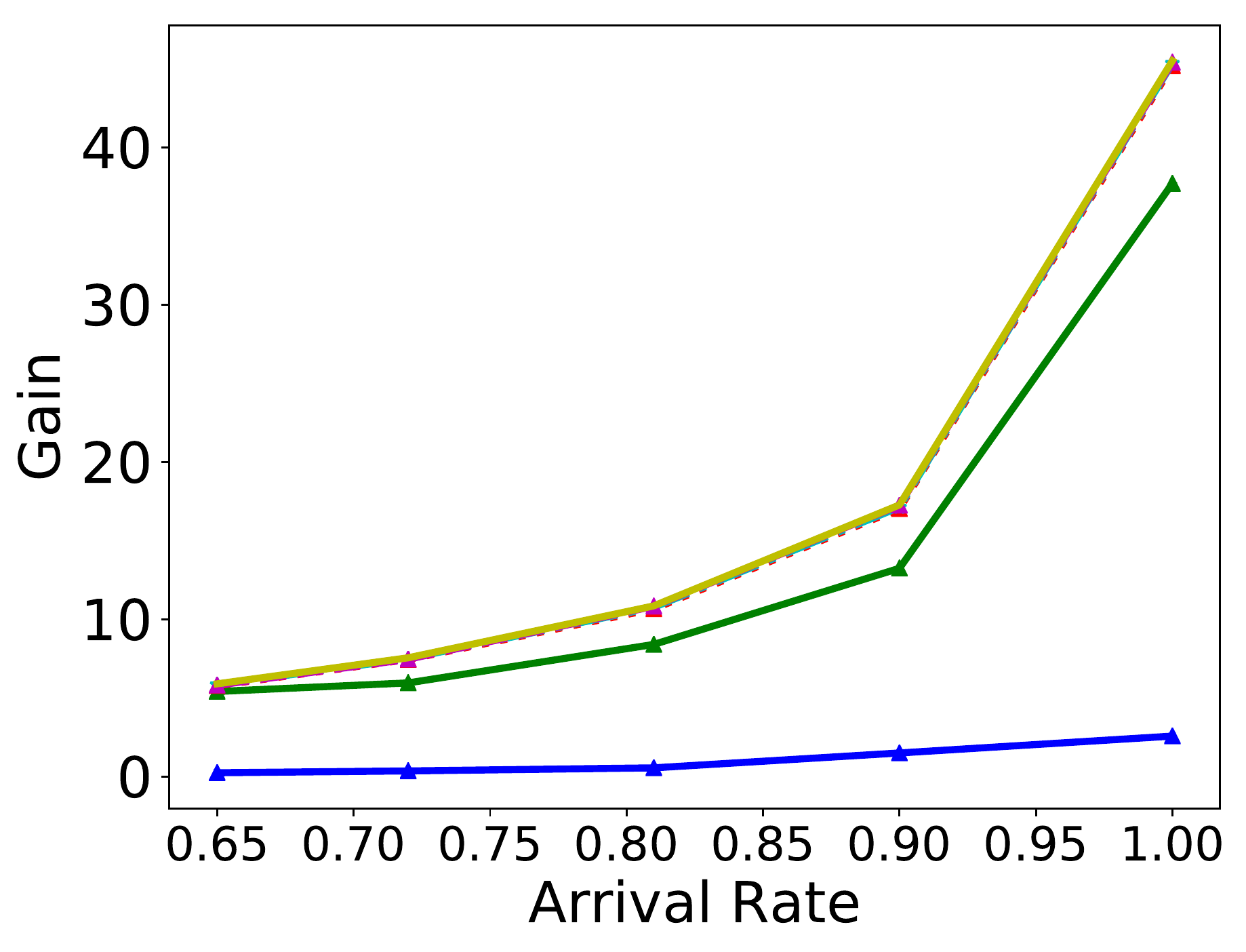}}
     \subfloat[\texttt{ER}]{\includegraphics[width=0.5\columnwidth]{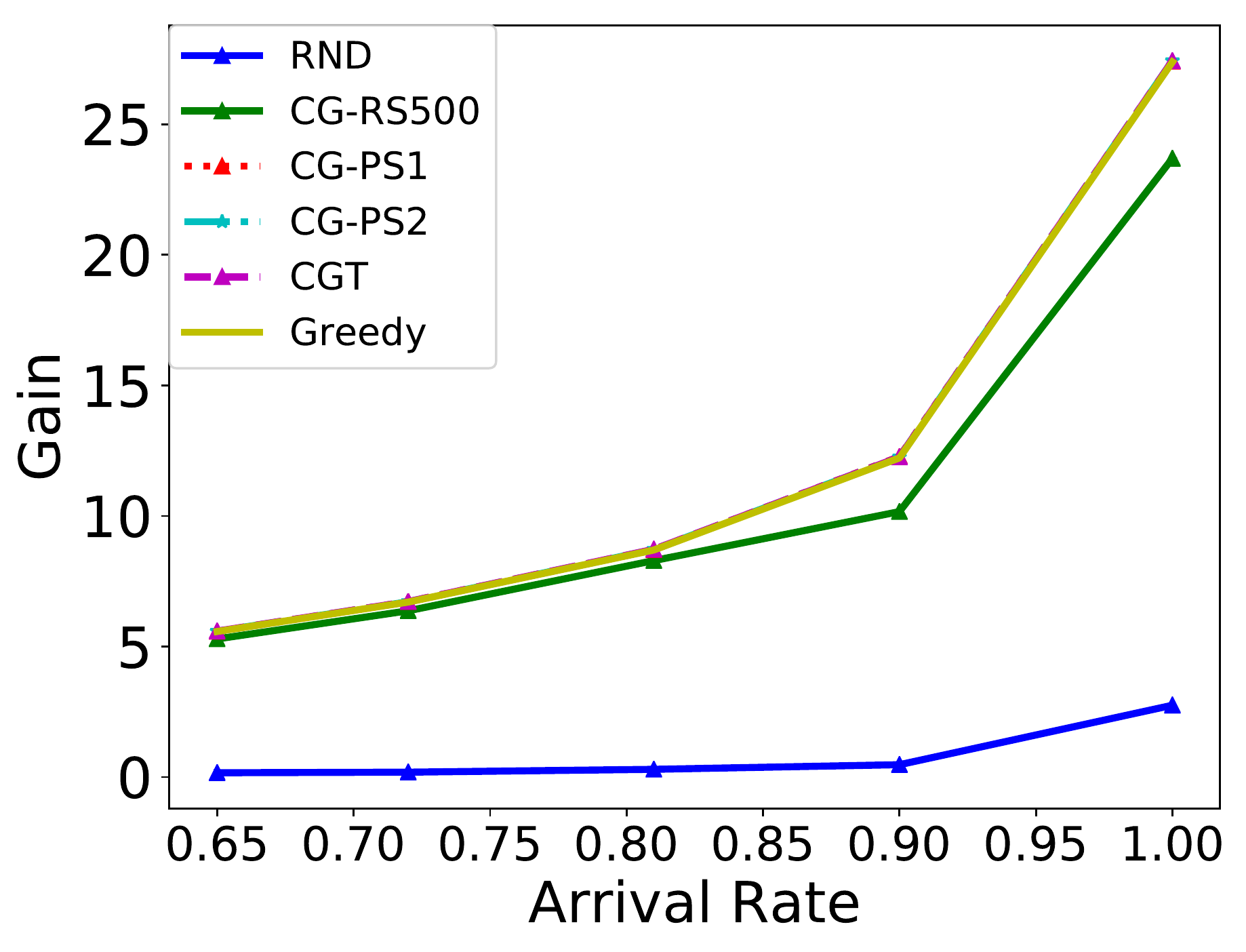}}
\caption{Caching gain vs. arrival rate. As the arrival rate increases caching gains get larger.}\label{fig:rates}
\vspace{-1mm}
\end{figure}

\begin{figure}[!t]
\vspace{-1mm}
\centering
     \subfloat[\texttt{dtelekom}]{\includegraphics[width=0.5\columnwidth]{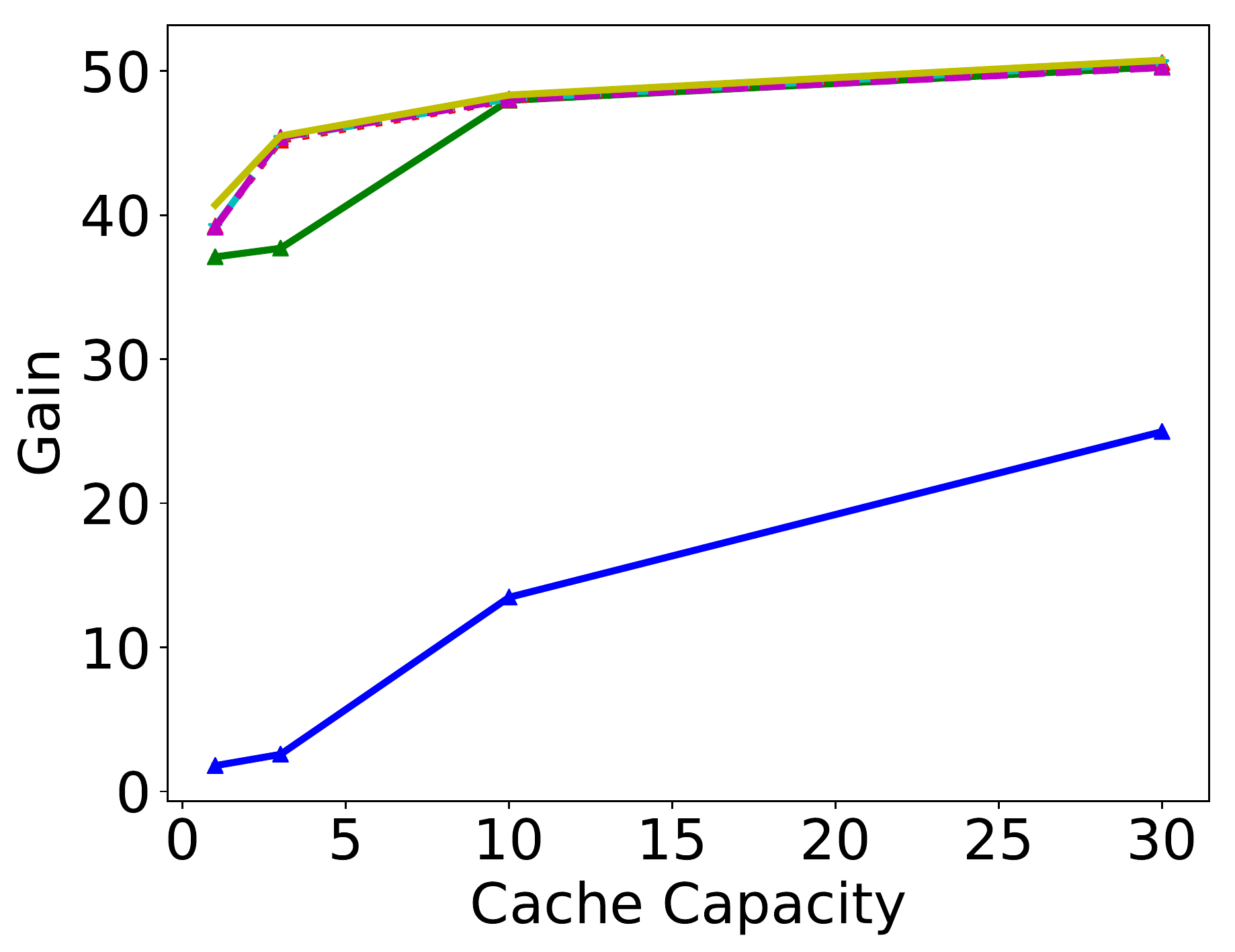}}
     \subfloat[\texttt{ER}]{\includegraphics[width=0.5\columnwidth]{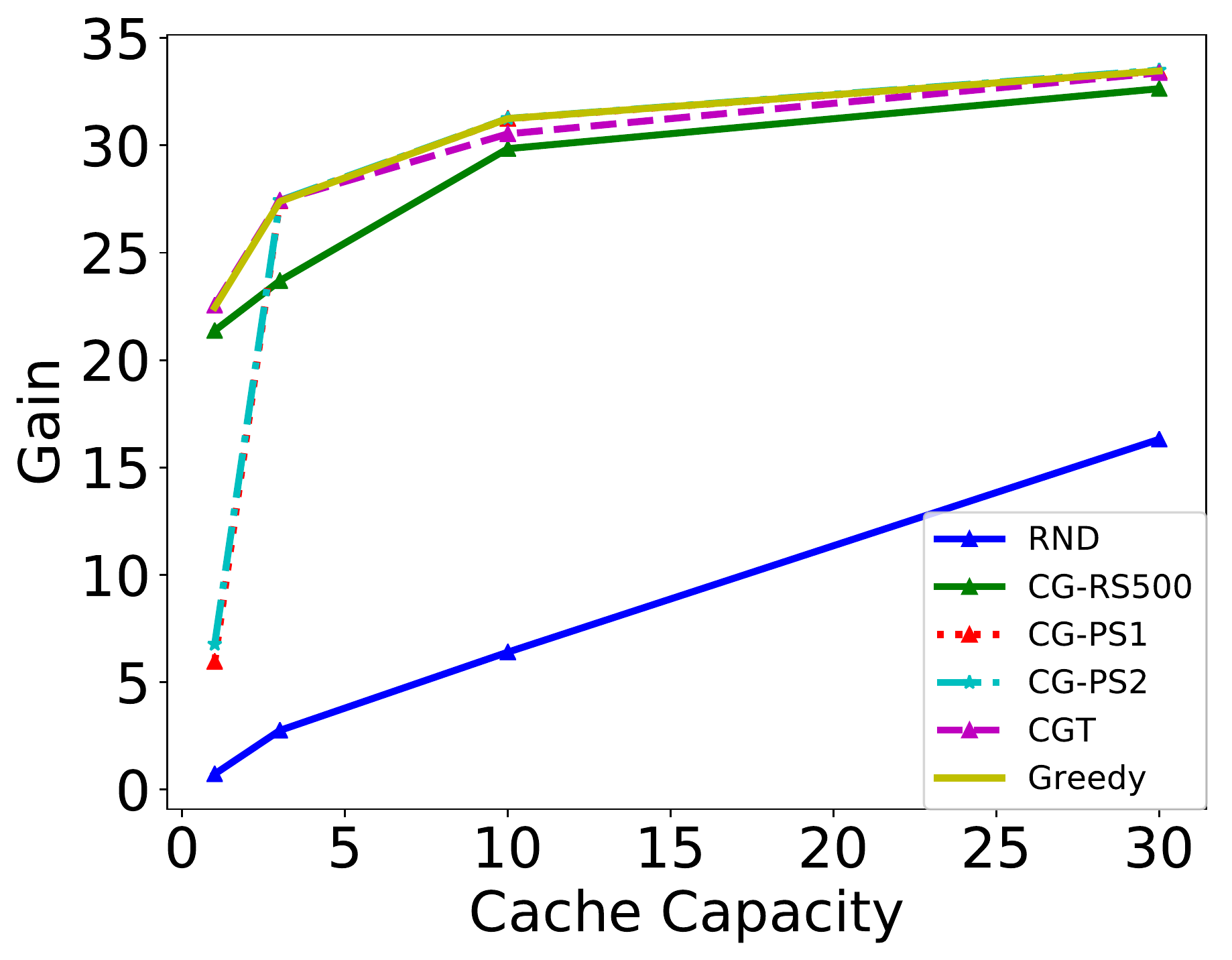}}
     \caption{Caching gain vs. cache capacity. As caching capacities increase, caching gains rise. }\label{fig:cap}
\end{figure}

\noindent\textbf{Placement Algorithms.} We implement several placement algorithms: (a) \emph{Greedy}, i.e., the greedy algorithm (Alg.~\ref{alg:greedy}), (b) \emph{Continuous-Greedy with Random Sampling} (CG-RS), i.e., Algorithm~\ref{alg:cg} with a gradient estimator based on sampling, as described in Sec.~\ref{sec:sample}, (c) \emph{Continuous-Greedy with Taylor approximation} (CGT), i.e.,    Algorithm~\ref{alg:cg} with a gradient estimator based on the Taylor expansion, as described in Sec.~\ref{sec:taylor}, and (d) \emph{Continuous-Greedy with Power Series approximation} (CG-PS), i.e., Algorithm~\ref{alg:cg} with a gradient estimator based on the power series expansion, described also in Sec.~\ref{sec:taylor}. In the case of CG-RS, we collect 500 samples, i.e., $T=500$. In the case of CG-PS we tried the first and second order expansions of the power series as CG-PS1 and CG-PS2, respectively. In the case of CGT, we tried the first-order expansion $(L=1)$. In both cases, subsequent to the execution of Alg.~\ref{alg:cg} we produce an integral solution in $\domain$ by rounding via the swap rounding method \cite{chekuri2010dependent}. All continuous-greedy algorithms use $\gamma=0.001.$ We also implement a random selection algorithm (RND), which caches $c_v$ items at each node $v \in V$, selected uniformly at random. We repeat RND 10 times, and report the average running time and caching gain. 

\noindent\textbf{Caching Gain Across Different Topologies.} 
The caching gain $F(\mathbf{x})$ for $\mathbf{x}$ generated by different placement algorithms, is shown for power-law arrival distribution and uniform arrival distribution in Figures ~\ref{cache_gain_plaw} and \ref{cache_gain_uniform}, respectively. The values are normalized by the gains obtained by RND, reported in Table~\ref{tab:networks}. Also, the running times of the algorithms for power-law arrival distribution are reported in Fig.~\ref{fig:time}.
As we see in Fig.~\ref{cache_gain_1}, Greedy is comparable to other algorithms in most topologies. However, for topologies \texttt{path} and \texttt{abilene} Greedy obtains a sub-optimal solution, in comparison to the continuous-greedy algorithm. In fact, for \texttt{path} and \texttt{abilene} Greedy performs even worse than RND. 
In Fig.~\ref{cache_gain_1}, we see that the continuous-greedy algorithms with gradient estimators based on Taylor and Power series expansion, i.e., CG-PS1, CG-PS2, and CGT outperform CG-RS500 in most topologies. Also, from Fig.~\ref{fig:time}, we see that CG-RS500 runs 100 times slower than the continuous-greedy algorithms with first-order gradient estimators, i.e., CG-PS1 and CGT. Note that 500 samples are significantly below the value\techrep{, stated in Theorem~\ref{thrm:sampling},}{} needed to attain the theoretical guarantees of the continuous-greedy algorithm, which is quadratic in $|V||\catalog|$.

\noindent\textbf{Varying Service Rates.}
For topologies \texttt{path} and \texttt{abilene}, the approximation ratio of Greedy is $\approx 0.5.$ This ratio  is a function of service rate of the high-bandwidth link $M.$  In this experiment, we explore the effect of varying $M$ on the performance of the algorithms in more detail. We plot the caching gain obtained by different algorithms for \texttt{path} and \texttt{abilene} topologies, using different values of $M\in \{M_{\min},10,20,200\},$ where $M_{\min}$ is the value that puts the system on the brink of instability, i.e., 1 and $2+\epsilon$ for \texttt{path} and \texttt{abilene}, respectively. Thus, we gradually increase the discrepancy between the service rate of low-bandwidth and high-bandwidth links. 
The corresponding caching gains are plotted in Fig.~\ref{fig:M}, as a function of $M$. We see that as $M$ increases the gain attained by Greedy worsens in both topologies: when $M=M_{\min}$ Greedy matches the performance of the continuous-greedy algorithms, in both cases. However, for higher values of $M$ it is beaten not only by all variations of the continuous-greedy algorithm, but by RND as well.

\noindent\textbf{Effect of Congestion on Caching Gain.}
In this experiment, we study the effect of varying arrival rates $\lambda^r$ on caching gain $F$. We report results only for the \texttt{dtelekom} and \texttt{ER} topologies and power-law arrival distribution. We obtain the cache placements $\mathbf{x}$  using the parameters presented in Table~\ref{tab:networks} and different arrival rates: $\lambda^r\in\{0.65, 0.72, 0.81, 0.9, 1.0\},$ for $r \in \demand$. Fig.~\ref{fig:rates} shows the caching gain attained by the placement algorithms as a function of arrival rates. We observe that as we increase the arrival rates, the caching gain attained by almost all algorithms, except RND, increases significantly. Moreover, CG-PS1, CG-PS2, CGT, and Greedy have a similar performance, while CG-RS500 achieves lower caching gains. 

\noindent\textbf{Varying Caching Capacity.}
In this experiment, we study the effect of increasing cache capacity $c_v$ on the acquired caching gains. Again, we report the results only for the \texttt{dtelekom} and \texttt{ER} topologies and power-law arrival distribution. We evaluate the caching gain obtained by different placement algorithms using the parameters of Table~\ref{tab:networks} and different caching capacities: $c_v\in\{1,3, 10,30\}$ for $v \in V.$ The  caching gain is plotted in Fig.~\ref{fig:cap}. As we see, in all cases the obtained gain increases, as we increase the caching capacities. This is expected: caching more items  reduces traffic and delay,  increasing the gain.

 	\section{Conclusions}\label{sec:conclusions}
	Our analysis suggests feasible object placements targeting many design objectives of interest, including system size  and delay, can be determined using combinatorial techniques. Our work leaves the exact characterization of approximable objectives for certain classes of queues, including M/M/1/k queues, open. %; providing guarantees for M/M/1/k queues, that are not amenable to an analysis via submodular maximization, is also an important open problem.
Our work also leaves open problems relating to stability. This includes the characterization of the stability region of arrival rates $\Lambda = \cup_{\mathbf{x}\in \domain} \Lambda(\mathbf{x})$.  It is not clear whether determining membership in this set (or, equivalently, given $\lambda$, determining whether  there exists a $\mathbf{x}\in \domain$ under which the system is stable) is NP-hard or not, and whether this region can be somehow approximated. Finally, all algorithms presented in this paper are offline: identifying how to determine placements in an online, distributed fashion, in a manner that attains a design objective (as in \cite{ioannidis2016adaptive,ioannidis2017jointly}), or even stabilizes the system (as in \cite{yeh2014vip}), remains an important open problem.

        \section{Acknowledgements}

        The authors gratefully acknowledge support from National Science Foundation grant  NeTS-1718355, as well as from research grants by Intel Corp.~and Cisco
Systems.
        %\bibliographystyle{IEEEtran}
	%\bibliographystyle{ACM-Reference-Format}
	%\bibliography{ref,icn_ref}	
	% Generated by IEEEtran.bst, version: 1.13 (2008/09/30)

   \techrep{
	%\appendices
	\appendices
\section{Kelly Networks}\label{app:classickelly}
%\noindent\textbf{Kelly Networks.}\label{sec:classicjackson} %\si{This can be moved to the appendix if there is need for space.}
Kelly networks  \cite{kelly,nelson,gallager-stochastic} (i.e., multi-class Jackson networks) are networks of queues operating under a fairly general service disciplines (including FIFO, LIFO, and processor sharing, to name a few). As illustrated in Fig.~\ref{fig:kelly-classic}(a),  a Kelly network can be represented by a directed graph $G(V,E)$, in which each edge is associated with a queue. In the case of First-In First-Out (FIFO) queues, each edge/link $e\in E$ is associated with an M/M/1 queue  with service rate $\mu_e\geq 0$. In an \emph{open} network, packets of exogenous traffic arrive, are routed through consecutive queues, and subsequently exit the network; the path followed by a packet is determined by its {\em class}.

%\begin{figure}
%\includegraphics[scale=.2]{kelly-caching}
%\caption{Example of a Kelly cache network. Exogenous requests enter the network and are routed toward the designated server over a predetermined path. Upon reaching a node storing the requested object, a response packet containing the object is generated and forwarded towards the request's source on the reverse path. Request packets are of negligible size compared to response messages, so we ingnore request traffic and focus on queuing resulting by response traffic alone.}
%\label{fig:kelly-caching}
%\end{figure}

Formally,  let $\demand$ be the set of packet classes. For each packet class $r\in \demand$, we denote by $p^r \subseteq V$ the simple path of adjacent nodes 
 visited by a packet. Packets of class $r\in \demand$ arrive according to an exogenous Poisson arrival process with rate $\lambda^r>0$, independent of other arrival processes and service times. Upon arrival, a packet travels across nodes in $p^r$,  traversing intermediate queues, and exits upon reaching the terminal node in $p^r$.

A Kelly network forms a Markov process over the state space determined by queue contents. In particular, let $n_e^r$ be the number of packets of class $r\in \demand$ in queue $e\in E$, and $n_e = \sum_{r\in \demand} n_e^r$ be the total queue size. The state of a queue $\mathbf{n}_{e}\in \demand^{n_{e}}$, $e\in E$, is the vector of length $n_{e}$ representing the class of each packet in each position of the queue. The \emph{system state} is then given by $\mathbf{n}=[\mathbf{n}_{e}]_{e\in E}$;  we denote by $\Omega$ the state space of this Markov process.

The aggregate arrival rate $\lambda_{e}$ at an edge $e\in E$ is given by $\lambda_{e} = \sum_{r:e\in p^r}\lambda^r$, while the \emph{load} at edge $e\in E$ is given by $\rho_{e}=\lambda_{e}/\mu_{e}$. Kelly's extension of Jackson's theorem \cite{kelly} states that,  if $\rho_{e}<1$ for all $e\in E$, the Markov process $\{\mathbf{n}(t);t\geq 0\}_{t\geq 0}$ is positive recurrent, and its steady-state distribution has the following \emph{product form}:
\begin{align}
\pi(\mathbf{n}) =  \prod_{e\in E} \pi_e(\mathbf{n}_e), \quad \mathbf{n}\in \Omega,\label{steadystatek}
\end{align}  
where
\begin{align}
\pi_e(\mathbf{n}_e) = (1-\rho_e) \prod_{r\in \demand:e\in p^r} \left(\frac{\lambda^r }{\mu_e}\right)^{n_e^r}.\label{marginals}
\end{align}
As a consequence, the  queue sizes  $n_e$, $e\in E$, also have a product form distribution in steady state, and their marginals are given by: \begin{align}\prob[n_e = k] = (1-\rho_e)\rho_e^k, \quad k \in \naturals.\label{queuesize} \end{align}
%\si{Say something about FIFO, LIFO, etc. Kelly's model in chapter 3 allows (a) class-independent insertions at arbitrary positions of the queue (b) class-independent split of service rates to any position of the queue. This definitely captures FIFO, LIFO, and processor sharing. }

The steady-state distribution  \eqref{steadystatek} holds for many different service principles beyond FIFO (c.f.~Section 3.1 of \cite{kelly} and Appendix~\ref{app:kelly}). In short, incoming packets can be placed in random position within the queue according to a given distribution, and the (exponentially distributed) service effort can be split across different positions, possibly unequally;  both placement and service effort distributions are class-independent. This captures  a broad array of policies including FIFO, Last-In First-Out (LIFO), and processor sharing: in all cases, the steady-state distribution is given by \eqref{steadystatek}.

\section{Monotone Separable Costs}\label{app:convexity}

Consider the state-dependent cost functions $c_e:\Omega\to\reals_+$ introduced in Section~\ref{sec:mincost}. The cost at state $\mathbf{n}\in \Omega$ can be written as
$c(\mathbf{n}) = \sum_{e\in E} c_e(n_e).$ Hence
$\expect[c(\mathbf{n})] =   \sum_{e\in E} \expect[c_e(n_e)]. $
On the other hand, as $c_e(n_e)\geq 0$, we have that
\begin{align}
\expect[c_e(n_e)] &= \sum_{n=0}^{\infty} c_e(n)  \prob(n_e = n)\nonumber  \\
&=c_e(0) +\sum_{n=0}^{\infty}(c_e(n+1)-c_e(n)) \prob(n_e > n)\nonumber\\
& \stackrel{\eqref{queuesize}}{=} c_e(0)+ \sum_{n=0}^{\infty}(c_e(n+1)-c_e(n)) \rho_e^n \label{eq:powerseries}
\end{align}
As $c_e$ is non-decreasing, 
$c_e(n+1)-c_e(n)\geq 0$ for all $n\in \naturals$. On the other hand, for all $n\in \naturals$, $\rho^n$ is a convex non-decreasing function of $\rho$ in $[0,1)$, so $\expect[c_e(n_e)]$ is a convex function of $\rho$ as a positively weighted sum of convex non-decreasing functions. \qed 

\section{Proof of Theorem~\ref{thm:coststruct} }
\label{app:coststruct}
We first prove the following auxiliary lemma:
\begin{lemma}\label{lemma:supermodularity}
	Let $f:\reals\to\reals$ be a convex and non-decreasing function. Also, let $g: \mathcal{X}\to \reals$ be a non-increasing supermodular set function. Then $h(\mathbf{x})\triangleq f(g(\mathbf{x}))$ is also supermodular.
\end{lemma}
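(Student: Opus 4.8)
The plan is to reduce the supermodularity of $h=f\circ g$ to a purely one‑dimensional comparison of four values of $f$. Fix sets $S\subseteq S'\subseteq\mathcal{X}$ and an element $j\in\mathcal{X}\setminus S'$, and abbreviate $a=g(S)$, $b=g(S\cup\{j\})$, $c=g(S')$, $d=g(S'\cup\{j\})$. By definition, supermodularity of $h$ at $(S,S',j)$ is exactly the inequality $f(b)-f(a)\le f(d)-f(c)$. The role of the hypotheses on $g$ is to fix the ordering of $a,b,c,d$: since $g$ is non‑increasing we get $a\ge b$, $c\ge d$, $a\ge c$, and $b\ge d$; introducing the marginal decrements $\Delta_S\triangleq a-b\ge 0$ and $\Delta_{S'}\triangleq c-d\ge 0$, the supermodularity of $g$ (its marginal gain $b-a$ does not exceed $d-c$) translates, after multiplying by $-1$, into precisely $\Delta_S\ge\Delta_{S'}\ge 0$.

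Next I would invoke the elementary fact that for a convex $f$ the map $t\mapsto f(t+\Delta)-f(t)$ is non‑decreasing in $t$ for every fixed $\Delta\ge 0$ (monotonicity of secant slopes of a convex function; no differentiability is needed). Combining this with monotonicity of $f$ gives the chain
\[
f(a)-f(b)=f(b+\Delta_S)-f(b)\ \ge\ f(d+\Delta_S)-f(d)\ \ge\ f(d+\Delta_{S'})-f(d)=f(c)-f(d),
\]
where the first inequality uses convexity of $f$ together with $b\ge d$, and the second uses that $f$ is non‑decreasing together with $\Delta_S\ge\Delta_{S'}$ (so $f(d+\Delta_S)\ge f(d+\Delta_{S'})$), and the two equalities use $a=b+\Delta_S$ and $c=d+\Delta_{S'}$. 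Rearranging $f(a)-f(b)\ge f(c)-f(d)$ yields $f(b)-f(a)\le f(d)-f(c)$, which is the desired supermodularity of $h$.

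I expect the only delicate point to be the bookkeeping of signs: because $g$ is non‑increasing, the ``marginal gains'' appearing in the definition of supermodularity are non‑positive, so one must be careful to read ``$g$ supermodular'' as $\Delta_S\ge\Delta_{S'}$ rather than the reverse, and to apply convexity of $f$ at the correct base points $b$ and $d$ (not $a$ and $c$). To keep the convexity step free of differentiability assumptions I would either cite the secant‑slope monotonicity characterization of convexity directly, or spell out $f(b+\Delta_S)-f(b)\ge f(d+\Delta_S)-f(d)$ from the three‑point inequality applied to $d\le b$. No other subtlety arises, and it is worth noting that the argument uses each hypothesis exactly once: non‑increasingness of $g$ to guarantee $\Delta_S,\Delta_{S'}\ge 0$, supermodularity of $g$ to get $\Delta_S\ge\Delta_{S'}$, convexity of $f$ for the first inequality in the chain, and monotonicity of $f$ for the second.
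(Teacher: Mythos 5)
Your proof is correct, and it is a genuinely different argument from the one in the paper. You work with the \emph{increment} formulation of supermodularity: fix $S\subseteq S'$, $j\notin S'$, set $a=g(S)$, $b=g(S\cup\{j\})$, $c=g(S')$, $d=g(S'\cup\{j\})$, and reduce everything to a one-dimensional comparison of four real values via the secant-slope monotonicity of a convex function (for $\Delta\ge0$, $t\mapsto f(t+\Delta)-f(t)$ is non-decreasing). The paper instead works with the \emph{lattice} formulation $h(\mathbf{x})+h(\mathbf{x}')\le h(\mathbf{x}\cap\mathbf{x}')+h(\mathbf{x}\cup\mathbf{x}')$: it expresses $g(\mathbf{x})$ and $g(\mathbf{x}')$ as convex combinations of the extremes $g(\mathbf{x}\cap\mathbf{x}')\ge g(\mathbf{x}\cup\mathbf{x}')$ with weights $\alpha,\alpha'$, applies Jensen's inequality to $f$ twice, and closes with the sign of $(1-\alpha-\alpha')\bigl(f(g(\mathbf{x}\cap\mathbf{x}'))-f(g(\mathbf{x}\cup\mathbf{x}'))\bigr)$, where supermodularity of $g$ is encoded in the weight constraint on $\alpha+\alpha'$. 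Your route is arguably cleaner: it makes the sign bookkeeping explicit ($\Delta_S\ge\Delta_{S'}\ge0$, $b\ge d$), isolates each hypothesis into exactly one step of the chain, and sidesteps the Jensen/weight calculation entirely; the paper's route has the virtue of staying entirely in the lattice language but requires tracking the convex-combination coefficients carefully. Both are valid and essentially of the same length.
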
	  
\begin{proof}
	
	Since $g$ is non-increasing, for any $\mathbf{x},\mathbf{x}' \subseteq \mathcal{X}$ we have
	$$g(\mathbf{x}\cap \mathbf{x}')\geq g(\mathbf{x})\geq g(\mathbf{x}\cup \mathbf{x}'),$$
	$$g(\mathbf{x}\cap \mathbf{x}')\geq g(\mathbf{x}')\geq g(\mathbf{x}\cup \mathbf{x}').$$
	Due to supermodularity of $g$, we can find $ \alpha, \alpha' \in [0,1]$, $ \alpha+ \alpha'\leq 1$ such that
	$$g(\mathbf{x})= (1- \alpha)g(\mathbf{x}\cap \mathbf{x}')+ \alpha g(\mathbf{x}\cup \mathbf{x}'),$$
	$$g(\mathbf{x}')= (1- \alpha')g(\mathbf{x}\cap \mathbf{x}')+ \alpha' g(\mathbf{x}\cup \mathbf{x}').$$
	
	Then, we have
	\begin{IEEEeqnarray*}{+rCl+x*}
		f(g(\mathbf{x}))&+&f(g(\mathbf{x}'))\\ &\leq&   (1- \alpha) f(g(\mathbf{x}\cap \mathbf{x}'))+  \alpha f(g(\mathbf{x}\cup \mathbf{x}')) \\&+& (1- \alpha')f(g(\mathbf{x}\cap \mathbf{x}'))+ \alpha'f(g(\mathbf{x}\cup \mathbf{x}'))\\
		&=& f(g(\mathbf{x}\cap \mathbf{x}'))+f(g(\mathbf{x}\cup \mathbf{x}'))\\&+& (1- \alpha - \alpha')(f(g(\mathbf{x}\cap \mathbf{x}'))-f(g(\mathbf{x}\cup \mathbf{x}')))\\
		&\leq&  f(g(\mathbf{x}\cap \mathbf{x}'))+f(g(\mathbf{x}\cup \mathbf{x}')),
	\end{IEEEeqnarray*}
	where the first inequality is due to convexity of $f$, and the second one is because $ \alpha+ \alpha'\leq 1$ and $f(g(.))$ is non-increasing. This proves $h(\mathbf{x})\triangleq f(g(\mathbf{x}))$ is supermodular.
\end{proof}
To conclude the proof of Thm.~\ref{thm:coststruct}, observe that
it is easy to verify that $\rho_{e}, \forall e\in E$, is supermodular and non-increasing in $S$. Since, by Assumption \ref{as:conv}, $C_e$ is a non-decreasing function, then, $C_e(S)\triangleq C_{e}(\rho_{u,v}(S))$ is non-increasing. By Lemma \ref{lemma:supermodularity}, $C_{s}(S)$ is also supermodular. Hence, the cost function is non-increasing  and supermodular as the sum of non-increasing and supermodular functions.

\section{Proof of Lemma~\ref{lem:tight}}\label{app:proofoflem:tight}
Consider the path topology illustrated in Fig.~\ref{fig:halfistight}. Assume that requests for files 1 and 2 are generated at node $u$ with rates $\lambda_1=\lambda_2=\delta$, for some  $\delta\in(0,1)$. Files 1 and 2 are stored permanently at $v$ and $z$, respectively. Caches exist only on $u$ and $w$, and have capacity $c_u=c_w=1$. Edges $(u,v)$, $(w,z)$ have bandwidth $\mu_{(u,v)}=\mu_{(w,z)}=1$, while edge $(u,w)$ is a high bandwidth link, having capacity $M\gg 1$. Let $\mathbf{x}_0=\mathbf{0}$.
The greedy algorithm starts from empty caches  and adds item 2 at cache $u$. This is because the caching gain from this placement is $c_{(u,w)}+c_{(w,z)}= \frac{1}{M-\delta} + \frac{1}{1-\delta}$, while the caching gain of all other decisions is at most $\frac{1}{1-\delta}$. Any subsequent caching decisions do not change the caching gain. The optimal solution is to cache item 1 at $u$ and item 2 at $w$, yielding a caching gain of $2/(1-\delta)$. Hence, the greedy solution attains an approximation ratio 
%$$\frac{ \frac{1}{M-\delta} + \frac{1}{1-\delta}   }{\frac{2}{1-\delta}} =
$0.5\cdot (1+\frac{1-\delta}{M-\delta}).$
By  appropriately choosing $M$ and $\delta$, this can be made arbitrarily close to 0.5. \qed

	\section{Rounding}\label{sec:rounding}
%\si{This can go to an appendix if room is needed}

	Several poly-time algorithms can be used to round the fractional solution that is produced by Alg.~\ref{alg:cg} to an integral $\mathbf{x}\in \domain$.	
     We briefly review two such rounding algorithms: pipage rounding \cite{ageev2004pipage}, which is deterministic, and swap-rounding~\cite{chekuri2010dependent}, which is randomized.
For a more rigorous treatment, we refer the reader to \cite{ageev2004pipage,ioannidis2016adaptive} for pipage rounding, and \cite{chekuri2010dependent} for swap rounding.

 Pipage rounding uses the following property of $G$: given a fractional solution $\mathbf{y} \in \tilde{\domain}$, there are at least two fractional variables $y_{vi}$ and $y_{v'i'}$, such that transferring mass from one to the other,$1)$ makes at least one of them 0 or 1, $2)$ the new $\hat{\mathbf{y}}$ remains feasible in $\tilde{\domain}$, and $3)$ $G(\hat{\mathbf{y}}) \geq G(\mathbf{y}(1))$, that is, the expected caching gain at $\hat{\mathbf{y}}$ is at least as good as $\mathbf{y}$. This process is repeated until $\hat{\mathbf{y}}$ does not have any fractional element, at which point pipage rounding terminates and return $\hat{\mathbf{y}}$.
	This procedure has a run-time of $O(|V||\mathcal{C}|)$ \cite{ioannidis2016adaptive}, and since
	each rounding step can only increase $G$, it follows that the final integral  $\hat{\mathbf{y}}\in \domain$ must satisfy
	\begin{equation*}
	F(\hat{\mathbf{y}})=G(\hat{\mathbf{y}})\geq \mathbb{E}[G(\mathbf{y})]  \geq (1-\frac{1}{e})G(\mathbf{y}^*) \geq (1-\frac{1}{e}) F(\mathbf{x}^*),
	\end{equation*}
       where $\mathbf{x}^*$ is an optimal solution to \textsc{MaxCG}. Here, the first equality holds because $F$ and $G$ are equal when their arguments are integral, while the last inequality holds because \eqref{eq:nonconv} is a relaxation of \textsc{MaxCG}, maximizing the same objective over a larger domain. 	

In swap rounding, given a fractional solution $\mathbf{y} \in \tilde{\domain}$ produced by Alg.~\ref{alg:cg} observe that it can be written as a convex combination of integral vectors in $\domain$, i.e., $\mathbf{y} = \sum_{k=1}^K \gamma_k \mathbf{m}_k,$ where $\gamma_k\in [0,1], \sum_{k=1}^K\gamma_k=1,$ and $\mathbf{m}_k \in\domain$.Moreover, by construction, each such vector $\mathbf{m}_k$ is maximal, in that all capacity constraints are tight. Swap rounding iteratively merges these constituent integral vectors, producing an integral solution. At each iteration $i$, the present integral vector $\mathbf{c}_k$ is merged with $\mathbf{m}_{k+1}\in\mathcal{D}$ into a new integral solution $\mathbf{c}_{k+1}\in \mathcal{D}$ as follows:  if the two solutions $\mathbf{c}_k$, $\mathbf{m}_{k+1}$ differ at a cache $v\in V$, items in this cache are swapped to reduce the set difference: either an item $i$ in a cache in $\mathbf{c}_k$  replaces an item  $j$ in $\mathbf{m}_{k+1}$, or an item  $j$ in $\mathbf{m}_{k+1}$ replaces an item $i$ in $\mathbf{c}_k$; the former occurs with probability proportional to   $\sum_{\ell=1}^{k}\gamma_{\ell}$, and the latter with probability proportional to $\gamma_{k+1}$. The swapping is repeated until the two integer solutions become identical; this merged solution becomes $\mathbf{c}_{k+1}$. This process terminates after $K-1$ steps, after which all the points $\mathbf{m}_k$ are merged into a single integral vector $\mathbf{c}_K\in \mathcal{D}.$ Observe that, in contrast to pipage rounding, swap rounding does not require any evaluation of the objective $F$ during rounding. This makes swap rounding significantly faster to implement; this comes at the expense of the approximation ratio, however, as the resulting guarantee $1-1/e$ is in expectation.

\section{Proof of Lemma~\ref{wdnfmoments}}\label{app:wdnfmomentslemma} 
We prove this by induction on $k\geq 1$. 
Observe first that, by \eqref{eq:rho_X}, the load on each edge $e=(u,v)\in E$ can be written as a polynomial of the following form:
\begin{align}
	\rho_e(\mathbf{x},\bm{\lambda}) = \sum_{r\in \demand_e}\beta_r(\bm{\lambda}) \cdot \prod_{j\in \mathcal{I}_e(r)}(1-x_j)\label{rhoiswdnf},
\end{align}
for appropriately defined 
\begin{align*}\demand_e &=\demand_{(u,v)} = \{r\in \demand: (v,u)\in p^r \},\\
\mathcal{I}_e(r) &=\{(w,i^r)\in V\times \catalog: w\in p^r, k_{p^r}(w)\leq k_{p^r}(v)\}, \text{ and}\\
  \beta_r(\bm{\lambda})&=\lambda^r/\mu_e.
\end{align*}

In other words, $\rho_e:\domain_{\bm{\lambda}}\to \reals$ is indeed a \emph{W-DNF} polynomial.  
For the induction step, observe that W-DNF polynomials, seen as functions over the integral domain $\domain_{\bm{\lambda}}$, are \emph{closed under multiplication}. In particular, the following lemma holds:
\begin{lemma}\label{lemma:prodclosed}
	Given two W-DNF  polynomials $f_1:\domain_{\bm{\lambda}}\to\reals$ and $f_2:\domain_{\bm{\lambda}}\to\reals$, given by 
\begin{align*}
f_1(\mathbf{x}) &= \sum_{r\in \demand_1}\beta_r \prod_{t\in \mathcal{I}_1(r)}(1-x_t), \quad \text{and}\\
f_2(\mathbf{x}) &= \sum_{r\in \demand_2}\beta_r \prod_{t\in \mathcal{I}_2(r)}(1-x_t),\end{align*} 
their product $f_1\cdot f_2$ is also a W-DNF polynomial over $\domain_{\bm{\lambda}}$, given by:
$$(f_1 \cdot f_2)(\mathbf{x}) = \sum_{(r,r')\in \demand_1\times\demand_2} \beta_r\beta_r' \prod_{t\in \mathcal{I}_1(r)\cup\mathcal{I}_2(r')} (1-x_t)$$
\end{lemma}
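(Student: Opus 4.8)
The plan is to prove the claim by direct expansion of the product, relying on the single algebraic fact that $(1-x_t)^2 = (1-x_t)$ whenever $x_t \in \{0,1\}$, which is exactly what makes the domain $\domain_{\bm{\lambda}}$ (a subset of $\{0,1\}^{|V||\catalog|}$) special. First I would write out
\begin{align*}
(f_1\cdot f_2)(\mathbf{x}) &= \Big(\sum_{r\in\demand_1}\beta_r\prod_{t\in\mathcal{I}_1(r)}(1-x_t)\Big)\Big(\sum_{r'\in\demand_2}\beta_{r'}\prod_{t\in\mathcal{I}_2(r')}(1-x_t)\Big)\\
&= \sum_{(r,r')\in\demand_1\times\demand_2}\beta_r\beta_{r'}\Big(\prod_{t\in\mathcal{I}_1(r)}(1-x_t)\Big)\Big(\prod_{t\in\mathcal{I}_2(r')}(1-x_t)\Big),
\end{align*}
where the second line is just distributivity of multiplication over the two finite sums.

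The key step is then to simplify, for each fixed pair $(r,r')$, the product of the two monomials. Splitting the index sets into the parts $\mathcal{I}_1(r)\setminus\mathcal{I}_2(r')$, $\mathcal{I}_2(r')\setminus\mathcal{I}_1(r)$, and $\mathcal{I}_1(r)\cap\mathcal{I}_2(r')$, each factor $(1-x_t)$ with $t$ in the intersection appears squared; invoking $(1-x_t)^2 = (1-x_t)$ for $x_t\in\{0,1\}$ collapses the square to a single factor, so that
$$\Big(\prod_{t\in\mathcal{I}_1(r)}(1-x_t)\Big)\Big(\prod_{t\in\mathcal{I}_2(r')}(1-x_t)\Big) = \prod_{t\in\mathcal{I}_1(r)\cup\mathcal{I}_2(r')}(1-x_t)$$
as an identity of functions on $\domain_{\bm{\lambda}}$. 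Substituting this back yields exactly the claimed W-DNF form, with index set $\demand_1\times\demand_2$, coefficients $\beta_r\beta_{r'}>0$ (a product of positives), and literal sets $\mathcal{I}_1(r)\cup\mathcal{I}_2(r')$.

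There is no real obstacle here; the only subtlety worth stating explicitly is that the identity $(1-x_t)^2 = (1-x_t)$ holds only on the integral domain, so the resulting equality between the product polynomial and its W-DNF representative is an equality of functions restricted to $\domain_{\bm{\lambda}}$ (or $\domain$), not an equality of formal polynomials — which is all that is needed, since Lemma~\ref{lemma:compute} and Lemma~\ref{wdnfmoments} are likewise statements about evaluation on these domains. Once Lemma~\ref{lemma:prodclosed} is in hand, the induction for Lemma~\ref{wdnfmoments} closes immediately: $\rho_e$ is W-DNF by \eqref{rhoiswdnf}, and $\rho_e^{k+1} = \rho_e^k\cdot\rho_e$ is a product of two W-DNF polynomials, hence W-DNF.
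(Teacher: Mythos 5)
Your proof is correct and takes essentially the same route as the paper: distribute the two finite sums, then collapse squared factors over $\mathcal{I}_1(r)\cap\mathcal{I}_2(r')$ using $(1-x_t)^2=(1-x_t)$ on the binary domain (the paper phrases this via the symmetric set difference $\mathcal{I}_1(r)\triangle\mathcal{I}_2(r')$, which is just the union of your two set differences). Your explicit remark that the resulting equality holds as functions on $\domain_{\bm{\lambda}}$ rather than as formal polynomials is a nice clarification the paper leaves implicit.
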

\begin{proof}
To see this, observe that 
\begin{align*}
&f_1(\mathbf{x}) f_1(\mathbf{x}) =  \\&\sum_{(r,r')\in \demand_1\times\demand_2} \!\!\!\beta_r\beta_r'\!\!\!\!\! \prod_{t\in \mathcal{I}_1(r)\cap\mathcal{I}_2(r')}\!\!\! (1-x_t)^2\!\!\!\!\! \prod_{t\in \mathcal{I}_1(r)\triangle\mathcal{I}_2(r')}\!\!\!(1-x_t)
\end{align*}
where $\triangle$ is the symmetric set difference. On the other hand, as $(1-x_t)\in \{0,1\}$, we have that $(1-x_t)^2=(1-x_t)$, and the lemma follows.
\end{proof}
Hence, if  $\rho_e^k(\mathbf{x},\bm{\lambda})$ is a W-DNF polynomial, by \eqref{rhoiswdnf} and Lemma~\ref{lemma:prodclosed}, so is  $\rho_e^{k+1}(\mathbf{x},\bm{\lambda})$.\qed

\section{Proof of Lemma~\ref{taylorlemma}}\label{proofoftaylorlemma}
The Taylor expansion of $C_e$ at $\rho^*$ is given by:
\begin{align*}
	C_{e}(\rho)& = C_{e}(\rho^*) + \sum_{k=1}^{L}\frac{1}{k!}C^{(k)}_{e}(\rho^*)(\rho-\rho^*)^k + \\%o\left((\rho-\rho^*)^L\right)
&\quad+	\frac{1}{(L+1)!}C^{(L+1)}_{e}(\rho')(\rho-\rho^*)^{L+1},
\end{align*}
where $\rho' \in [\rho^*,\rho]$ and
 $C_e^{(k)}$ is the $k$-th order derivative of $C_e$. 
By expanding this  polynomial and reorganizing the terms, we get
%\begin{equation}
$$	C_{e}(\rho) =  \sum_{k=0}^{L}\alpha^{(k)}_e\rho^k+ %o((\rho-\rho^*)^L)$$
\frac{1}{(L+1)!}C^{(L+1)}_{e}(\rho')(\rho-\rho^*)^{L+1},$$
%\end{equation}
where 
%\begin{equation}
	$$\textstyle\alpha^{(k)}_e  = \sum_{i=k}^{L} \frac{(-1)^{i-k}\binom{i}{k}}{i!}C^{(i)}_{e}(\rho^*)(\rho^*)^{i-k},$$ for $k=0,1,\cdots,L.$
%\end{equation}
Consider now the $L$-th order Taylor approximation of $C_e$, given by
$$\hat{C}_e(\rho) =  \sum_{k=0}^{L}\alpha^{(k)}_e\rho^k. $$
Clearly, this is an estimator of $C_e$, with an error of the order $|C_e(\rho)-\hat{C}_e(\rho)|=o\left((\rho-\rho_*)^L\right).$ Thus, for $\mathbf{x} \in \domain$ a random Bernoulli vector parameterized by $\mathbf{y}\in \tilde{\domain}$,
\begin{align}\!\!\!\expect_{\mathbf{y}} [C_e(\rho_e(\mathbf{x},\bm{\lambda}))] &\approx \expect_{\mathbf{y}} [\hat{C}_e(\rho_e(\mathbf{x},\bm{\lambda}))]%\\ 
%&
=  \sum_{k=0}^{L}\alpha^{(k)}_e \expect_{\mathbf{y}}[\rho^k_e(\mathbf{x},\bm{\lambda})]\label{eq:test} \end{align}
On the other hand, for all $v\in V$ and $i\in \catalog$:
\begin{align}
	\frac{\partial G(\mathbf{y})}{\partial y_{vi}}& \stackrel{\eqref{eq:deriv_G} }{=} \mathbb{E}_{\mathbf{y}}[F(\mathbf{x})|x_{vi}=1]- \mathbb{E}_{\mathbf{y}}[F(\mathbf{x})|x_{vi}=0]\nonumber\\
& \stackrel{\eqref{eq:fobj}}{=} \mathbb{E}_{\mathbf{y}}[C(\mathbf{x})|x_{vi}=0] - \mathbb{E}_{\mathbf{y}}[C(\mathbf{x})|x_{vi}=1]\nonumber\\
\begin{split}\label{eq:taylorapprox}
& \stackrel{\eqref{eq:obj},\eqref{eq:test}}{\approx} \sum_{e\in E} \sum_{k=1}^{L}\alpha^{(k)}_e \Big (\expect_{\mathbf{y}}[\rho^k_e(\mathbf{x},\bm{\lambda})|x_{vi}=0] \\&\qquad\qquad\qquad\qquad-\expect_{\mathbf{y}}[\rho^k_e(\mathbf{x},\bm{\lambda})|x_{vi}=1] \Big),
\end{split}
\end{align} 
where the error of the approximation is given by
\begin{align*}&\frac{1}{(L+1)!}\sum_{e\in E}C^{(L+1)}_{e}(\rho') \Big[ \expect_{\mathbf{y}}[(\rho_e(\mathbf{x},\bm{\lambda})-\rho^*)^{L+1}|x_{vi}=0] \\
&\quad-  \expect_{\mathbf{y}}[(\rho_e(\mathbf{x},\bm{\lambda})-\rho^*)^{L+1}|x_{vi}=1]\Big] \end{align*}
The lemma thus follows from Lemmas~\ref{wdnfmoments} and~\ref{lemma:compute}.
%In turn, eq.~\eqref{eq:taylorapprox} implies that if we are able to produce an estimate of the moments of the loads on every edge:
%$$\expect_\mathbf{y}[\rho_e^k(\mathbf{x},\bm{\lambda})], \quad e\in E,~k=1,2,\ldots,L $$
%we can use them to produce an estimate of $\nabla G(\mathbf{y})$. Our key observation here is that, indeed, \emph{moments of the loads can be computed in closed form, without sampling!}

%In other words, by expressing a moment $\rho_e^k$ in a W-DNF form, we are able to directly compute its expectation.
%Thus, Lemmas~\ref{lemma:prodclosed} and~\ref{lemma:compute} imply that all moments $\expect_{\mathbf{y}}[\rho_e^k]$, $k=1,\ldots,L$, appearing in the Taylor estimator \eqref{eq:taylorapprox} of the partial derivative w.r.t.~$y_{vi}$, can indeed be computed in a closed form via 
%\begin{align}
%\!\!\!\frac{\partial G(\mathbf{y})}{\partial y_{vi}}\approx 
%\sum_{e\in E} \sum_{k=1}^{L}\alpha^{(k)}_e \left[\rho^k_e\left([\mathbf{y}]_{-(v,i)},\bm{\lambda}\right) -\rho^k_e\left([\mathbf{y}]_{+(v,i)},\bm{\lambda}\right)\right]\label{eq:taylorapprox2}\end{align}
%where  $\rho^k_e$ corresponds to the W-DNF polynomial representing the $k$-th power of $\rho_e$, evaluated over the fractional variables $\mathbf{y}$, with $y_{vi}$ set to 0 and 1 for each of the two terms in \eqref{eq:taylorapprox2}.

\section{Proof of Theorem \ref{theorem:converge}}
\label{app:3}

We begin by bounding the bias of estimator \eqref{eq:taylorapprox}.
Indeed, given a set of continuous functions $\{C_{(u,v}\}_{(u,v)\in E}$ where their first $L+1$ derivatives within their operating regime, $[0,1)$, are upperbounded by a finite constant, $W$, the bias of estimator $\mathbf{z}\equiv [z_{vi}]_{v\in V,i\in \mathcal{C}}$, where $z_{vi}$ is  defined by \eqref{eq:taylorapprox2}, is given by
\begin{align}
B&\equiv ||\mathbf{z}\ -\triangledown G(\mathbf{y})||_2\nonumber\\& =	||\sum_{e\in E}\frac{1}{(L+1)!}C^{(L+1)}_{e}(\rho'_{e})(\rho_{e}-\rho^*_{e})^{L+1}||_2, \label{eq:bias}
\end{align}
where $\rho'_{e} \in [\rho^*_{e},\rho_{e}]$.  To compute the bias, we note that $\rho_{e}, \rho^*_{e} \in [0,1]$. Specifically, we assume $\rho_{e}, \rho^*_{e} \in [0,1)$. Hence, $|\rho_{e}- \rho^*_{e}| \leq 1$, and $C^{(L+1)}_{e}(\rho'_{e}) \leq \max \{C^{(L+1)}_{e}(\rho_{e}) ,C^{(L+1)}_{e}(\rho^*_{e}) \} <\infty$. In particular, let $ W =\max_{e\in E} C^{(L+1)}_{e}(\rho'_{e})$. Then, it is easy to compute the following  upper bound on the bias of $\mathbf{z}$:
\begin{equation}\label{eq:bias_bound}
B \leq \frac{W|E|}{(L+1)!}.
\end{equation}
In addition, note that $G$ is linear in $y_{vi}$, and hence \cite{calinescu2011}: 
\begin{equation}
\begin{split}
\frac{\partial G}{\partial y_{vi}} = \mathbb{E}[F(\mathbf{x})|x_{vi}=1]- \mathbb{E}[F(\mathbf{x})|x_{vi}=0]\\= \mathbb{E}[C(\mathbf{x})|x_{vi}=0] - \mathbb{E}[C(\mathbf{x})|x_{vi}=1]\geq 0,
\end{split}
\end{equation} 
which is $\geq 0$ due to monotonicity of $F(\mathbf{x})$. It is easy to verify that $\frac{\partial^2 G}{\partial y_{vi}^2} = 0$. For $(v_1,i_1)\neq(v_2,i_2)$,  we can compute the second derivative of $G$ \cite{calinescu2011} as given by
\begin{IEEEeqnarray*}{+rCl+x*}
	\frac{\partial^2 G}{\partial y_{v_1i_1}\partial y_{v_2i_2}} &=& \mathbb{E}[C(\mathbf{x})|x_{v_1i_1}=1,x_{v_2i_2}=0]\\ &+&\mathbb{E}[C(\mathbf{x})|x_{v_1i_1}=0,x_{v_2i_2}=1] \\&-&\mathbb{E}[C(\mathbf{x})|x_{v_1i_1}=1,x_{v_2i_2}=1]\\ &-& \mathbb{E}[C(\mathbf{x})|x_{v_1i_1}=0,x_{v_2i_2}=0] \leq 0,
\end{IEEEeqnarray*}
which is $\leq 0$ due to the supermodularity of $C(\mathbf{x})$. Hence, $G(\mathbf{y})$ is component-wise concave \cite{calinescu2011} .

In additions, it is easy to see that for $\mathbf{y}\in \tilde{\domain}$, $||G(\mathbf{y})||$, $||\triangledown G(\mathbf{y})||$, and $||\triangledown^2 G(\mathbf{y})||$ are bounded by $C(\mathbf{x}_0)$, $C(\mathbf{x}_0)$ and $2C(\mathbf{x}_0)$, respectively. Consequently, $G$ and $\triangledown G$ are $P$-Lipschitz continuous, with $P=2C(\mathbf{x}_0)$. 

In the $k$th iteration of the Continuous Greedy algorithm, let $\mathbf{m}^*=\mathbf{m}^*(\mathbf{y}_k):=(\mathbf{y}^*\vee (\mathbf{y}_k+\mathbf{y}_0))-\mathbf{y}_k=(\mathbf{y}^*-\mathbf{y}_k)\vee \mathbf{y}_0 \geq \mathbf{y}_0$, where $x\vee y:=(\max\{x_i,y_i\})_i$. Since $\mathbf{m}^*\leq \mathbf{y}^*$ and $\mathcal{D}$ is closed-down, $\mathbf{m}^*\in \mathcal{D}$. Due to monotonicity of $G$, it follows
\begin{equation}
G(\mathbf{y}_k+\mathbf{m}^*)\geq G(\mathbf{y}^*). \label{eq:G(Y+M)}
\end{equation}

We introduce univariate auxiliary function $g_{\mathbf{y},\mathbf{m}}(\xi):=G(\mathbf{y}+ \xi \mathbf{m}),\xi \in [0,1], \mathbf{m}\in \tilde{\domain}$. Since $G(\mathbf{y})$ is component-wise concave, then, $g_{\mathbf{y},\mathbf{m}}(\xi)$ is concave in $[0,1]$. 
In addition, since $g_{\mathbf{y}_k,\mathbf{m}^*}(\xi) = G(\mathbf{y}_k+\xi \mathbf{m}^*)$ is concave for $\xi\in[0,1]$, it follows
\begin{equation}
\begin{split}
g_{\mathbf{y}_k,\mathbf{m}^*}(1) - g_{\mathbf{y}_k,\mathbf{m}^*}(0) =G(\mathbf{y}_k+\mathbf{m}^*)-G(\mathbf{y}_k)\\ \leq   \frac{d g_{\mathbf{y}_k,\mathbf{m}}(0)}{d \xi}\times 1 = \langle \mathbf{m}^*,\triangledown G(\mathbf{y}_k)\rangle.\label{eq:Gdiff}
\end{split}
\end{equation}

Now let $\mathbf{m}_k$ be the vector chosen by Algorithm \ref{alg:cg} in the $k$th iteration. We have
\begin{equation}\label{eq:mAndz}
\langle\mathbf{m}_k,\mathbf{z}(\mathbf{y}_k)\rangle \geq \langle \mathbf{m}^*,\mathbf{z}(\mathbf{y}_k)\rangle .
\end{equation}

For the LHS, we have 
\begin{IEEEeqnarray}{+rCl+x*}
	& &\langle \mathbf{m}_k,\mathbf{z}\rangle  = \langle \mathbf{m}_k,\triangledown G(\mathbf{y}_k)\rangle+ \langle \mathbf{m}_k,\mathbf{z}  - \triangledown G(\mathbf{y}_k)\rangle \nonumber\\& &
	\stackrel{(i)}{\leq}  \langle \mathbf{m}_k,\triangledown G(\mathbf{y}_k)\rangle+||m_k||_2
	\cdot | \mathbf{z} - \triangledown G(\mathbf{y}_k)||2 \leq\nonumber\\&&
	\langle \mathbf{m}_k,\triangledown G(\mathbf{y}_k)\rangle + DB.
\label{eq:dummy_D}
\end{IEEEeqnarray}
where $D = \max_{\mathbf{m}\in \tilde{\domain}} \|\mathbf{m}\|_2 \leq |V|\cdot\max\limits_{v\in{V}}c_v$, is the upperbound on the diameter of $\tilde{\mathcal{D}}$, $B$ is as defined in \eqref{eq:bias_bound}, and (i) follows from Cauchy-Schwarz inequality. Similarly, we have for the RHS of that \eqref{eq:mAndz}
\begin{equation}
\langle \mathbf{m}^*,\mathbf{z}(\mathbf{y}_k)\rangle \geq \langle \mathbf{m}^*,\triangledown G(\mathbf{y}_k)\rangle - DB.
\end{equation}
%The expectation of the RHS can be computed as follows \cite{nemirovski-efficient}: since $X_k$ and $X_l$ are independent random matrices for $k\neq l$, we first can take the expectation w.r.t. $X_k$, and then take the expectation of the result w.r.t. $X^{k-1}$. This means that $Y_k$ and consequently, $\mathbf{m}_k$ are independent from $X_k$. In addition, we note that $\mathbb{E}_{X_k}[z(Y_k(X^{k-1}),X_k)] = \mathbb{E}[\triangledown G(Y_k(X^{k-1}))]$. Similar argument can be made for the RHS of \eqref{eq:E_main}. 

It follows
\begin{IEEEeqnarray}{+rCl+x*}
	& &\langle \mathbf{m}_k,\triangledown G(\mathbf{y}_k)\rangle + 2DB \geq \langle \mathbf{m}^*,\triangledown G(\mathbf{y}_k)\rangle   \nonumber\\&&\stackrel{(a)}{\geq}
G(\mathbf{y}_k+\mathbf{m}^*) - G(\mathbf{y}_k)
	\stackrel{(b)}{\geq} G(\mathbf{y}^*) - G(\mathbf{y}_k),\label{eq:dummy_2}
\end{IEEEeqnarray}
where $(a)$ follows from \eqref{eq:Gdiff}, and $(b)$ follows from \eqref{eq:G(Y+M)}.

Using the $P$-Lipschitz continuity property of $\frac{d g_{\mathbf{y}_k,\mathbf{m}_k}(\xi)}{d \xi}$ (due to $P$-Lipschitz continuity of $\triangledown G$), it is straightforward to see that
\begin{equation}
\begin{split}
-\frac{P\gamma_k^2}{2}\leq g_{\mathbf{y}_k,\mathbf{m}_k}(\gamma_k)-g_{\mathbf{y}_k,\mathbf{m}_k}(0)- \gamma_k\cdot\frac{d g_{\mathbf{y}_k,\mathbf{m}_k}(0)}{d \xi}= \\ G(\mathbf{y}_k+\gamma_k \mathbf{m}_k)-G(\mathbf{y}_k)-\gamma_k< \mathbf{m}_k , \triangledown G(\mathbf{y}_k)>,
\end{split}
\end{equation}
hence,
\begin{IEEEeqnarray}{+rCl+x*}
	&&G(\mathbf{y}_{k+1})-G(\mathbf{y}_k) \geq \gamma_k \langle \mathbf{m}_k , \triangledown G(\mathbf{y}_k)\rangle -\frac{P\gamma_k^2}{2} \geq \nonumber\\&&
	\gamma_k\langle \mathbf{m}_k , \triangledown G(\mathbf{y}_k)\rangle-\frac{P\gamma_k^2}{2} 
	\stackrel{(c)}{\geq} \nonumber\\&&\gamma_k (G(\mathbf{y}^*) - G(\mathbf{y}_k))- 2\gamma_kDB-\frac{P\gamma_k^2}{2},\label{eq:dummy_3}
\end{IEEEeqnarray}
where $(c)$ follows from \eqref{eq:dummy_2}, respectively. By rearranging the terms and letting $k=K-1$, we have
\begin{IEEEeqnarray*}{+rCl+x*}
	&&G(\mathbf{y}_{K})-G(\mathbf{y}^*)\\&& \geq  \prod_{j=0}^{K-1}(1- \gamma_j)(G(\mathbf{y}_0) - G(\mathbf{y}^*)) - 2DB\sum_{j=0}^{K-1}\gamma_j -\frac{P}{2}\sum_{j=0}^{K-1}\gamma_j^2\\&&
	\stackrel{(e)}{\geq}(G(\mathbf{y}_0)-G(\mathbf{y}^*))\exp\{-\sum_{j=0}^{K-1}\gamma_j\} - 2DB\sum_{j=0}^{K-1}\gamma_j -\frac{P}{2}\sum_{j=0}^{K-1}\gamma_j^2,
\end{IEEEeqnarray*}
where $(e)$ is true since  $1-x\leq e^{-x}, \forall x\geq 0$, and $G(\mathbf{y}_0)\leq G(\mathbf{y}^*)$ holds due to the greedy nature of Algorithm \ref{alg:cg} and monotonicity of $G$. In addition, Algorithm \ref{alg:cg} ensures $\sum_{j=0}^{K-1}\gamma_j=1$. It follows
\begin{equation}
G(\mathbf{y}_{K})-(1-\frac{1}{e})G(\mathbf{y}^*) \geq e^{-1}G(\mathbf{y}_0)- 2DB- \frac{P}{2}\sum_{j=0}^{K-1}\gamma_j^2.\label{eq:conv_gamma}
\end{equation} 

This result holds for general stepsizes $0<\gamma_j\leq 1$. The RHS of \eqref{eq:conv_gamma} is indeed maximized when $\gamma_j= \frac{1}{K}$, which is the assumed case in Algorithm \ref{alg:cg}. In addition, we have $\mathbf{y}_0 = \mathbf{0}$, and hence, $G(\mathbf{y}_0)=0$.  Therefore, we have
\begin{equation}
G(\mathbf{y}_{K})-(1-\frac{1}{e})G(\mathbf{y}^*) \geq - 2DB - \frac{P}{2K}.
\end{equation} 
\section{General Kelly Networks}
\label{app:kelly}
In Kelly's network of queues (see Section 3.1 of \cite{kelly} for more information), queue $e\in \{1,2,\cdots,|E|\}$, assuming it contains $n_e$ packets in the queue, operates in the following manner:
\begin{enumerate}
	\item Each packet (customer) requires an exponentially distributed amount of service.
	\item A total service effort is provided by queue $e$ at the rate $\mu_e(n_e)$.
	\item The packet in position $l$ in the queue is provided with a portion $\gamma_e(l,n_e)$  of the total service effort, for $l=1,2,\cdots,n_e$; when this packet completes service and leaves the queue, packets in positions $l+1,l+2,\cdots,n_e$ move down to positions $l,l+1,\cdots,n_e-1$, respectively.
	\item An arriving packet at queue $j$ moves into position $l$, for $l=1,2,\cdots,n_e$, with probability $\delta_e(l,n_e+1)$; packets that where in positions $l,l+1,\cdots,n_e+1$, move up to positions $l+1,l+2,\cdots,n_e+1$, respectively.
\end{enumerate} 
Clearly, we require $\mu_e(n_e)>0$ for $n_e>0$; in addition,
\begin{equation}
\sum_{l=1}^{n_e} \gamma_e(l,n_e) = 1,
\end{equation}
\begin{equation}
\sum_{l=1}^{n_e} \delta_e(l,n_e) = 1.
\end{equation}

Kelly's theorem \cite{kelly} states that,  if $\rho_{e}<1$ for all $e\in E$, the state of queue $e$ in equilibrium is independent of the rest of the system, hence, it will have a product form. In addition, the probability that queue $e$ contains $n_e$ packets is
\begin{align}\label{eq:steady_gen}
\pi_e(n_e) = b_e  \frac{\lambda_e^{n_e} }{\prod_{l=1}^{n_e}\mu_e(l)},
\end{align}
where $b_e$ is the normalizing factor. As can be seen from \eqref{eq:steady_gen}, note that the steady-state distribution is not function of  $\gamma_e$'s, and $\delta_e(l,n_e+1)$'s, and hence, is independent of the packet placement and service allocation distributions.

We note that by allowing $\mu_e(l) = mu_e$, we obtain the results in \eqref{queuesize}.
\section{Proof of Lemma \ref{lemma:mmk}}
\label{app:mmk}
For an arbitrary network of M/M/k queues, the traffic load on queue $(u,v)\in {E}$ is given as
\begin{equation}
a_{(u,v)}(\mathbf{x}) = \frac{\sum\limits_{r\in\mathcal{R}: (v,u)\in p^r}\lambda^{r} \prod\limits_{k'=1}^{k_{p^r}(v)}(1-x_{p^r_{k'}i^r})}{k\mu_{(u,v)}}, \label{eq:rho_X_k}
\end{equation} 
which is similar to that of M/M/1 queues, but normalized by the number of servers, $k$. Hence, $a_{(u,v)}(\mathbf{x}) $ is submodular in $\mathbf{x}$. For an M/M/k queue, the probability that an arriving packet finds all servers busy and will be forced to wait in queue is given by Erlang C formula \cite{datanetworks}, which follows
\begin{equation}
P_{(u,v)}^{Q}(\mathbf{x}) =\frac{b_{(u,v)}(\mathbf{x})(ka_{(u,v)}(\mathbf{x}))^k}{k!(1-a_{(u,v)}(\mathbf{x}))},
\end{equation} 
where 
\begin{equation}
b_{(u,v)}(\mathbf{x}) = \left[\sum_{n=0}^{k-1}\frac{(ka_{(u,v)}(\mathbf{x}))^n}{n!}+\frac{(ka_{(u,v)}(\mathbf{x}))^k}{k!(1-a_{(u,v)}(\mathbf{x}))}\right]^{-1},
\end{equation} 
is the normalizing factor. In addition, the expected number of packets waiting for or under transmission is given by 
\begin{equation}
\mathbb{E} [n_{(u,v)}(\mathbf{x})] = ka_{(u,v)}(\mathbf{x}) + \frac{a_{(u,v)}(\mathbf{x})P_{(u,v)}^{Q}(\mathbf{x})}{1-a_{(u,v)}(\mathbf{x})}.
\end{equation}

Lee and Cohen in \cite{erlangC}, shows that $P_{(u,v)}^{Q}(\mathbf{x})$ and $\mathbb{E} [n_{(u,v)}(\mathbf{x})] $ are strictly increasing and convex in $a_{(u,v)}(\mathbf{x})$, for $a_{(u,v)}(\mathbf{x})\in [0,1)$. In addition, a more direct proof of convexity of $\mathbb{E} [n_{(u,v)}(\mathbf{x})] $ was shown by Grassmann in \cite{MMK}. Hence, Both $P(\mathbf{x}) := \sum_{(u,v) \in {E}}P_{(u,v)}^{Q}(\mathbf{x})$ and $N(\mathbf{x}) := \sum_{(u,v) \in {E}}\mathbb{E} [n_{(u,v)}(\mathbf{x})] $ are increasing and convex. Due to Theorem \ref{thm:coststruct}, we note that both functions are non-increasing and supermodular in $\mathbf{x}$, and the proof is complete.
\section{Networks of Symmetric Queues}
\label{app:symmetric}
Let $n_e$ be the number of packets placed in positions $1,2,\cdots,n$ in queue $e\in E$. Queue $e$ is defined as symmetric queue if it operates in the following manner
\begin{enumerate}
	\item The service requirement of a packet is a random variable whose distribution may depend upon the class of the customer.
	\item A total service effort is provided by queue $e$ at the rate $\mu_e(n_e)$.
	\item The packet in position $l$ in the queue is provided with a portion $\gamma_e(l,n_e)$  of the total service effort, for $l=1,2,\cdots,n_e$; when this packet completes service and leaves the queue, packets in positions $l+1,l+2,\cdots,n_e$ move down to positions $l,l+1,\cdots,n_e-1$, respectively.
	\item An arriving packet at queue $e$ moves into position $l$, for $l=1,2,\cdots,n_e$, with probability $\gamma_e(l,n_e+1)$; packets that where in positions $l,l+1,\cdots,n_e+1$, move up to positions $l+1,l+2,\cdots,n_e+1$, respectively.
\end{enumerate} 
Similarly, we require $\mu_e(n_e)>0$ for $n_e>0$; in addition,
\begin{equation}
\sum_{l=1}^{n_e} \gamma_e(l,n_e) = 1,
\end{equation}

As shown in \cite{kelly}, and \cite{nelson}, symmetric queues have product form steady-state distributions. In particular, it turns out the probability of there are $n_e$ packets in queue $e$ is similar to that given by \eqref{eq:steady_gen}. 
\section{Proof of Lemma \ref{lemma:symmetric}}
\label{app:proof-symmetric}
Let $\rho_{(u,v)}(\mathbf{x})$ be the traffic load on queue $(u,v)\in E$, as defined by \eqref{eq:rho_X}. 
It can be shown that the average number of packets in queue $(u,v)\in E$ is of form \cite{datanetworks}
\begin{equation}
\mathbb{E} [n_{(u,v)}(\mathbf{x})]  = \rho_{(u,v)}(\mathbf{x})+\frac{\rho^2_{(u,v)}(\mathbf{x})}{2(1-\rho_{(u,v)}(\mathbf{x}))}.
\end{equation} 
It is easy to see that this function is strictly increasing and convex in $\rho_{(u,v)}(\mathbf{x})$ for $\rho_{(u,v)}(\mathbf{x})\in [0,1)$. Due to Theorem \ref{thm:coststruct}, $N(\mathbf{x}):= \sum_{(u,v)\in E}\mathbb{E} [n_{(u,v)}(\mathbf{x})]$ is  non-increasing and supermodular in $\mathbf{x}$, and the proof is complete.
\section{Proof of Lemma \ref{lemma:example}}
\label{app:md1}
\begin{figure}
\begin{center}
	\includegraphics[scale=0.3]{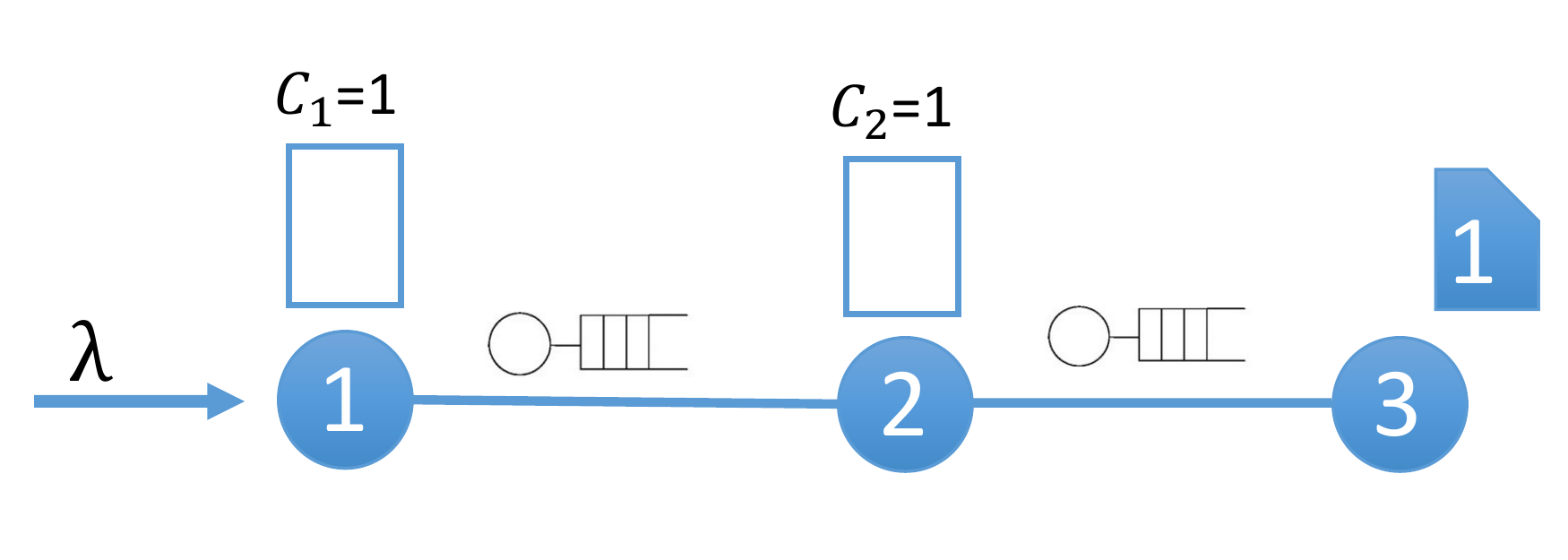}
\end{center}
	\caption{A simple network with finite-capacity queues.}
	\label{fig:finiteQueue}
\end{figure}
\begin{table}[ht]
	\caption{Results of $\rho_{u,v}(\mathbf{x})$'s for different caching configurations.} % title of Table
	\label{tab:example} % is used to refer this table in the text
	\centering % used for centering table
	\begin{tabular}{c c c} % centered columns (4 columns)
		\hline\hline %inserts double horizontal lines
		$[x_{11},x_{21}]$ & $\rho_{3,2}$ & $\rho_{2,1}$ \\ [0.5ex] % inserts table
		\hline % inserts single horizontal line
		$[0,0]$ &  $\frac{\lambda }{\mu_{3,2}}$  & $\frac{\lambda(1-p_{3,2}^L)}{\mu_{2,1}}$  \\ [0.5ex] 
		\hline
		$[1,0]$ & 0  & 0 \\[0.5ex] 
		\hline
		$[0,1]$ & 0  &  $\frac{\lambda}{\mu_{2,1}}$ \\[0.5ex] 
		\hline
		$[1,1]$ & 0  &  0 \\[0.5ex] % [1ex] adds vertical space
		\hline %inserts single line
	\end{tabular}
\end{table}
Consider the network of $M/M/1/k$ queues in Fig. \ref{fig:finiteQueue}, where node 1 is requesting content 1 from node 3, according to a Poisson process with rate $\lambda$. For simplicity, we only consider the traffic for content 1. For queues $(2,1)$ and $(3,2)$,  it is easy to verify that the probability of packet drop at queues $(u,v) \in \{(2,1),(3,2)\}$ is given by
\begin{eqnarray}
p_{(u,v)}^L(\rho_{(u,v)}) = \frac{\rho_{u,v}(\mathbf{x})^k(1-\rho_{(u,v)}(\mathbf{x}))}{1-\rho_{(u,v)}(\mathbf{x})^{k+1}},
\end{eqnarray}
where $\rho_{(u,v)}(\mathbf{x})$ is the traffic load on queue $(u,v)$, and it can be computed for$(2,1)$ and $(3,2)$ as follows:
\begin{equation}
\rho_{(2,1)}(x_{11},x_{21})= \frac{\lambda (1-x_{11})(1-p_{(3,2)}^L)}{\mu_{(2,1)}},
\end{equation}
\begin{equation}
\rho_{(3,2)}(x_{11},x_{21})= \frac{\lambda (1-x_{11})(1-x_{21})}{\mu_{(3,2)}}.
\end{equation}	

Using the results reported in Table \ref{tab:example}, it is easy to verify that $\rho$'s are not monotone in $\mathbf{x}$. Hence, no strictly monotone function of $\rho$'s are monotone in $\mathbf{x}$. In addition, it can be verified that $\rho$'s are neither submodular, nor supermodular in $\mathbf{x}$. To show this, let sets $A = \emptyset$, and $B = \{(1,1)\}$, correspond to caching configurations $[0,0]$ and $[1,0]$, respectively.  Note that $A \subset B$,  and $(2,1) \notin B$. Since
$
\rho_{(3,2)}(A\cup \{(2,1)\} ) - \rho_{(3,2)}(A) = - \frac{\lambda }{\mu_{(3,2)}} \ngeqslant 0 = \rho_{(3,2)}(B\cup \{(2,1)\} ) - \rho_{(3,2)}(B),
$ 
then $\rho_{(3,2)} $ is not submodular. Consequently, no strictly monotone function of $\rho_{(3,2)} $ is submodular. Similarly, as 
$
\rho_{(2,1)}(A\cup \{(2,1)\} ) - \rho_{(2,1)}(A) = \frac{\lambda p_{(3,2)}^L}{\mu_{(2,1)}} \nleqslant 0 = \rho_{(2,1)}(B\cup \{(2,1)\} ) - \rho_{(2,1)}(B),
$
$\rho_{(2,1)} $ is not supermodular. Thus, no strictly monotone function of $\rho_{(2,1)} $ is supermodular.

}{}

\end{document}